\definecolor{darkgreen}{rgb}{0,0.5,0}
\theoremstyle{plain}
\newtheorem{lemma}{Lemma}[section]
\newtheorem*{lemma*}{Lemma}
\newtheorem{theorem}[lemma]{Theorem}
\newtheorem*{corollary*}{Corollary}
\theoremstyle{definition}
\newtheorem{definition}[lemma]{Definition}
\algnewcommand\algorithmicswitch{\textbf{switch}}
\algnewcommand\algorithmiccase{\textbf{case}}
\newcommand{\FullOrShort}{short}
  \newcommand{\fullOnly}[1]{#1}
  \newcommand{\shortOnly}[1]{}
  \newcommand{\algorithmsize}{\small}
    \newcommand{\shortOnly}[1]{#1}
		\newcommand{\fullOnly}[1]{}
    \newcommand{\algorithmsize}{\footnotesize}
\newcommand\supercluster{SC}
\newcommand\superclustering{\mathcal{SC}}
\newcommand\Zclustering{\mathcal{C}^0}
\newcommand{\local}{${\mathsf{LOCAL}}$}
\newcommand{\congest}{${\mathsf{CONGEST}}$}
\newif\ifrandom
\newcommand{\dist}{\mbox{\tt dist}}
\newcommand{\appdeg}{{\widehat{\deg}}}
\newcommand{\diam}{{\rm diam}}
\renewcommand{\paragraph}[1]{\vspace{0.15cm}\noindent {\bf #1}}
\def\dnsparagraph#1{\par\vspace{2pt}\noindent{\bf #1}.}
\def\inline#1:{\par\vskip 7pt\noindent{\bf #1:}\hskip 10pt}
\def\blackslug{\hbox{\hskip 1pt \vrule width 4pt height 8pt
    depth 1.5pt \hskip 1pt}}
\def\QED{\quad\blackslug\lower 8.5pt\null\par}
\newcommand{\ConstructZeroSuper}{\mathsf{ConsZeroSuperclustering}}
\newcommand{\BipartiteSpanner}{\mathsf{SparserBipartiteSpanner}}
\newcommand{\NaiveSpanner}{\mathsf{NaiveSpanner}}
\newcommand{\ImprovedSpanner}{\mathsf{ImprovedSpanner}}
\newcommand{\ImprovedThreeSpanner}{\mathsf{Improved3Spanner}}
\newcommand{\BipartiteThreeSpanner}{\mathsf{Bipartite3Spanner}}
\begin{document}

\title{Improved Deterministic Distributed Construction of Spanners}

\author{
 Ofer Grossman \\
  \small MIT \\
  \small ofer.grossman@gmail.com
	\and
	Merav Parter\\
	\small MIT \\
	\small parter@mit.edu
}

\maketitle

\begin{abstract}
Graph spanners are fundamental graph structures with a wide range of applications in distributed networks. We consider a standard synchronous message passing model where in each round $O(\log n)$ bits can be transmitted over every edge (the \congest\ model). 

The state of the art of deterministic distributed spanner constructions suffers from large messages. 
The only exception is the work of Derbel et al. \cite{derbel2010sublinear}, which computes an optimal-sized $(2k-1)$-spanner but uses $O(n^{1-1/k})$ rounds. 

In this paper, we significantly improve this bound. We present a deterministic distributed algorithm that given an unweighted $n$-vertex graph $G = (V, E)$ and a parameter $k > 2$, constructs a $(2k-1)$-spanner with $O(k \cdot n^{1+1/k})$ edges within $O(2^{k} \cdot n^{1/2 - 1/k})$ rounds for every even $k$. For odd $k$, the number of rounds is $O(2^{k} \cdot n^{1/2 - 1/(2k)})$. For the weighted case, we provide the first deterministic construction of a $3$-spanner with $O(n^{3/2})$ edges that uses $O(\log n)$-size messages and $\widetilde{O}(1)$ rounds. If the nodes have IDs in $[1, \Theta(n)]$, then the algorithm works in only $2$ rounds!
%
%
\end{abstract}

\tableofcontents

\section{Introduction \& Related Work}
Graph spanners are fundamental graph structures that are used as a key building block in various communication applications, e.g., routing, synchronizers, broadcasting, distance oracles, and shortest path computations. For this reason, the distributed construction of sparse spanners has been studied extensively \cite{derbel2006fast,baswana2007simple,derbel2007deterministic,DerbelGPV08,derbel2009local,pettie2010distributed,derbel2010sublinear}. The standard setting is a synchronous message passing model where per round each node can send one message to each of its neighbors. Of special interest is the case where the message size is limited to $O(\log n)$ bits, a.k.a. the \congest\ model.

The common objective in distributed computation of spanners is to achieve the best-known existential size-stretch trade-off as fast as possible: It is folklore that for every graph $G=(V, E)$, there exists a $(2k-1)$-spanner $H \subseteq G$ with $O(n^{1+1/k})$ edges. Moreover, this size-stretch tradeoff is believed to be optimal, following the girth conjecture of Erd\H{o}s.

Designing deterministic algorithms for local problems has been receiving a lot of attention since the foundation of the area in 1980’s. Towards the end of this section, we elaborate more on the motivation for studying deterministic algorithms in the distributed setting. 

\paragraph{State of the art for deterministic distributed constructions of spanners:}
Whereas there are efficient randomized constructions for spanners, 
as the reader will soon notice, the state of the art for distributed deterministic spanner constructions suffers from large message sizes: Derbel and Gavoille \cite{derbel2006fast} construct constant stretch spanners with $o(n^2)$ edges and $O(n^{\epsilon})$ rounds for any constant $\epsilon$, using messages of size $O(n)$. Derbel, Gavoille and Peleg improved this result and presented in \cite{derbel2007deterministic}
a construction of an $O(k)$-spanner with $O(k n^{1+1/k})$ edges in $O(\log^{k-1} n)$ rounds. 
This was further improved in the seminal work of Derbel, Gavoille, Peleg, and Viennot \cite{DerbelGPV08}, which provides a deterministic $k$-round algorithm for constructing $(2k-1)$-spanners with optimal size. However, again the algorithm uses messages of size $O(n)$. 
Using large messages is indeed inherent to all known efficient deterministic techniques, which are mostly based on network decomposition and graph partitioning. In the conventional approaches of network decomposition, the deterministic algorithms for spanners usually require a vertex to learn the graph topology induced by its $O(1)$-neighborhood. This cannot be done efficiently with small messages.  

As Pettie~\cite{pettie2010distributed} explicitly noted, \emph{all} these constructions have the disadvantage of using large messages. Derbel et al.~\cite{derbel2009local} also pointed out that constructing sparse spanners deterministically with small message sizes remains open.

\paragraph{The state of the art when using small messages:} 
There are only two exceptions for this story. 
Barenboim et al. \cite{barenboim2016fast} showed a construction of $O(\log^{k-1}n)$ spanner with $O(n^{1+1/k})$ edges in $O(\log^{k-1}n)$ rounds.  Hence, whereas the runtime is polylogarithmic, the stretch-size tradeoff of the output spanner is quite far from the optimal one. 

We are then left with only \emph{one} previous work that fits our setting, due to Derbel, Mosbah and Zemmari
\cite{derbel2010sublinear}. They provide a  deterministic
construction of an optimal-size $(2k-1)$-spanner but using $O(n^{1-1/k})$ rounds.

\paragraph{The state of the art in other distributed settings:} 
Turning to \emph{randomized} constructions, perhaps one of the most well known approaches to construct a spanner is given by Baswana and Sen \cite{baswana2007simple}, which we review soon. 
Recently, \cite{Censor-HillelPS16} showed that the Baswana-Sen algorithm can be derandomized in the congested clique model of communication in which every pair of nodes (even non-neighbors in the input graph) can exchange $O(\log n)$ bits per round. Note that this model is much stronger than the standard model in which only \emph{neighboring} vertices can communicate. Indeed the algorithm of \cite{Censor-HillelPS16} requires a global evaluation of the random seed, thus implementing this algorithm in the standard \congest\ model requires $\Omega(\diam(G)+n^{1-1/k})$ rounds where $\diam(G)$ is the diameter of the graph. Hence, deterministic construction of spanners in the \congest\ model calls for new ideas!

Before we proceed with introducing our main contribution, we make a short pause to further motivate the study of deterministic distributed algorithms.

\paragraph{A note on deterministic distributed algorithms:}
Much effort has been invested in designing deterministic distributed algorithms for local problems. Examples include MIS (maximal independent set), vertex coloring, edge coloring and and matching. Until recently, a deterministic poly-log $n$ round algorithm was known only for the maximal matching problem (see \cite{hanckowiak2001distributed} and a recent improvement by \cite{fischer2017deterministic}). In a recent breakthrough \cite{fischer2017edgecolor}, a polylogarithmic solution was provided also for the $(2\Delta-1)$ edge coloring\footnote{Where $\Delta$ is the maximum degree in the graph.}.  Aside from the general theoretical question, the distributed setting adds additional motivation for studying deterministic algorithms (as nicely noted in \cite{fischer2017deterministic}). 
First, in the centralized setting, if the randomized algorithm ends with an error, we can just repeat. In the distributed setting, detecting a global failure requires communicating to a leader, which blows up the runtime by a factor of network diameter. Second, for problems as MIS, \cite{ChangKP16} showed that improving the randomized complexity requires an improvement in the deterministic complexity.

Wheres most results for deterministic local problems are for the \local\ model, which
allows unbounded messages; the size of messages that are sent throughout the computation is a second major attribute of distributed algorithms. It is therefore crucial to study the complexity of local
problems under bandwidth restrictions. Surprisingly, most of the algorithms for local
problems already use only small messages. The problem of spanners is distinguished from these problems, and in fact, spanners is the only setting we are aware of, in which all existing deterministic algorithms use \emph{large} messages. Hence, the main challenge here is in the \emph{combination} of \underline{deterministic} algorithms with \underline{congestion} constraints. 
\vspace{-10pt}
\subsection{Our Contribution}

Our main result is:
\begin{mdframed}[hidealllines=false,backgroundcolor=gray!30]
\begin{theorem}
\label{thm:spannerall}
For every $n$-vertex unweighted graph $G=(V,E)$ and even $k$, there exists a deterministic distributed algorithm that constructs a $(2k-1)$-spanner with $O(k \cdot n^{1+1/k})$ edges in $O(2^{k} \cdot n^{1/2-1/k})$ rounds using $O(\log n)$-size messages\footnote{For odd $k$, we obtain a similar theorem, but with $O(2^{k} \cdot n^{1/2-1/(2k)})$ rounds.}.
\end{theorem}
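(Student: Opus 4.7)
The plan is to adapt the Baswana--Sen clustering paradigm to a deterministic setting, substituting random sampling of cluster centers with ruling-set-based sparsification. I would build the spanner in $k$ phases, maintaining a sequence of superclusterings $\superclustering_0, \superclustering_1, \ldots, \superclustering_{k-1}$ whose cluster radii roughly double across phases while the number of cluster centers shrinks by a factor of $n^{1/k}$ at each step. Phase $0$ is handled by the dedicated $\ConstructZeroSuper$ subroutine, which deterministically selects $O(n^{1-1/k})$ initial centers via a $\RulingSet$-style computation in a sparsified neighborhood graph. Each subsequent phase merges clusters around surviving centers and adds spanner edges from the vertices that ``drop out'' of the hierarchy --- each such vertex contributes at most one edge per neighboring cluster in $\superclustering_i$.

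To keep the size at $O(k \cdot n^{1+1/k})$, I would maintain the invariant that the level-$i$ clustering has $O(n^{1-i/k})$ clusters, and that a dropped-out vertex connects to exactly one representative per adjacent cluster. The stretch is then controlled via a standard telescoping argument: an edge $\{u,v\} \notin H$ must have both endpoints either merged into a common cluster of $\superclustering_{k-1}$ (giving a detour of length at most $2k-1$ through the cluster center) or dropped out at matching phases, in which case the inductive stretch of the cluster's internal tree closes the gap.

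The crucial speedup from the $O(n^{1-1/k})$ bound of \cite{derbel2010sublinear} down to $O(2^k \cdot n^{1/2-1/k})$ comes from splitting the work across the hierarchy: the first $\approx k/2$ phases are executed directly, after which only $O(n^{1/2})$ cluster centers survive. The remaining phases are then handled by invoking $\RecursiveSpanner$, which recursively applies the spanner construction on the much smaller cluster-quotient graph, while $\BipartiteSpanner$ efficiently installs the cluster-to-vertex connecting edges by exploiting the bipartite structure between the surviving $O(n^{1/2})$ centers and the $n$ vertices. Each recursive invocation at most doubles the round cost, accounting for the $2^k$ factor, while the bipartite base case runs in $O(n^{1/2-1/k})$ rounds.

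The main obstacle I anticipate is making the bipartite/recursive step work deterministically with $O(\log n)$-bit messages. A vertex may have up to $n^{1/k}$ neighboring clusters, and scheduling the cluster-identifier exchanges without creating congestion on any edge is delicate; the naive approach would spend $\Omega(2^i)$ rounds pipelining IDs along a cluster of radius $2^i$, destroying the saving. I would address this by combining BFS-pipelining of cluster IDs from each surviving center with the sparsity guarantee of the ruling-set structure, so that at every round each edge carries at most $O(1)$ competing messages. Verifying that this pipelining matches the target $O(2^k \cdot n^{1/2-1/k})$ bound, and that the stretch accounting survives the recursive composition (where spanner stretches compose multiplicatively between the outer and inner calls), is the most technical part of the argument.
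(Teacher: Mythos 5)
There is a genuine gap, and it sits exactly where the paper's two key ideas live. First, your proposal never explains how the early phases can run in $O(n^{1/2-1/k})$ rounds: you say the first $\approx k/2$ phases are ``executed directly'' with centers chosen by a ruling-set-style sparsification, but iterating over the clusters of $\mathcal{C}_{i-1}$ directly costs $O(n^{1-(i-1)/k})$ rounds (already $\Theta(n)$ at phase $1$), and a ruling set over the cluster graph neither bounds the number of edges charged to the clusters that are left out nor keeps the working-cluster radius at $i$. The paper's mechanism is different: the $O(n^{1-i/k})$ clusters are grouped into $O(\sqrt n)$ superclusters equipped with \emph{edge-disjoint} trees of depth $O(2^k)$ used only for communication; the algorithm iterates over superclusters with a local-maxima/high-expansion test and promotes \emph{all} centers of a successful supercluster at once, and since the unmarked neighborhoods of successful superclusters are disjoint and of size at least $n^{1/2+1/k}$, only $O(n^{1/2-1/k})$ iterations occur per phase. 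Leftover low-expansion superclusters are handled by $\BipartiteSpanner$ on an $O(\sqrt n)\times O(n^{1/2+1/k})$ bipartite graph (not between the surviving centers and all $n$ vertices) together with a recursive call of $\ImprovedSpanner$ on the induced subgraph $G(V(\supercluster))$ of each vertex-disjoint, $O(\sqrt n)$-vertex supercluster; after phase $k/2$ the algorithm simply switches to $\NaiveSpanner$. Ruling sets with exponentially growing radii appear only in the construction of the level-$0$ superclustering, and those exponential-diameter pre-clusters are never used in the stretch argument. Without the low-unmarked-degree guarantee that the supercluster iteration provides, your size accounting (``one edge per neighboring cluster'' for each dropped vertex) can charge $\Theta(n^{1-(i-1)/k})$ edges to a single vertex, so $O(k\,n^{1+1/k})$ does not follow.

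Second, your stretch argument would not give $2k-1$. You let the cluster radii ``roughly double across phases,'' route an omitted edge through the cluster's internal tree, and allow stretches to ``compose multiplicatively between the outer and inner calls''; any of these yields stretch exponential in $k$. In the paper the clusters $\mathcal{C}_i$ that enter the stretch argument are spanned by depth-$i$ BFS trees exactly as in Baswana--Sen; the $O(2^k)$-depth supercluster trees affect only the round complexity (this, and not a per-recursion doubling, is where the $2^k$ factor comes from, via the $(2^{4i},2^{4(i-1)})$-ruling sets used to build $\superclustering_0$ on general-diameter graphs); and there is no multiplicative composition at all, because the recursion is applied to the induced subgraph of a supercluster, so an edge internal to it receives stretch $2k-1$ directly from the inner spanner while edges leaving it are covered by the bipartite spanner or by the final $\NaiveSpanner$ phases. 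To repair your write-up you would need to separate, as the paper does, the low-depth structures responsible for stretch from the (possibly deep, edge-disjoint) structures responsible for communication, and replace per-phase ruling-set selection by the expansion test over $O(\sqrt n)$ superclusters.
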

\end{mdframed}

A key element in our algorithm is the construction of sparser spanners for unbalanced bipartite graph. 
This construction might become useful in other spanner constructions. 
\begin{mdframed}[hidealllines=false,backgroundcolor=gray!30]
\begin{lemma}[Bipartite Spanners]\label{bipartite}
Let $G = (A \cup B, E)$ be an unweighted bipartite graph, with $|A|\leq |B|$. For even $k \ge 4$, one can construct (in the \congest\ model) a $(2k-1)$ spanner $H$ with $|E(H)| = O(k |A|^{1+2/k}+ |B|)$ edges within $O(|A|^{1 - 2/k})$ rounds\footnote{Hence, yielding an improved edge bound, for $|A|\leq n^{(k+1)/(k+2)}$.}.
\end{lemma}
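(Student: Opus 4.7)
My plan is to reduce the task to constructing a spanner of an auxiliary graph $G'$ defined on the small side $A$ only. Let $G'=(A,E')$ contain an edge $\{a,a'\}$ whenever $a$ and $a'$ share a common neighbour in $B$; observe that $\dist_G(a,a')=2\cdot\dist_{G'}(a,a')$ for every $a,a'\in A$. Setting $k'=k/2$, I would build a $(2k'-1)$-spanner $H'$ of $G'$ using (a tailored version of) the Derbel--Mosbah--Zemmari algorithm~\cite{derbel2010sublinear}; by the folklore existential bound this has $O(k'\cdot|A|^{1+1/k'})=O(k\cdot|A|^{1+2/k})$ edges. For each edge of $H'$ I then add to $H$ the two $G$-edges of a witnessing $2$-path through some common $B$-neighbour, contributing $O(k\cdot|A|^{1+2/k})$ edges to the output.

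To cover vertices of $B$, each $b\in B$ with at least one $A$-neighbour selects (say, by smallest ID) a representative $a_b\in A$ and adds the edge $(b,a_b)$ to $H$, accounting for the additive $|B|$ term. The stretch analysis then splits by endpoint pair: for $a_1,a_2\in A$ the $H'$-guarantee directly gives stretch at most $2k'-1\leq 2k-1$; for $b_1,b_2\in B$ at distance $d\geq 2$ in $G$, the route $b_1\to a_{b_1}\to\cdots\to a_{b_2}\to b_2$ together with $\dist_G(a_{b_1},a_{b_2})\leq d+2$ and the $(2k'-1)$-stretch on $G'$ yields $\dist_H(b_1,b_2)\leq 2+(2k'-1)(d+2)=(2k'-1)d+4k'\leq(2k-1)d$, using $d\ge 2$ in the last step; the mixed $A$-$B$ case is analogous with an extra additive $+1$.

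For the distributed implementation, the target $O(|A|^{1-2/k})$ round bound exactly matches the Derbel--Mosbah--Zemmari round complexity on an $|A|$-vertex graph at stretch $2k'-1$. The plan is to execute their clustering procedure virtually on $G'$, simulating every $G'$-round by $O(1)$ rounds of $G$: each $B$-vertex receives cluster-related messages from its $A$-neighbours in one round and rebroadcasts the aggregated information in the next. An $A$-vertex never needs an explicit list of $G'$-neighbours --- only the cluster identifiers visible through its $B$-neighbours --- and this fits in $O(\log n)$ bits per link provided each $B$-vertex forwards only a constant number of IDs per round.

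The main obstacle I anticipate is precisely this CONGEST simulation: naively, a $B$-vertex adjacent to $d$ distinct clusters could be forced to relay $d$ different cluster IDs per $G'$-round, inflating the total runtime by up to a factor of $|A|$. Overcoming this will require carefully tailoring the DMZ cluster-growing procedure so that in each phase every $B$-vertex is responsible for propagating only a single aggregated item (for instance, the cluster ID of its currently chosen $A$-representative together with a frontier BFS tag), at the cost of a constant-factor blow-up in the spanner size that is absorbed by the $O(k)$ factor in $O(k\cdot|A|^{1+2/k}+|B|)$.
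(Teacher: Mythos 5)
Your high-level reduction is essentially the paper's: form a graph on the $|A|$-side (the paper uses stars, you use the common-neighbor graph $G'$), run an $O(m^{1-1/k'})$-round deterministic clustering/spanner algorithm with the halved parameter $k'=k/2$, and lift back, paying roughly a factor $2$ in stretch; your stretch and size accounting for this lift is fine (and you even handle non-adjacent pairs, which is not needed — covering the edges of $G$ suffices).

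However, there is a genuine gap, and it is exactly the part you flag and then defer: the \congest\ implementation of the clustering on the virtual graph within $O(|A|^{1-2/k})$ rounds. Your claim that each $G'$-round can be simulated by $O(1)$ rounds of $G$ is not justified and fails for DMZ/Baswana--Sen-style procedures, because the selection rule needs each cluster to know its number of \emph{distinct} unmarked virtual neighbors (and whether it is a local maximum among clusters sharing such neighbors); a single $B$-vertex adjacent to many clusters would have to relay many distinct cluster IDs or degree contributions in one virtual round, and counting distinct neighbors across two-hop structure is not a constant-overhead aggregation. Saying this ``will require carefully tailoring'' the procedure leaves unproved precisely what the lemma asserts. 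The paper's proof is devoted to resolving this: it first assigns every $B$-vertex to a unique star so that ACK-based counting sends at most one message per edge, settles for a $2$-approximation of the (unmarked) star-degree in the first phase (exact counting would cost $\Theta(|A|)$ rounds), and in later phases computes exact degrees via precomputed representative tuples $(b_{j,\ell},C_\ell)$ and then \emph{maintains} them incrementally as stars get marked, plus a separate $O(i)$-round local-maxima verification. Without an analogue of this machinery (or a concrete modification of DMZ with a proof that one aggregated item per $B$-vertex per round suffices), your round bound of $O(|A|^{1-2/k})$ is unsubstantiated, so the proposal as written does not prove the lemma.
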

\end{mdframed}

Turning to weighted graphs, much less in known about the deterministic construction of spanners in the distributed setting. The existing deterministic constructions of optimal-sized $(2k-1)$-spanners (even in the \local\ model) are restricted to \emph{unweighted} graphs, already for $k=2$. If the edge weights are bounded by some number $W$, there is a simple reduction\footnote{Apply the algorithm for unweighted graphs separately for every weight scale $((1+\epsilon)^i, (1+\epsilon)^{i+1}]$.} to the unweighted setting, at the cost of increasing the stretch by a factor of $(1+\epsilon)$ and the size of the spanner by a factor of $\log_{1+\epsilon}W$.
Hence, already in the \local\ model and $k=2$, we only have a $(3+\epsilon)$ spanner with $\widetilde{O}(n^{3/2})$ edges. Whereas our general approach does not support the weighted case directly, our algorithm for $3$-spanners does extend for weighted graphs. Hence, we give here the first deterministic construction with nearly tight tradeoff between the size, stretch and \emph{runtime}. 
\begin{mdframed}[hidealllines=false,backgroundcolor=gray!30]
\begin{theorem}[$3$-Spanners for Weighted Graphs]
\label{thm:3spanner}
For every $n$-vertex weighted graph $G=(V,E)$, there exists a deterministic distributed algorithm that constructs a $3$-spanner with $O(n^{3/2})$ edges in $O(\log  n)$ rounds using $O(\log n)$-size messages. If vertices have IDs in the range of $[1,O(n)]$, it can be done in two rounds.
\end{theorem}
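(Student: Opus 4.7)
The plan is to derandomize the classical Baswana--Sen $3$-spanner construction by leveraging the restricted ID range. For the $2$-round algorithm (IDs in $[1, O(n)]$), fix the center set $R = \{v : \mathrm{id}(v) \le \sqrt{n}\}$ deterministically, so $|R| = \Theta(\sqrt{n})$ and each vertex knows its membership locally. Round~1: every vertex broadcasts its ID to all neighbors, so each $v$ learns the IDs and edge weights of all its neighbors. Each $v$ then locally defines a cluster center $c(v)$: $v$ itself if $v \in R$, the lightest-edge neighbor in $R$ (with ID tie-breaking) if $v \notin R$ has at least one neighbor in $R$, and $\bot$ (unsettled) otherwise. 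Round~2: every vertex sends $c(v)$ to its neighbors. Finally, each $v$ locally inserts into the spanner $H$: (a)~the edge $(v, c(v))$; (b)~for each distinct cluster $c \neq c(v)$ appearing among $\{c(u) : u \in N(v)\}$, the lightest edge from $v$ into $c$; (c)~every incident edge to an unsettled neighbor.

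The $3$-spanner property is argued as in Baswana--Sen. For any edge $(u,v) \notin H$, neither endpoint is unsettled (else rule~(c) would catch it). If $c(u) = c(v)$, the detour $u \to c(u) \to v$ in $H$ has total weight at most $2 w(u,v)$ because each leg is at most $w(u,v)$ by the minimality in the definition of $c(\cdot)$. If $c(u) \neq c(v)$, rule~(b) provides an edge from $u$ to some $u'$ with $c(u') = c(v)$ of weight $\le w(u,v)$, yielding the $3$-hop path $u \to u' \to c(v) \to v$ of total weight $\le 3 w(u,v)$.

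The size analysis is immediate for rules~(a)--(b): each $v$ adds at most $|R|+1 = O(\sqrt{n})$ such edges, summing to $O(n^{3/2})$ overall. The main obstacle is rule~(c): an adversarial graph could leave many unsettled vertices of high degree, blowing up $H$. To handle this, I would preprocess by unconditionally adding all edges incident to every vertex of degree at most $\sqrt{n}$ (a free local step contributing $O(n^{3/2})$ edges), so that any remaining unsettled vertex must be high-degree; then a volume/degree-averaging argument against the small-ID set $R$ (or, in the general $O(\log n)$-round setting without small IDs, a separately computed $O(\sqrt{n})$-size dominating set for the high-degree subgraph obtained via $O(\log n)$ rounds of iterative covering) controls the contribution of rule-(c) edges at $O(n^{3/2})$.

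For the general $O(\log n)$-round algorithm without the small-ID assumption, the algorithm first computes a relabeling of vertex IDs into $[1, O(n)]$ using a standard deterministic distributed procedure (e.g., iterated color reduction), taking $O(\log n)$ rounds and $O(\log n)$-bit messages, and then invokes the $2$-round algorithm above. I expect the delicate part of the proof to be the structural covering argument that ensures rule-(c) edges stay within $O(n^{3/2})$ despite the deterministic, ID-based choice of $R$; everything else (correctness, the trivial size bounds on (a)--(b), and message-size accounting) is routine once the covering is in place.
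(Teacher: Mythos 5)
There are two genuine gaps, and both stem from trying to run a flat Baswana--Sen-style clustering with a \emph{fixed, ID-based} center set $R$ instead of the paper's partition-into-bipartite-instances route. First, your stretch argument is incorrect for weighted graphs in the intra-cluster case. When $c(u)=c(v)=c$, minimality of $c(\cdot)$ only gives $w(u,c)\le w(u,r)$ for $r\in R$; it gives no comparison with $w(u,v)$, since $v\notin R$. Concretely, take $R=\{r\}$, edges $(u,r)$ and $(v,r)$ of weight $100$ and edge $(u,v)$ of weight $1$: your rules (a)--(b) omit $(u,v)$ (rule (b) explicitly excludes the own cluster), and the only $u$--$v$ path in $H$ has weight $200$. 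The weighted $3$-spanner argument only works when the omitted edge's far endpoint lies on the \emph{center side}, which is exactly why the paper confines the star/center argument to bipartite instances $(V_i, V\setminus V_i)$ with centers in $V_i$, and handles same-side pairs by adding \emph{all} edges inside each part $V_i$ (affordable because $|V_i|=O(\sqrt{n})$ and there are $O(\sqrt{n})$ parts). Second, your treatment of ``unsettled'' vertices does not work: a high-degree vertex need not have any neighbor in the small-ID set $R$, so no volume/degree-averaging argument can bound rule (c). For instance, let the $n-\sqrt{n}$ large-ID vertices form a dense graph with all degrees $\ge\sqrt{n}$ and keep the small-ID vertices non-adjacent to them; then every large-ID vertex is unsettled and rule (c) adds $\Theta(n^2)$ edges. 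This is precisely the hitting-set property that randomness buys in Baswana--Sen and that a fixed ID-based set cannot provide.

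Your proposed repairs are also not routine. A deterministic $O(\sqrt{n})$-size dominating set for the high-degree vertices in $O(\log n)$ \congest\ rounds is not a known off-the-shelf primitive, and a relabeling of IDs into $[1,O(n)]$ cannot be computed in $O(\log n)$ rounds by ``iterated color reduction'': color reduction yields proper colorings with a palette depending on $\Delta$, not a globally (near-)injective, balanced renaming, which would essentially require global coordination. The paper sidesteps both issues: it never needs a hitting or dominating set at all. It adds all edges of vertices of degree $<\sqrt{n}$, computes a $(4,O(\log n))$-ruling set on the high-degree vertices and uses it (with the balanced partitioning lemma) to split them into $O(\sqrt{n})$ groups of size $O(\sqrt{n})$ in $O(\log n)$ rounds (or reads the groups directly off the ID bits when IDs lie in $[1,O(n)]$, giving the $2$-round bound), then runs the $2$-round unbalanced bipartite $3$-spanner on each $(V_i, V\setminus V_i)$ in parallel and adds all intra-group edges. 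Every edge is then either internal to a group or lies in a bipartite instance whose small side contains one endpoint, so no vertex is ever ``unsettled.'' If you want to salvage your write-up, the fix is to adopt this partition structure rather than to look for a covering argument for $R$.
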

\end{mdframed}
\vspace{-10pt}
\subsection{Our Approach and Key Ideas in a Nutshell}
For the sake of discussion, let $k = O(1)$ throughout this section.

\paragraph{A brief description of the randomized construction by Baswana-Sen.}
A clustering $\mathcal{C}=\{C_1,\ldots, C_\ell\}$ is a collection of vertex disjoint sets which we call clusters. Every cluster has some a special vertex which we call the \emph{cluster center}. 
In the high level, the Baswana-Sen algorithm computes $k$ levels of clustering $\mathcal{C}_0, \ldots, \mathcal{C}_{k-1}$ where each clustering $\mathcal{C}_i$ is obtained by sampling the cluster center of each cluster in $\mathcal{C}_{i-1}$ with probability $n^{-1/k}$. Each cluster $C \in \mathcal{C}_i$ has in the spanner $H$, a BFS tree of depth $i$ rooted at the cluster center spanning\footnote{The vertices of the tree are precisely the vertices of the cluster.} all the nodes of $C$. The vertices that are not incident to the sampled clusters become unclustered. For each unclustered vertex $v$, the algorithm adds one edge to each of the clusters incident to $v$ in $\mathcal{C}_{i-1}$. This randomized construction is shown to yield a spanner with  $O(kn^{1+1/k})$ edges in expectation and it can be implemented in $O(k^2)$ rounds\footnote{With some care, we believe the algorithm can also be implemented in $O(k)$ rounds}. 
Note that the only randomized step in Baswana-Sen is in picking the cluster centers of the $i^{th}$ clustering. That is, given the $n^{1-(i-1)/k}$ cluster centers $Z_{i-1}$ of the clusters in $\mathcal{C}_{i-1}$, it is required to pick $n^{-1/k}$ fraction of it, to be centers of the clusters in $\mathcal{C}_{i-1}$.

\paragraph{The brute-force deterministic solution in $O(n)$ rounds:}
 A brute-force approach to pick the new cluster centers of $\mathcal{C}_i$ is to iterate over the clusters in $\mathcal{C}_{i-1}$ one by one, checking if they satisfy some expansion criteria. Informally, the expansion is measured by the number of vertices in the $i^{th}$-neighborhood of the cluster center (i.e., number of vertices that can be covered\footnote{We say the a vertex is covered by a cluster-center if it gets into its cluster.} by the cluster center in case it proceeds to the $i^{th}$ level). 
If the expansion is large enough, the current cluster ``expands" (i.e., covers vertices up to distance $i$), and joins the $i^{th}$-level of the clustering $\mathcal{C}_{i}$. 
Since in the first level there are $O(n)$ clusters (each vertex forms a singleton cluster), this approach gives an $O(n)$-round algorithm. With some adaptations, this approach can yield an improved $O(n^{1-1/k})$ round algorithm (as in \cite{derbel2010sublinear}).

\paragraph{Our $O(n^{1/2-1/k})$-round deterministic solution:}
Inspired by the randomized construction of Baswana-Sen and the work of Derbel, Gavoille, and Peleg \cite{derbel2007deterministic}, we present a new approach for constructing spanners, based on two novel components which we discuss next.
%

\subsubsection{Key Idea (I): Grouping Baswana-Sen Clusters into Superclusters}
%
Our approach is based on adding an additional level of clustering on top of Baswana-Sen clustering. We introduce the novel notion of a supercluster -- a subset of Baswana-Sen clusters that are \emph{close} to each other in $G$. 
In every level $i\leq k/2$, we group the $O(n^{1-i/k})$ clusters of $\mathcal{C}_{i}$ into $O(\sqrt{n})$ \emph{superclusters}, each containing $O(n^{1/2-i/k})$ clusters which are also close to each other in $G$. 
Specifically, the superclusters have the following useful structure: 
cluster-centers of the same supercluster are connected in $G$ by a \emph{constant depth tree} (i.e., the weak diameter of the superclusters is $O(1)$), and the trees of different superclusters are \emph{edge-disjoint}. 
See Fig. \ref{fig:smalllarge} for an illustration.

Unlike the brute-force $O(n)$-round algorithm mentioned above, our algorithm iterates over \emph{superclusters} rather than clusters. We define the neighborhood of the supercluster to be all vertices that belong to -- or have a neighbor in-- one of the clusters of that supercluster. The expansion of the supercluster is simply the size of this neighborhood. 
The importance of having this specific structure in each supercluster is that it allows the superclusters to compute their expansion in $O(1)$ rounds\footnote{using the collection of edge disjoint $O(1)$-depth trees that connect their cluster centers.}.  
The faith of the superclusters (i.e., whether they continue on to the next level of clustering), in our algorithm, is determined by their expansion.
 If the expansion of a supercluster is sufficiently high, \emph{all} the cluster centers of that supercluster join the next level $i$ of the clustering. Otherwise, all these clusters are discarded from the clustering. As we will show in depth, the algorithm makes sure that at most $O(n^{1/2-1/k})$ superclusters pass this ``expansion test" and the remaining superclusters with low-expansion are handled using our second key tool as explained next.
%
%
%
%
%
%
\subsubsection{Key Idea (II): Better Spanners for Unbalanced Bipartite Graphs}
In our spanner construction, each supercluster with low-expansion has additional useful properties: it has $|A|=O(\sqrt{n})$ vertices and only $|B|=O(n^{1/2+1/k})$ many ``actual" neighbors\footnote{This term is informal, the actual neighbors are the vertices that are no longer clustered by the current clustering.}. We then apply Lemma \ref{bipartite} on these superclusters by computing the $(2k-1)$-spanner for each of these bipartite graphs obtained taking the vertices of the supercluster to be on one side of the bipartition, $A$, and their ``actual" neighbors on the other side $B$.
Since there are $O(n^{1/2})$ superclusters, this adds $O(n^{1/2+1/k} \cdot n^{1/2})=O(n^{1+1/k})$ edges to the spanner.
 
Finally to provide a good stretch in the spanner for all the edges in $G$ 
between vertices of the same supercluster\footnote{Vertices whose clusters belong to the same supercluster.}, we simply recurse inside each supercluster-- this can be done efficiently since the superclusters are vertex disjoint (as they contain sets of vertex disjoint clusters), and each of supercluster has $O(\sqrt{n})$ vertices.

\paragraph{Roadmap.}
The structure of the paper is as follows. We start by considering in Section \ref{sec:3spann} the simplified case of $3$-spanners (hence $k=2$) and present a deterministic construction with $O(\log n)$ rounds. 
Section \ref{sec:kspanner} considers the general case of $k > 2$. In Sec. \ref{sec:naive}, we first describe an $O(n^{1-1/k})$-round algorithm that already contains some of the ideas of the final algorithm. Then, before presenting the algorithm, we describe the two key tools that it uses.
For didactic reasons, in Sec. \ref{sec:unbalancedbipartite}, we first describe the construction of sparser spanners for unbalanced bipartite graphs. Only later \ref{sec:supercluster}, we present the new notion of superclusters. 
Finally, in Sec. \ref{sec:alg}, we show how these tools can be used to construct $(2k-1)$-spanners for graphs of low diameter. The extension for general graphs is in Appendix \ref{append:largediameter}.
\section{Preliminaries, Notation and Model}
\paragraph{Notations and Definitions.}
We consider an undirected unweighted $n$-vertex graph $G=(V,E)$ where $V$ represents the set of processors and $E$ is the set of links between them. Let $\diam(H)$ be the diameter of the subgraph $H \subseteq G$. We denote the diameter of $G$ by $D=\diam(G)$. For $u,v \in V(G)$ and a subgraph $H$, let $\dist(u, v,H)$ denote the $u-v$
distance in the subgraph $H \subseteq G$.
When $H=G$, we omit it and write $\dist(u, v)$.
Let $\Gamma(u)=\{v ~\mid~ (u,v)\in E\}$ be the set of $u$'s neighbors in $G$ and $\Gamma^+(u)=\Gamma(u) \cup \{u\}$. 
For a subset of the vertices $V' \in V$, let $\Gamma(V', G)=\bigcup_{u \in V'}\Gamma(u)$ and $\Gamma^+(V', G)= \Gamma(V', G) \cup V'$. Let $\Gamma_i(v)=\{u ~\mid~ \dist(u,v)\leq i\}$, for a subset $V'$, $\Gamma_i(V')$ is defined accordingly. 
For a subgraph $H \subseteq G$, let $E(v,H)=\{(u,v) \in E(H)\}$ be the set of edges incident to $v$ in the subgraph $H$ and let $\deg(v,H)=|E(v,H)|$ denote the degree of node $v$ in $G$. For a set of vertices $C$, let $G(C)$ be the induced graph of $G$ on $C$. We say that the a tuple $(a,b)> (c,d)$ if $a>c$ or $a=c$ but $b>d$.

\paragraph{Spanners and Clustering.}
A subgraph $H \subseteq G$ is a $(2k-1)$-\emph{spanner} if $\dist(u,v,H)\leq (2k-1)\dist(u,v,G)$ for every $u,v \in V$.  Given a graph $G$, a subgraph $H$, and an edge $e = (u, v)$ in $G$, we define the \textit{stretch} of $e$ in $H$ to be the length of the shortest path from $u$ to $v$ in $H$. If no such path exists, we say that the stretch is infinite. We say that an edge $e=(u,v)$ is \emph{taken care of} in $H$ if $\dist(u,v,H)\leq (2k-1)$.
A \emph{cluster} is a connected set $C$ of vertices of the original graph. Often, a cluster will have one of its vertices $s \in C$ be the \textit{cluster center}. The ID of the cluster is the ID of its center. Two clusters $C_1$ and $C_2$ are \emph{neighbors} if $\Gamma(C)\cap C' \neq \emptyset$. For a subset of vertices $S \subseteq V$, the diameter of the subset is simply the diameter of the induced graph $G$ on $S$. 


\paragraph{Ruling Sets.}
An \textit{$(\alpha, \beta)$-ruling set} with respect to $G$ and $V' \subseteq V$ is a subset $U \subseteq V'$ satisfying the following:
(I) All pairs $u, v \in U$ satisfy $\dist(u, v) \ge \alpha$, (II)
For all $v \in V'$, there exists a $u \in U$ such that $\dist(u, v) \le \beta$.
%

\paragraph{The Communication Model.}
We use a standard message passing model, the
\congest\ model \cite{Peleg:2000}, where the execution proceeds in synchronous rounds and in each round, each node can send a message of size $O(\log n)$ to each of its neighbors.
In this model, local computation is done for free at each node and the primary complexity measure is the number of communication rounds. Each node holds a processor with a unique and arbitrary ID of $O(\log n)$ bits.
Throughout, we assume that the nodes know a constant approximation on the number of nodes $n$, same holds also for \cite{baswana07}.

%
%
%
%
%

\section{3-Spanners in $\widetilde{O}(1)$ Rounds}
\label{sec:3spann}
The key building block of the algorithm is the construction of a linear sized $3$-spanner for a $\sqrt{n}  \times n$ bipartite graph. A similar idea already appeared in \cite{derbel2007deterministic}, but using $O(n)$-bit messages.

\paragraph{The core construction: 3-spanners for unbalanced bipartite graphs}

\begin{lemma}\label{bipartite3}
Let $G = (A, B, E)$ be a (possibly weighted) bipartite graph where $|A| = O(\sqrt{n})$, $|B| = O(n)$, and each vertex knows whether it is in $A$ or in $B$. Then one can construct (in the \congest\ model) a $3$-spanner $H$ with $O(n)$ edges within \emph{two} rounds.
\end{lemma}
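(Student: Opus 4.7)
The plan is a ``parent plus witness'' construction, adapted from the style of Derbel--Gavoille--Peleg but arranged so each step uses only one round of $O(\log n)$-bit messages. For each $b \in B$, designate a \emph{parent} $p(b)\in A$, chosen to be the $A$-neighbor minimizing the edge weight (with IDs for tie-breaking). Put all parent edges $\{(b, p(b)) : b \in B\}$ into $H$; this contributes $|B|=O(n)$ edges. Then, for every $a\in A$ and every distinct $a'\in A$ that appears as $p(b)$ for some $b \in N_B(a)$, vertex $a$ selects a single \emph{witness} $b^\star_{a'}$---the one minimizing $w(a,b)$ among candidates $b \in N_B(a)$ with $p(b)=a'$---and we add $(a, b^\star_{a'})$ to $H$. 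Each $a$ picks at most $|A|$ witnesses, so this contributes at most $|A|^2 = O(n)$ additional edges, for a total of $O(n)$.

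The stretch analysis is the heart of the argument. Fix an edge $e=(b,a)\in E$ of weight $w$; I claim $H$ contains a $b$--$a$ path of weight at most $3w$. If $p(b)=a$, then $e\in H$ itself, so assume $a'=p(b)\ne a$, and let $b^\star=b^\star_{a'}$. The three edges $(b,a')$, $(a',b^\star)$, $(b^\star,a)$ all lie in $H$ by construction, forming a path of length three. Its weight is bounded by: $w(b, a')\le w(b, a) = w$ because $a'=p(b)$ is the minimum-weight $A$-neighbor of $b$; $w(a, b^\star)\le w(a,b)=w$ since $b$ itself is a candidate that $b^\star$ was chosen to beat; and $w(a', b^\star)\le w(a, b^\star)\le w$ because $p(b^\star)=a'$ makes $a'$ the minimum-weight $A$-neighbor of $b^\star$. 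Summing gives the $3w$ bound.

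For the two-round implementation, in round~$1$ each $b\in B$ locally computes $p(b)$ and sends $\langle p(b), w(b,p(b))\rangle$ on every incident edge---in particular, $p(b)$ receives this message and records $(b,p(b))$ in $H$. In round~$2$, each $a\in A$ groups its $B$-neighbors by the parent value received, picks the appropriate $b^\star_{a'}$ locally, and sends a single ``selected'' flag on each edge $(a,b^\star_{a'})$; receiving such a flag tells $b^\star$ to record $(b^\star,a)$ in $H$. All messages are $O(\log n)$ bits and use at most one per edge per direction.

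The step I expect to be trickiest is the stretch analysis in the \emph{weighted} case: a priori $a$ seems to need $w(a',b)$ to evaluate candidate witnesses, which would require an extra round. The resolution is to notice that defining $p(\cdot)$ as the minimum-weight neighbor absorbs this dependence for free: once $p(b^\star)=a'$, the inequality $w(a',b^\star)\le w(a,b^\star)$ is automatic, so choosing $b^\star$ by minimizing $w(a,b)$ alone (which $a$ knows locally) suffices, keeping the round count at two.
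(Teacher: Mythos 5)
Your proposal is correct and is essentially the paper's own construction: the ``parent'' map $p(\cdot)$ is exactly the star clustering $c(\cdot)$ centered at $A$ (closest neighbor in the weighted case), the ``witness'' edges are the one-edge-per-neighboring-star selections made by each $a\in A$, and the three-inequality stretch argument and the $|B|+|A|^2=O(n)$ size count coincide with the paper's proof of Lemma~\ref{alg:threespaneasy}, including the two-round implementation.
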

Algorithm $\BipartiteThreeSpanner$ first forms $|A|$ vertex-disjoint star clusters (clusters of radius 1), each centered at a vertex of $A$.  
To do that, every vertex $v_b \in B$ picks one of its neighbors $v_a \in \Gamma(v_b) \cap A$ to be its cluster center and sends the ID of its chosen neighbor $v_a$ to all of its neighbors. We write $c(v_b)=v_a$ to denote that the cluster center of $v_b$ is $v_a$. 

All edges 
$(v_b, c(v_b))$ are added to the spanner $H$. At this point, the graph contains $O(\sqrt{n})$ clusters centered at the vertices in $A$. We say that two stars $S_1$ and $S_2$ are \emph{neighbors} if the \emph{center} of $S_1$ has a neighbor in $S_2$, or vice-versa. Note that because the graph is bipartite, this is the only possible connection between clusters.
Then, for each vertex $u_a$ in $A$, and each neighboring star-cluster $u'_a$, the vertex $u_a$ adds to the spanner $H$ one edge to one of its neighbors in the cluster of $u'_a$. For a complete description of the algorithm see Alg. \ref{alg:bipartite3}.  For illustration see Fig. \ref{fig:lowerboundtwof}.
\begin{algorithm}
\caption{\small{$\BipartiteThreeSpanner(G = (A \cup B, E))$ for $|A| = O(\sqrt{n})$ and $|B| = O(n)$}}
\begin{algorithmic}[1]
\algorithmsize
\State $H\gets \emptyset$
\State Each vertex $v_b \in B$ selects an arbitrary neighboring vertex $v_a \in A$, assigns $c(v_b)=v_a$ and send $c(v_b)$ to all its neighbors. It adds the edge $(v_b, c(v_b))$ to $H$.
\State Each vertex $u_a \in A$ does the following (in parallel):	
	\For{each ID $v_a$ received}
		\indent \State Pick a single neighbor $v_b$ satisfying $c(v_b) = v_a$. Add the edge $(u_a, v_b)$ to $H$.
	\EndFor
\end{algorithmic}
\label{alg:bipartite3}
\end{algorithm}
\begin{figure}[h!]
\begin{center}
\includegraphics[scale=0.3]{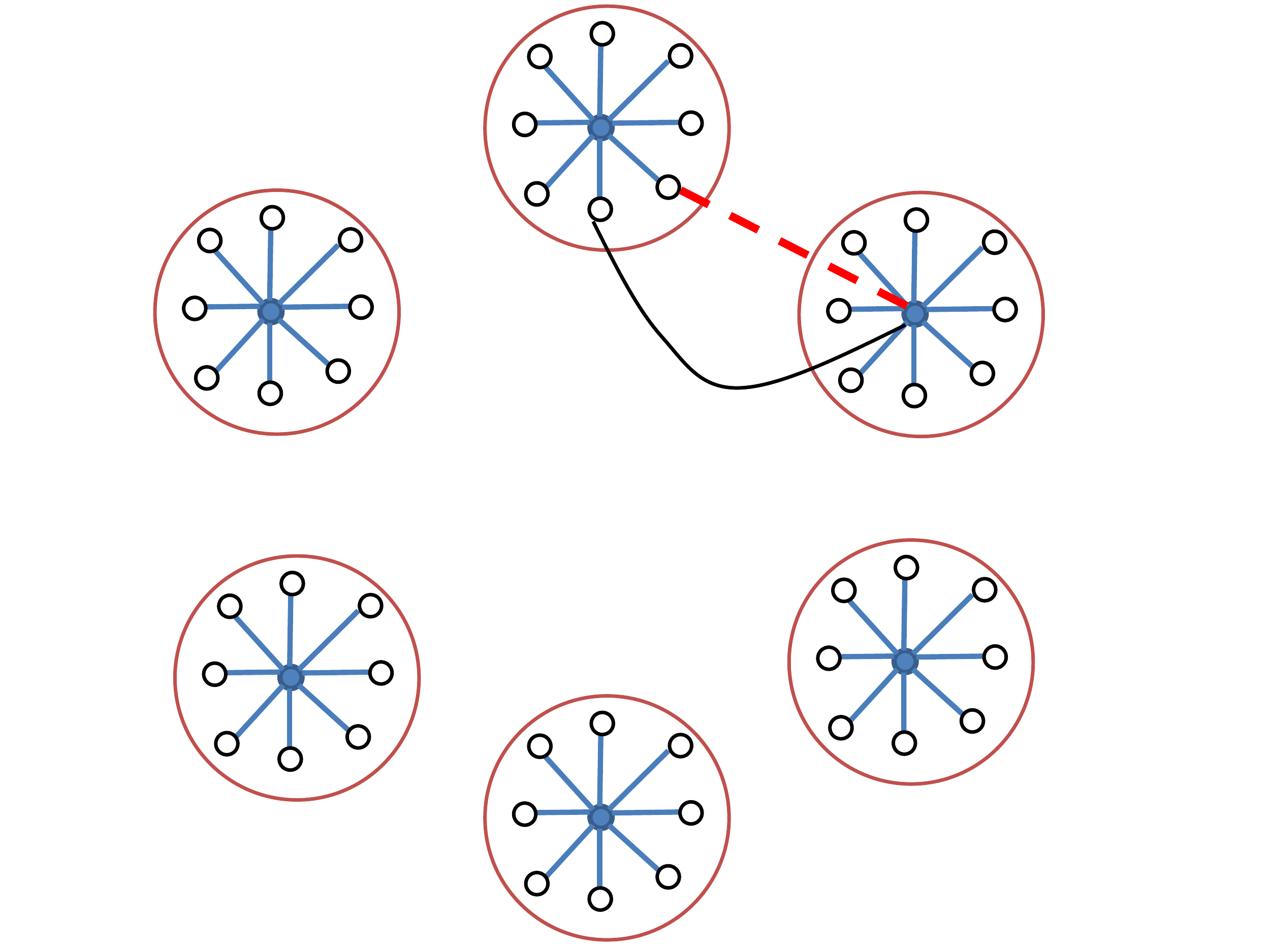}
\caption{
\label{fig:lowerboundtwof} Illustration of the star decomposition and the spanner. The dashed edge that is omitted from the spanner has a path of length three in the spanner.}
\end{center}
\end{figure}
To adapt the algorithm for the weighted case, we simply let each $v_b \in B$ pick its closest neighbor in $A$. In addition, each vertex $u_a$ connects to its closest neighbor in each star of $u'_a$. 
It is easy to see that the algorithm takes 2 rounds. 
\begin{lemma}
\label{alg:threespaneasy}
The output $H$ of  Alg. $\BipartiteThreeSpanner$ is a $3$-spanner with $O(n)$ edges.
\end{lemma}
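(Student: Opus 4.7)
The plan is to verify the two claims — size and stretch — separately, since they are essentially independent.

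\textbf{Size bound.} I would just count the edges contributed by each step. Step~2 adds exactly one edge $(v_b, c(v_b))$ per vertex $v_b \in B$, for a total of $|B| = O(n)$. In the loop of Step~5, every $u_a \in A$ adds at most one edge per distinct ID that it has received. Since IDs received at $u_a$ are cluster-center IDs, they all lie in $A$, so each $u_a$ contributes at most $|A|$ edges, giving $|A|^2 = O(n)$ edges in total. Summing yields $|E(H)| = O(n)$. This step is entirely routine.

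\textbf{Stretch bound.} Because $G$ is bipartite, it suffices to show $\dist(u_a, v_b, H) \leq 3$ for every edge $(u_a, v_b) \in E$ with $u_a \in A$ and $v_b \in B$. I would split into two cases on the cluster choice of $v_b$:
\begin{itemize}
\item If $c(v_b) = u_a$, then Step~2 put $(u_a, v_b)$ into $H$, so the stretch is $1$.
\item Otherwise $c(v_b) = v_a$ for some $v_a \in A \setminus \{u_a\}$. In Step~2, $v_b$ broadcasts $v_a$ to all of its neighbors, in particular to $u_a$. Hence in Step~5, $u_a$ receives the ID $v_a$ and picks some neighbor $v'_b \in \Gamma(u_a) \cap B$ with $c(v'_b) = v_a$, adding the edge $(u_a, v'_b)$ to $H$. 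Combined with the two edges $(v'_b, v_a)$ and $(v_b, v_a)$ that were put into $H$ by Step~2 (because $c(v'_b) = c(v_b) = v_a$), the path $u_a \to v'_b \to v_a \to v_b$ lies in $H$ and has length $3$.
\end{itemize}
The only subtle point is guaranteeing that $u_a$'s Step~5 loop actually finds such a $v'_b$, which follows because $v_b$ itself satisfies $c(v_b) = v_a$ and is a neighbor of $u_a$, so the set $\{w \in \Gamma(u_a) : c(w) = v_a\}$ is nonempty; any choice works.

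\textbf{Main obstacle.} There is no real obstacle for the unweighted statement as phrased: bipartiteness restricts edges to cross between $A$ and $B$, which is exactly what makes the length-$3$ detour available. The only thing I would be careful about is not implicitly using the assumption that $u_a$ picks $v_b$ itself in Step~5 — the algorithm allows any witness $v'_b$, and the above argument works uniformly.
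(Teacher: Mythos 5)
Your proof is correct and follows essentially the same route as the paper: the same edge count $|B|+|A|^2$ and the same length-$3$ detour $u_a \to v'_b \to c(v_b) \to v_b$ through the cluster center of $v_b$. The only difference is that the paper writes the stretch bound as a chain of edge-weight comparisons (using the ``closest neighbor'' selection rules) so that the identical argument also covers the weighted adaptation, whereas your hop-counting version establishes the unweighted statement directly.
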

\begin{proof}
First, we show that $H$ is a 3-spanner. 
To analyze the stretch, consider an edge $(x,y)\in G\setminus H$ where $x \in A$ and $y \in B$. Let $z=c(y)$. Since $(x,y) \notin H$, we have that $\dist(z,y,G)\leq \dist(x,y,G)$. Let $w \in B$ be the neighbor of $x$ in the cluster of $z$ such that the edge $(x,w)$ was added to $H$. Since $(x,y) \notin H$, such an edge $(z,w)$ exists and in addition, $\dist(x,w,G)\leq \dist(x,y,G)$. Overall, we have: 
\begin{eqnarray*}
\dist(x,y,H)&\leq& \dist(x,w,H)+\dist(w, z, H) + \dist(z,y,H)
\\&=&\dist(x,w,G)+\dist(w, z, G)+\dist(z,y,G)
\\&\leq & 
\dist(x,y,G)+\dist(w, x, G)+ \dist(x, y, G)
\\&\leq& 
\dist(x,y,G)+\dist(y, x, G)+\dist(x, y, G)=3\dist(x,y,G)~.
\end{eqnarray*}
We next bound the number of edges in $H$. In the first round (line 2 of Alg. \ref{alg:bipartite3}), each element of $B$ adds at most one edge to $H$, for a total of $|B|$ edges. In the second round, each element of $A$ added at most $|A|$ elements to $|H|$. Overall, $H$ has $|B| + |A|^2 = O(n)$ edges.
\end{proof}

\paragraph{Constructing $3$-spanners for general graphs in $O(\log n)$ rounds:}
Let $V_h=\{ v \in V ~\mid~ \deg(v,G)\geq \sqrt{n}\}$ be the set of \emph{high} degree vertices in $G$ and let $V_\ell=V \setminus V_h$ be the remaining \emph{low}-degree vertices. First, the algorithm adds to the spanner $H$, all the edges of the low-degree vertices $V_{\ell}$. Then, it proceeds by partitioning (in a way that will be described later) the \emph{high-degree} vertices $V_h$ into $t=O(\sqrt{n})$ balanced sets $V_1, \ldots, V_{t}$. This partition gives rise to $t$ bipartite $\sqrt{n} \times n$ graphs $B_i$ obtained by taking $V_i$ to be on one side of the partition and $V\setminus V_i$ on the other side. We describe the partitioning procedure in Lemma \ref{3Partition}. We can then apply Algorithm $\BipartiteThreeSpanner$ to construct $3$-spanners for all these subgraphs in parallel. 
Finally, we simply add to $H$, all the internal edges $V_i \times V_i$ for every $i$, again adding total of $t \cdot O(n)$ edges. 
\newpage
\begin{mdframed}[hidealllines=false,backgroundcolor=gray!30]
\centering  \textbf{\large  Algorithm $\ImprovedThreeSpanner$} \\
\begin{flushleft}
\vspace{-5pt}
\begin{itemize}
\item{\textbf{(S0): Handling Low-Degree Vertices:}} Add to $H$ all edges in $(V_\ell \times V) \cap E(G)$. 
\item{\textbf{(SI): Balanced Partitioning of High-Degree Vertices $V_h$:}}
Partition the high-degree vertices of $V$ into $\Theta(\sqrt{n})$ sets $V_1, \ldots, V_t$ each with $O(\sqrt{n})$ vertices.
\item{\textbf{(SII): Taking care of edges $V_i \times (V \setminus V_i)$, for every $i \in \{1,\ldots, t\}$}:}
\begin{itemize}
\item Define $B_i=(V_i,V\setminus V_i)$ for every $i \in \{1,\ldots, t\}.$
\item Apply Algorithm $\BipartiteThreeSpanner$ on each $B_i$ graph in parallel to construct a $3$-spanner $H_i \subseteq B_i$ for every $i$ and these $H_i$ subgraphs to $H$. 
\end{itemize}
\item{\textbf{(SIII): Taking care of edges $V_i \times V_i$, $i \in \{1,\ldots, t\}$}:}
Add to $H$ all edges in $V_i \times V_i$ for every $i \in \{1,\ldots, t\}$. 
\end{itemize}
\end{flushleft}
\end{mdframed}
Note that eventhough the bipartite graphs $B_i$ are not vertex disjoint, each edge belongs to at most two such graphs, and hence we can construct the $3$-spanners for all $B_i$ in parallel. It is also easy to see that the final spanner has $O(n^{3/2})$ edges.

The only missing piece at that point concerns the computation of partitioning $V_h$.

\paragraph{Balanced partitioning of $V_h$ in $O(\log n)$ rounds:}
The partition procedure starts by computing $(4,O(\log n))$-ruling set $R \subseteq V$ for the high-degree vertices $V_h$. We will use the following lemma that uses standard technique for constructing $(t, t\log n)$-ruling sets. 
\begin{lemma}\label{3Partition}
Given a graph $G = (V, E)$ and a subset $V_h \in V$ of the vertices,
one can compute in $O(\log n)$ rounds in the \congest\ model, a $(4,O(\log n))$-ruling set $U \subseteq V_{h}$ with respect to $G$ and the high-degree vertices $V_h$. 
\end{lemma}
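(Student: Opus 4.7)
The plan is to construct $U$ by a bit-by-bit elimination process over $O(\log n)$ phases, mirroring the classical deterministic $(2,O(\log n))$-ruling set idea but with the local comparisons carried out over the distance-$3$ neighborhood rather than the $1$-neighborhood. I initialize every vertex of $V_h$ as \emph{active} and view each $O(\log n)$-bit ID as a bit-string, processed from most-significant to least-significant bit.

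In phase $i$, every active vertex broadcasts the $i$-th bit of its ID and I compute, at every vertex $v$, the quantity
\[
y_v^{(i)} \;=\; \bigvee_{\substack{u \text{ active at start of phase } i \\ \dist(u,v)\le 3}} b_i(u),
\]
where $b_i(u)$ denotes the $i$-th bit of $u$'s ID. The radius-$3$ OR aggregation is realized by a three-round pipeline in which every vertex sends a single bit to each neighbor per round and takes the OR of the bits it receives; this fits in the \congest\ model since only one bit per edge is transmitted per round, and non-$V_h$ vertices simply act as relays. An active $v$ deactivates at the end of phase $i$ if $b_i(v)=0$ and $y_v^{(i)}=1$; otherwise it remains active. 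After $O(\log n)$ phases, I declare $U$ to be the set of still-active vertices.

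Separation: if $u\ne v$ both survive and $\dist(u,v)\le 3$, then in every phase both are active and lie within each other's distance-$3$ ball, so the aggregation exposes each to the other's bit; the one holding a $0$-bit whenever the bits disagree would have deactivated, contradicting survival. Hence $b_i(u)=b_i(v)$ for every $i$, forcing $\text{ID}(u)=\text{ID}(v)$, which contradicts ID-uniqueness. Pairwise distances in $U$ are therefore at least $4$. Covering: whenever $v$ deactivates in phase $i$, some \emph{witness} $u$ active at the start of that phase lies within distance $3$ of $v$; if $u$ itself deactivates in a later phase, chain to its witness, and so on. Each $v\in V_h$ reaches a surviving member of $U$ by at most $O(\log n)$ chain links, each contributing at most $3$ to the distance, giving a covering radius of $O(\log n)$.

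The main subtlety I anticipate is making the aggregation faithful in \congest\ while correctly suppressing the own-bit contribution of already-deactivated vertices (which must nonetheless continue to relay the OR); this is handled by a single ``active'' bit-flag that every $V_h$-vertex maintains locally and updates at the end of each phase. With $O(\log n)$ phases of $O(1)$ rounds each, the overall round complexity is $O(\log n)$.
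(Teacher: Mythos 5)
Your construction is correct, but it follows a different route from the paper. The paper's proof is the standard recursive ruling-set construction of Awerbuch et al.\ (as in \cite{panconesi1992improved}): split $V_h$ into two groups by the last ID bit, recursively compute ruling sets $V_0'$ and $V_1'$ in parallel, and then prune the elements of $V_1'$ lying within distance less than $4$ of $V_0'$; the covering radius grows by an additive constant per recursion level, and with recursion depth $O(\log n)$ this yields a $(4,O(\log n))$-ruling set in $O(\log n)$ rounds. You instead run a single-pass, MSB-to-LSB elimination tournament: in each phase a still-active vertex with current bit $0$ drops out if the distance-$3$ OR (computable in $3$ \congest\ rounds with one-bit messages, non-$V_h$ and deactivated vertices acting as relays) reveals an active vertex with bit $1$ nearby. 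Your separation argument via the first differing bit position is sound, and your covering argument via the witness chain is also sound: each witness has bit $1$ in the phase it is invoked, so it cannot die in that phase, deactivation times strictly increase along the chain, and hence the chain has at most $O(\log n)$ links of length at most $3$ each, giving covering radius $O(\log n)$; your handling of the ``own-bit'' contribution (deactivation only when the own bit is $0$) correctly neutralizes the fact that the OR includes the vertex itself. The trade-off is mostly aesthetic: the paper's recursion reuses a known template and bounds the covering radius by induction on recursion depth, whereas your iterative scheme avoids the merge/prune step entirely, is easier to state directly in \congest\ (no need to argue that the two recursive calls run in parallel on disjoint vertex sets), and yields an equally tight $O(\log n)$-round, $O(\log n)$-radius guarantee.
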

\def\APPENDLEMMARULING{
Our proof is similar to the standard construction of a $(4,O(\log n))$-ruling set 
\cite{panconesi1992improved}.
We begin by partitioning the vertices into two sets $V_0$ and $V_1$, based on the last digit of their IDs. Then, we recursively apply the algorithm to $V_0$ and $V_1$ to create ruling sets $V_0'$ for $V_0$ and $V_1'$ for $V_1$. Next, we remove all elements of $V_1'$ which are at distance less than $4$ from $V_0'$. Note that this increases the maximum distance from a vertex to $V'$ by at most $4$. Since the IDs are of length $O(\log n)$, the depth of the recursion is $O(\log n)$. At every step, the maximum distance between two vertices in the ruling set increases by $O(1)$, so in the final output the maximum distance between two vertices in the ruling set will be $O(\log n)$.
}
We now view each of the vertices $r \in R$ as a center of a cluster of diameter $O(\log n)$: let each high-degree vertex join the cluster of the vertex closest to it in $R$, breaking ties based on IDs. Since every vertex in $V_h$ is at distance $O(\log n)$ from $R$, all the vertices $V_h$ will be clustered within $O(\log n)$ rounds.
Each vertex $r$ in $R$ can then partition the vertices of its cluster into subsets of size $\lfloor \sqrt{n} \rfloor$, and an additional leftover subset of size at most $\sqrt{n}$ (this can be done using balanced partitioning lemma, Lemma \ref{lem:balancedpartitioning}). We now claim that this partition is balanced. Clearly, all sets are of size $O(\sqrt{n})$, so we just show that there $O(\sqrt{n})$ subsets.
Since every $r \in R$ is high-degree and since every two vertices in $R$ are at distance at least $4$, we have that $|R| = O(\sqrt{n})$. For each $r \in R$, there is at most one subset of size less than $\lfloor \sqrt{n} \rfloor$. Therefore, there are $O(\sqrt{n})$ subsets of size less than $\lfloor \sqrt{n} \rfloor$. All other subsets are of size $\lfloor \sqrt{n} \rfloor$. However, there can be at most $O(\sqrt{n})$ disjoint subsets of size $\lfloor \sqrt{n} \rfloor$, hence there are $O(\sqrt{n})$ subsets in total, as desired.

We conclude by showing:
\begin{lemma}[3-Spanner Given Partition]\label{3givenpartition}
Given a (possibly weighted) $n$-vertex graph $G = (V, E)$ with a vertex-partition $V_1, V_2, \ldots, V_{t}$ such that $|V_i| = O(\sqrt{n})$ and $t= O(\sqrt{n})$, one can construct a 3-spanner $H$ of size $O(n^{3/2})$ in 2 rounds in the \congest\ model.
\end{lemma}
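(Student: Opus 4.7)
The plan is to realize the outline of steps (SII) and (SIII) in Algorithm $\ImprovedThreeSpanner$ for the abstract partitioned setting, and to verify that the two parts can be executed in parallel within the $\congest$ bandwidth. Since the partition $V_1,\ldots,V_t$ is distributed (each vertex knows its own class, and we may assume neighbors' classes have been exchanged in one preliminary round or piggybacked on the first round of communication), I would proceed in two logical steps. First, I add to $H$, without any additional communication, every edge $(u,v)\in E$ whose endpoints lie in the same class $V_i$. Second, for each $i$ I run $\BipartiteThreeSpanner$ on the bipartite graph $B_i=(V_i, V\setminus V_i, E_i)$, where $E_i$ is the set of $G$-edges with exactly one endpoint in $V_i$, and I add the resulting $H_i$ to $H$. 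By Lemma \ref{bipartite3} (in its weighted variant), each $H_i$ is a $3$-spanner of $B_i$ with $O(n)$ edges, computable in two rounds.

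The main thing to check — and what I expect to be the only real obstacle — is that the $t=O(\sqrt{n})$ invocations of $\BipartiteThreeSpanner$ can be carried out in parallel without violating the $O(\log n)$-bandwidth on any edge. Consider a crossing edge $e=(v,u)$ with $v\in V_i$ and $u\in V_j$, $i\ne j$: the only bipartite graphs in which $e$ participates are $B_i$ and $B_j$. In the $B_i$-execution, $u$ sits on the $B$-side and transmits the ID of its chosen center $c_i(u)\in V_i\cap \Gamma(u)$ across $e$ to $v$; symmetrically in $B_j$, $v$ sends $c_j(v)\in V_j\cap\Gamma(v)$ across $e$ to $u$. Hence each direction of $e$ carries a single $O(\log n)$-bit message in the (single) communication round, and the second round of the bipartite procedure consists only of local choices by the $A$-side vertex (one representative neighbor per distinct received ID) with no further transmissions. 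Thus all $t$ bipartite spanners complete within the same two rounds.

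Correctness and size are then routine. For an edge $(v,u)\in E$ with $v,u\in V_i$ we have $(v,u)\in H$ directly, giving stretch $1$; for a crossing edge $(v,u)\in E_i$, $\dist(v,u,H)\le \dist(v,u,H_i)\le 3$ by Lemma \ref{alg:threespaneasy}. The intra-class edges contribute $\sum_{i=1}^t \binom{|V_i|}{2}=O(t\cdot n)=O(n^{3/2})$ edges, and the bipartite spanners contribute $\sum_i O(n)=O(n^{3/2})$ edges, for the claimed total of $O(n^{3/2})$. Hence $H$ is a $3$-spanner of $G$ of size $O(n^{3/2})$ computed in two rounds of $\congest$, as required.
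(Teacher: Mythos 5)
Your proof is correct and follows essentially the same route as the paper's: add all intra-part edges and run $\BipartiteThreeSpanner$ on all $t$ bipartite graphs $B_i$ in parallel, using the fact that each edge lies in at most two of them. Your per-direction message accounting is in fact slightly sharper than the paper's (which only notes a factor-$2$ slowdown), so nothing is missing.
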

\begin{proof}
For each $i$, we construct a spanner $H_i$ for the bipartite graph $B_i=(V_i \cup (V \setminus V_i))$. Since each edge is in at most two of the bipartite graphs, all these $t=O(\sqrt{n})$ spanners $H_1, \ldots, H_t$ can be constructed in parallel, while slowing down by a factor of at most $2$.
Let $H =\bigcup_{i} H_i$, we have $|H| = O(n^{3/2})$.
In addition, we add to $H$ all the internal edges inside each $V_i$. This adds a total of at most $\sum_{i=0}^{\ell}|V_i|^2$ edges to $H$. Since $|V_i|^2 = O(n)$, and $\ell = O(\sqrt{n})$, the total number of edges added is $\sum_{i=0}^{\ell}|V_i|^2 = O(n^{3/2})$, and $|E(H)| = O(n^{3/2})$.

We now prove that $H$ is a 3-spanner. Consider some edge $(u, v)$. If both $u$ and $v$ are in the same $V_i$, then $(u, v) \in H$. If $u \in V_i$ and $v \in V_j$, then, consider the spanner $H_i$ computed for the graph $G_i=(V_i \cup (V \setminus V_i))$. Since $(u, v)$ is an edge in $G_i$, there must exist a path of length at most $3$ from $u$ to $v$ in $H_i$, and therefore in $H$. The lemma follows.
\end{proof}
Finally, if the vertices IDs are bounded, two rounds are sufficient to construct the spanner. 
\begin{theorem}[Small IDs]\label{IDs}
Given a graph $G = (V, E)$ where the IDs of the vertices have $\log(n) + O(1)$ bits, one can construct a 3-spanner $H$ of $G$ with $|H| = O(n^{3/2})$ edges in two rounds.
\end{theorem}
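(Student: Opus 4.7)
The entire slowdown in Algorithm $\ImprovedThreeSpanner$ comes from Step (SI): computing a balanced partition of $V_h$ into $\Theta(\sqrt n)$ groups of size $O(\sqrt n)$, which used the $O(\log n)$-round ruling-set construction of Lemma \ref{3Partition}. My plan is to observe that when every ID fits in $\log n + O(1)$ bits, so that $\mathrm{ID}(v)\in[1, cn]$ for some constant $c$, one can read off a balanced partition \emph{directly from the IDs with zero communication}, and then invoke Lemma \ref{3givenpartition} (3-Spanner Given Partition) which already builds a $3$-spanner with $O(n^{3/2})$ edges in $2$ rounds.

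\textbf{Step 1 (Partition in $0$ rounds).} Let $s=\lceil\sqrt n\rceil$. Every vertex $v$ locally computes its group index $g(v)=\lceil \mathrm{ID}(v)/s\rceil$ and sets $V_i=\{v:g(v)=i\}$. Since IDs are unique and lie in a range of size $O(n)$, this yields $t=O(\sqrt n)$ groups, each of cardinality at most $s=O(\sqrt n)$, so the hypotheses of Lemma \ref{3givenpartition} are met. No communication is needed to agree on this partition: each vertex learns the group label of each of its neighbors from the neighbor's ID (known at the outset in the \congest\ model).

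\textbf{Step 2 (Apply Lemma \ref{3givenpartition}).} Run the algorithm of Lemma \ref{3givenpartition} on the partition $V_1,\ldots,V_t$. Internal edges $V_i\times V_i$ are added to $H$ by each endpoint locally (both endpoints can decide independently from IDs). In parallel, on each bipartite graph $B_i=(V_i,V\setminus V_i)$, execute $\BipartiteThreeSpanner$. The parallel execution is congestion-free: across all $t$ copies, on a single edge $(u,w)$ with $g(u)\neq g(w)$, vertex $u$ needs to transmit to $w$ only its chosen representative $c_{g(w)}(u)\in V_{g(w)}$ during round~$1$ (one $O(\log n)$-bit message), and then a single confirmation bit for the edges selected in round~$2$. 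So two rounds suffice, exactly as in Lemma \ref{3givenpartition}.

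\textbf{Main obstacle / why this works.} The only nontrivial content is that the ruling-set machinery of Lemma \ref{3Partition} was needed before \emph{solely} to obtain a balanced partition from arbitrary $\Theta(\log n)$-bit IDs; when the ID range is already $O(n)$, the IDs themselves furnish such a partition for free. Everything else (the stretch-$3$ guarantee and the $O(n^{3/2})$ edge bound) is inherited verbatim from Lemma \ref{3givenpartition}. The low/high-degree split of $\ImprovedThreeSpanner$ is not needed here, because we can afford to partition \emph{all} of $V$ by ID without first pruning low-degree vertices.
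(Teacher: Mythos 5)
Your proposal is correct and follows essentially the same route as the paper: the paper splits each $(\log n + O(1))$-bit ID into two halves and uses one half as the group index, which is the same idea as your $\lceil \mathrm{ID}(v)/\lceil\sqrt n\rceil\rceil$ bucketing — both extract a balanced partition into $O(\sqrt n)$ groups of size $O(\sqrt n)$ for free from the bounded ID range and then invoke Lemma \ref{3givenpartition}. Your extra remarks on congestion and on skipping the low/high-degree split are consistent with what Lemma \ref{3givenpartition} already provides, so nothing further is needed.
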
 
\begin{proof}
We can use the second half of the IDs of the vertices to partition $V$ into subsets $V_1, V_2, \ldots, V_{\ell}$. There are $2^{(\log(n) + O(1))/2} = O(\sqrt{n})$ possibilities for the second half of the ID. Hence, $\ell = O(\sqrt{n})$.
Similarly, there are $O(\sqrt{n})$ possibilities for the first half of the ID, so $|V_i| = O(\sqrt{n})$ for all $i$. We now apply Lemma \ref{3givenpartition} on $G$ with the sets $V_1, V_2, \ldots, V_{\ell}$. 
\end{proof}

\vspace{-10pt}
\section{$(2k-1)$ Spanners}\label{sec:kspanner}
\vspace{-8pt}
\paragraph{The structure of Baswana-Sen clustering.}
At the heart of the algorithm is a construction of $(k-1)$-levels of clustering $\mathcal{C}_0, \ldots, \mathcal{C}_{k-1}$. The initial clustering $\mathcal{C}_0=\{\{v\}, v \in V\}$ simply contains $n$ singleton clusters. For every $i$, each cluster $C \in \mathcal{C}_i$ has a cluster center $z$ and we denote by $Z_i$ the collection 
of cluster centers. We define $V_i=\bigcup_{z \in Z_i}\Gamma_i(z)$. A vertex $v$ is $i$-\emph{clustered} if $v \in V_i$, otherwise it is $i$-unclustered.  Hence $V_i$ is the set of clustered vertices appearing in the clusters of $\mathcal{C}_i$. 
The algorithm consists of $k-1$ steps where at the end of step $i \in \{1, \ldots, k-1\}$, we have
an $i^{th}$-level clustering $\mathcal{C}_{i}=\{C_1, \ldots, C_{\ell}\}$ and a partial spanner $H_i$ that satisfies the following: 
(P1) The clustering $\mathcal{C}_i$ contains $\ell=O(n^{1-i/k})$ clusters.
(P2) For each cluster $C_j \in \mathcal{C}_i$ with a center $z_j$, the subgraph $H_i$ contains a BFS tree $T_i(C)$ of depth at most $i$ that spans all the vertices of $C$ (i.e., the vertices of $T_i(C)$ are precisely $C$)  and (P3) For every $u \in V_{i-1}\setminus V_i$, and every $v \in \Gamma(u)$,  $\dist(u,v,H_i) \leq 2k-1$. 
\begin{mdframed}[hidealllines=false,backgroundcolor=gray!30]
\centering  \textbf{\large  High-Level Description of Phase $i$ in Baswana-Sen Algorithm} \\
\begin{flushleft}
\vspace{-5pt}
\begin{itemize}
\item{(SI)}
\textbf{Selecting $O(n^{1-i/k})$ cluster centers $Z_i \subseteq Z_{i-1}$}. In the randomized algorithm, this is done by sampling each center in $Z_{i-1}$ independently with probability $n^{-1/k}$. 
We then have $V_i=\Gamma^+_i(Z_i)$ are the $i$-clustered vertices.
\item{(SII)}
\textbf{Taking care of unclustered vertices $V_{i-1}\setminus V_{i}$}. That is, taking care of the vertices that stopped being clustered at that point. 
\item{(SIII)}
\textbf{Forming the clusters of $\mathcal{C}_{i}$ around $Z_i$}. This is done by letting each $u \in V_i$ join the cluster of its \emph{closest} center in $Z_i$ breaking tie based on ID's. The latter can be implemented in $O(i)$ rounds of constructing BFS trees of depth $i$ from all centers $Z_i$ while breaking ties appropriately.
\end{itemize}
\end{flushleft}
\end{mdframed}
At the final phase of Baswana-Sen, there are $O(n^{1/k})$ clusters in $\mathcal{C}_{k-1}$ and at that point, each vertex $v \in V$ adds one edge to each of its neighboring clusters in $\mathcal{C}_{k-1}$. 

Note that the only step that uses randomness in this algorithm is sub-step (SI), and the other two sub-steps (SII-SIII) and the final phase are completely deterministic. 
Our challenge is to implement sub-step (SI) deterministically in a way that in sub-step (SII) we do not add too many edges to the spanner. The algorithms presented from now on, will simulate the $i^{th}$ phase of Baswana-Sen only without using randomness. Sub-step (SIII) and the final phase will be implemented exactly as in Baswana-Sen. 
\vspace{-10pt} 
\subsection{Take (I): $O(n^{1-1/k})$-Round Algorithm $\NaiveSpanner$}\label{sec:naive}
It is easy to see that $0^{th}$-level clustering containing $n$ singleton clusters satisfies properties (P1-P3). To simulate the $i^{th}$ phase of Baswana-Sen algorithm, we employ $O(i \cdot n^{1-i/k})$ deterministic rounds:
Initially, we unmark all the vertices and over time, some of the vertices will get marked (i.e., indicating that they are $i$-clustered). The procedure consists of $O(n^{1-i/k})$ steps where at each step, we look at the remaining set $Z'_{i-1}$ of cluster centers in $Z_{i-1}$ that have not yet been added to $Z_i$. 
Let $U$ be the current set of unmarked vertices and let $\mathcal{C}'_{i-1} \subseteq \mathcal{C}_{i-1}$ be the corresponding clusters of $Z'_{i-1}$. 
For each cluster $C \in \mathcal{C}_{i-1}$, define its unmarked neighborhood by $\Gamma^U(C)=\bigcup_{u \in C}\Gamma(u)\cap U$ and its current unmarked-degree by $\deg^U(C)=|\Gamma^U(C)|$. We say that cluster $C$ is a \emph{local-maxima} in its unmarked neighborhood if it has the maximum tuple (lexicographically) $(\deg^U(C), ID(C))$ among all other clusters $C'$ that have mutual unmarked neighbors (i.e., $\Gamma^U(C)\cap \Gamma^U(C')\neq \emptyset$). 
\begin{mdframed}[hidealllines=false,backgroundcolor=gray!30]
\centering  \textbf{\large  Phase $i$ of Algorithm $\NaiveSpanner$} \\
\begin{flushleft}
\vspace{-5pt}
\textbf{(SI): Defining the centers $Z_i$.}\\
Set $Z'_{i-1}\gets Z_{i-1}$, $U=V$ and for $O(n^{1-i/k})$ steps do the following: 
\begin{itemize}
\item Every center $z \in Z'_{i-1}$ of cluster $C$ computes $\deg^U(C)$.
\item Every center $z \in Z'_{i-1}$ whose cluster $C$ has the maximum tuple $(\deg^U(C), ID(C))$ in its unmarked neighborhood, $\deg^U(C)$, joins $Z_{i}$ only if $\deg^U(C)\geq n^{i/k}$.
\item Remove from $Z'_i$ the centers $z \in C$ that join $Z_i$ and mark $\Gamma^U(C)$.
\end{itemize}
\textbf{(SII): Taking care of unclustered vertices.}
\begin{itemize}
\item
Let $\mathcal{C}'_{i-1}$ be the clusters whose centers did \emph{not} join $Z_{i}$. 
\item
Every remaining unmarked vertex $u$, adds one edge per neighboring cluster in $\mathcal{C}'_{i-1}$.
\end{itemize}
\textbf{(SIII): Forming the $\mathcal{C}_i$ clusters centered at $Z_i$.} As in Baswana-Sen.
\end{flushleft}
\end{mdframed}
\def\APPENDFullDescriptionSimple{
\subsection{$O(n^{1-1/k})$-Round Algorithm $\NaiveSpanner$: Missing Details}
We first let each cluster center $z$ compute the degree $\deg^U(C)$ of its cluster $C$. To do that, every clustered vertex $v \in C$ for every $C \in \mathcal{C}'_{i-1}$ sends its cluster ID to its neighbors and every unmarked vertex $u$ sends an ACK to one neighbor in each cluster. 
By that, we avoid the double counting due to nodes that have many neighbors in the same cluster. Then, 
the clustered vertices upcast the number of ACKs they received on the depth-$i$ BFS tree rooted at their center. 
Hence, within $O(i)$ rounds, each node $v \in C$ knows $\deg^U(C)$ for every $C \in \mathcal{C}'_{i-1}$. 
Next, we want to elect cluster centers whose unmarked degree (of their clusters) is local-maxima in their $i^{th}$-neighborhood. Only these cluster centers will be given the chance to join the $i^{th}$-clustering. To do that, each clustered node $v \in C$ for every $C \in \mathcal{C}'_{i-1}$ sends to its unmarked neighbors the tuple $(\deg^U(C),ID(C))$. Every unmarked node sends ACK only to the unique neighbor $v$ from which it got the tuple of maximum value. These ACKs are then upcasted on the communication tree within each cluster. Cluster centers $c \in C$ that got an ACK from all their unmarked neighbors $\Gamma^U(C)$ will now join the next level of the clustering only if $\deg^U(C)\geq n^{i/k}$. If this condition holds, we add these cluster centers to $Z_i$ and mark their $i^{th}$-neighborhood. This process continues for $O(n^{1-i/k})$ steps, it is easy to see that after at that point all remaining clusters $C$ have small $\deg^U(C) \leq n^{i/k}$ and that $Z_i$ contains only $O(n^{1-i/k})$ cluster centers\footnote{Since the unmarked neighborhood $\Gamma^U(C), \Gamma^U(C')$ of two clusters that join the $i^{th}$-level clustering are \emph{disjoint}, there $n^{1 - (i-1)/k}$ such clusters in the clustering $\mathcal{C}_{i}$.}. This completes the description of sub-step (SI). 
Next, step (SII) is implemented as follows: every unmarked vertex $u$ add one edge to each of its neighboring remaining clusters in $\mathcal{C}_{i-1}$. Since we have $O(n^{1-(i-1)/k})$ such clusters and each of them has $O(n^{i/k})$ unmarked neighbors, this adds $O(n^{1+1/k})$ edges to the spanner\footnote{Note that this algorithm takes care of all edges $(V_{i-1}\setminus V_i)\times (V\setminus V_i)$. In Baswana-Sen, the algorithm takes care of all edges $(V_{i-1}\setminus V_i)\times V$, however, in the analysis section, we show that our modification is still sufficient to have a bound stretch.}. 
Finally, sub-step (SIII) of forming the clusters centered at $Z_i$ is exactly as in Baswana-Sen.
}
\textbf{Sketch of the Analysis:} The key part to notice is that by picking the local-maxima clusters, we have that for any two cluster-centers $z_1 \in C_1, z_2 \in C_2$ that join $Z_i$, their unmarked neighborhoods $\Gamma^U(C_1),\Gamma^U(C_2)$ are vertex disjoint, hence $Z_i$ contains $O(n^{1-i/k})$ centers; in addition, after $O(n^{1-i/k})$ steps, the clusters of all remaining centers have $O(n^{i/k})$ unmarked neighbors. Hence, at step (SII), total of $O(n^{1-(i-1)/k})\cdot O(n^{i/k})=O(n^{1+1/k})$ edges are added to the spanner.  
Turning to runtime, we claim that each of the $O(n^{1-i/k})$ steps can be implemented in $O(i)$ rounds. Since each cluster $C \in \mathcal{C}_{i-1}$ is connected i $G$ by a depth-$i$ tree, and since trees of different clusters are vertex-disjoint, computing the unmarked degree $\deg^U(C)$ of each cluster $C$ can be done in $O(i)$ rounds; To avoid the double of counting of unmarked vertices that have many neighbors at the same cluster, each unmarked vertex respond to only one its neighbors in each cluster. Similarly, also selecting the local maxima clusters can be done in $O(i)$ rounds.
We note that the time complexity of the algorithm is $O(n^{1-1/k})$, as opposed to $O(n)$, since after $O(n^{1-1/k})$ iterations of finding clusters of locally maximal unmarked degree, all remaining clusters will have low unmarked degree, and can be dealt with in parallel in $O(1)$ rounds, by adding an edge to every unmarked neighbor. Towards speeding up this algorithm, we now introduce our key technical tools.

\dnsparagraph{A remark regarding step (SII):} Let $V'_{i}=V_{i-1} \setminus V_{i}$ be the set of newly unclustered vertices. In Baswana-Sen algorithm, step (SII) takes care of all the edges in $V'_{i}\times V$. That is, the edges added to the spanner $H$ at that stage provide that $\dist(u,v,H)\leq 2i-1$ for every $(u,v) \in (V'_{i}\in V)\cap E$. Most of the algorithms we present in this paper, have a weaker but sufficient guarantee when implementing step (SII). In particular, we only add edges between the remaining unmarked vertices and the remaining clusters whose centers did not join $Z_i$. 
We now show why it is sufficient. Consider an edge $(u,v) \in E$. Let $i_u$ be the largest level of the clustering such that $u$ is $i_u$-clustered, define the same for $v$. Without loss of generality, assume that $i_v \leq i_u$.
\textbf{Case (1): $i_u=k-1$:} Let $C$ be the cluster of $u$ in $\mathcal{C}_{k-1}$. Since in the last step, $v$ adds one edge to $\Gamma(v)\cap C$, the claim holds. 
\textbf{Case (2): $i_u\leq k-2$:} Consider phase $(i_u+1)$ where the clustering $\mathcal{C}_{i_u+1}$ is constructed given $\mathcal{C}_{i_u}$. By definition, in step (SII) of phase $(i_u+1)$ we have that the vertex $v$ is unmarked and the vertex $u$ belongs to a remaining cluster $C \in \mathcal{C}_{i_u}$. Since every unmarked vertex adds one edge to each remaining cluster, we have that $v$ added one edge to $C \cap \Gamma(v)$. The claim follows.

\vspace{-10pt}
\subsection{Key Tool (I): Sparser Spanner for Unbalanced Bipartite Graphs}
\label{sec:unbalancedbipartite}In this section, we consider Lemma \ref{bipartite}.
Similarly to the construction of $3$-spanners in Section \ref{sec:3spann}, a key ingredient in our algorithm is the construction of \emph{sparser} spanners for unbalanced $A \times B$ bipartite graphs for $|A|\leq |B|$. The algorithm of \cite{derbel2007deterministic} constructs a $(2k-1)$ spanners for these bipartite graphs with $O(|A||B|^{2/k})$ edges in the \local\ model, using large messages. Our algorithm is slower than that of \cite{derbel2007deterministic}, but has the benefit of obtaining a sparser $(2k-1)$-spanner with only $O(k|A|^{1+2/k}+|B|)$ edges and while using $O(\log n)$-bit messages.

The high-level strategy of Alg. $\BipartiteSpanner$ is to first compute $|A|$ star clusters (clusters of radius 1) by letting each vertex of $B$ join an arbitrary neighbor in $A$. Hence, after one step of clustering, we have $|A|$ clusters rather than $O(n^{1-1/k})$ clusters is in Baswana-Sen. 
We then consider \textit{star graph} $G_S$ obtained contracting each star into a vertex, and essentially apply Alg. $\NaiveSpanner$ on the star-graph $G_S$ to construct a $(k-1)$-spanner $H_S\subseteq G_S$ with $O(|A|^{1+2/k})$ edges within $O(k|A|^{1 - 2/k})$ rounds. To get a $(2k-1)$ spanner $H \subseteq G$ from $H_S$, for every star-edge $(S_i,S_j) \in H_S$, add a single edge in $(S_1\times S_2)\cap E$ to $H$. Finally, adding the star edges to the spanner, gives a total of $O(|A|^{1+2/k}+|B|)$ edges. Simulating Alg. $\NaiveSpanner$ on the star-graph in the \congest\ model requires some effort. We now provide the description of Alg. $\BipartiteSpanner$ and its analysis.

\subsection*{Full Description of Algorithm $\BipartiteSpanner$}\label{sec:appbipartite}
Given a bipartite graph $G = (A \cup B, E)$ and even integer $k\geq 4$, 
the goal of this section is to construct in $O(|A|^{1 - 2/k})$ rounds, a $(2k-1)$-spanner $H \subseteq G$ with $O(|A|^{1 + 2/k} + |B|)$ edges.

We note that very often in our constructions, we apply this algorithm on graphs with $|A| = O(n^{1/2})$ and $|B| = O(n^{1/2 + 1/k})$. In this case, we get a $(2k-1)$-spanner containing $O(n^{1/2 + 1/k})$ edges within $O(kn^{1/2 - 1/k})$ rounds.

In the first step of the algorithm, each vertex $b$ in $B$ picks a neighboring vertex $a_i$ in $A$, and joins the star centered at $a_i$. We call $a_i$ the \emph{leader} of its star $S_i$. Now, our graph is partitioned into $|A|$ vertex-disjoint stars. The idea, at the high level, is to implement the $\NaiveSpanner$ on the star-graph (the graph obtained from contracting each of the stars into a vertex), with parameter $k' = k/2$ to obtain a spanner $H'$ on the star-graph. Then, this spanner $H'$, along with the edges within stars, are combined to obtain the final spanner $H$. Hence, we should output a spanner with $O(|A|^{1 + 1/k'}) = O(|A|^{1 - 2/k})$ edges within $O(|A|^{1 - 1/k'}) = O(|A|^{1 - 2/k})$ rounds. Implementing $\NaiveSpanner$ on the star-graph instead of on the given graph involves some subtleties, which we elaborate on in this section.


\paragraph{Clustering in the Star-Graph.}
Let $S_i \subseteq V$ be the set of vertices in the star centered at $a_i$ and $\mathcal{S}=\{S_i ~\mid~ a_i \in A\}$ be the set of all stars. Many times we call $a_i \in S_i$ the \emph{leader} of the star.
For every vertex $v$ and a collection of stars $\mathcal{S}' \subseteq \mathcal{S}$, let $\Gamma_{\mathcal{S}'}(v)=\{S_i \in \mathcal{S}' ~\mid~ \Gamma^+(v)\cap S_i \neq \emptyset\}$ be the neighboring stars of vertex $v$ in $\mathcal{S}'$. Throughout, a cluster is defined by a collection of stars, that is $C \subseteq \mathcal{S}$. A clustering $\mathcal{C}_i$ in the $i^{th}$-level is a collection of $O(|A|^{1-i/k'})$ star-disjoint clusters (where $k' = k/2$), each of radius $i$ in the star-graph. 
We say that a star $s_j \in \mathcal{S}$ is $i$-clustered if it belongs to one of the clusters in $\mathcal{C}_i$. Otherwise, it is an $i$-unclustered star. 

For a cluster $C$, let $V(C)=\bigcup_{S \in C}S$ be the vertices in the stars contained in $C$. Even though our cluster is a collection of stars, we still morally think of it as a collection of vertices belonging to these stars. Also, the center of the cluster is a real vertex $a \in A$ and not a star. 
We say that a cluster $C$ is a \emph{neighbor} of star $S_j$ if 
$$V(C) \cap \bigcup_{u \in S_j}\Gamma^+(u) \neq \emptyset.$$
For a collection of stars $\mathcal{S}' \subseteq \mathcal{S}$ and a cluster $C$, define the neighboring star of $C$ in $\mathcal{S}'$ by
\begin{equation}\label{eq:clustern}
\Gamma_{\mathcal{S}'}(C)=\bigcup_{v \in V(C)}\Gamma_{\mathcal{S}'}(v)~.
\end{equation}
For a clustering $\mathcal{C}_{i-1}$ and a star $S_j \in \mathcal{S}$, define the neighboring clusters in $\mathcal{C}_{i-1}$ of star $S_j$ by
\begin{equation}\label{eq:neihinclustering}
\Gamma(S_j, \mathcal{C}_{i-1})=\{ C \in \mathcal{C}_{i-1} ~\mid~ S_j \in \Gamma_{\mathcal{S}}(C)\}~.
\end{equation}
For a collection of unmarked stars $\mathcal{U} \subseteq \mathcal{S}$, define 
the unmarked star-neighborhood of a cluster $C$ by 
$$\Gamma^{\mathcal{U}}(C)=\bigcup_{v \in V(C)}\Gamma_{\mathcal{U}}(v) \mbox{~~and~~} \deg^{\mathcal{U}}(C)=|\Gamma^{\mathcal{U}}(C)|.$$

A cluster $C$ is a \emph{local-maxima} in its unmarked star-neighborhood, if for every other cluster $C'$ with $\Gamma^{\mathcal{U}}(C)\cap \Gamma^{\mathcal{U}}(C')\neq \emptyset$, it holds that
$$ (\deg^{\mathcal{U}}(C),ID(C)) > (\deg^{\mathcal{U}}(C'),ID(C')).$$
In our algorithm, the clusters $C$ compute a $2$-approximation $\appdeg^{\mathcal{U}}(C)$ of their unmarked star-degree $\deg^{\mathcal{U}}(C)$. Hence, instead of looking for clusters which are local-maxima, we will look for clusters which are approximate local maxima. That is, clusters $C$ satisfying 
$$ (\appdeg^{\mathcal{U}}(C),ID(C)) > (\appdeg^{\mathcal{U}}(C'),ID(C'))$$
for every other $C'$ with $\Gamma^{\mathcal{U}}(C)\cap \Gamma^{\mathcal{U}}(C')\neq \emptyset$.

As in Alg. $\NaiveSpanner$, the centers of the clusters join the next level of the clustering only if (1) their clusters are (approximate) local-maxima in their unmarked star-neighborhood; and (2) their (approximate) unmarked star-degree is large enough. These two conditions guarantee that only a bounded number of cluster centers joins the next level (as each of these clusters has a large set of neighboring stars, and all these sets are disjoint).

\paragraph{The $0^{th}$-level clustering $\mathcal{C}_0$.}
Recall that the collection of (vertex-disjoint) stars $\mathcal{S}$ is defined by letting each vertex in $b$ join star $S_i$ of one of its neighbors $a_i$ in $A$. The $0^{th}$-level of clustering $\mathcal{C}_0=\{\{S_1\}, \ldots, \{S_{\ell}\}\}$ consists of all singleton stars. 

We now explain how to compute the $i^{th}$ level of the clustering $\mathcal{C}_{i}$ given the $(i-1)^{th}$ level $\mathcal{C}_{i-1}$ within $O(k \cdot |A|^{1-i/k'})$ rounds. Note the star-graph is in our mind mainly for the purpose of intuition, when implementing it, the centers $Z_i \subseteq A$ are still vertices in the original graph.

\subsubsection*{The $i^{th}$ Phase of Algorithm $\BipartiteSpanner$}

\paragraph{(S1): Selecting the cluster centers $Z_i$ of the $i^{th}$-clustering.}
We employ the following iterative process for $|A|^{1 - i/k'}$ iterations. Initially, unmark all the stars of $\mathcal{S}$ by letting $\mathcal{U}=\mathcal{S}$. The next steps are similar to those of Alg. $\NaiveSpanner$ with the only exception that we unmark stars rather than vertices.

Let $\mathcal{U}' \subseteq \mathcal{S}$ be the set of current unmarked stars and let $Z'_{i-1} \subseteq Z_{i-1}$ be the set of cluster centers that have not yet joined $Z_i$. 
As in Alg. $\NaiveSpanner$, each cluster center $z_j \in Z'_{i-1}$ computes (an approximation of) the unmarked star-degree $\deg^{\mathcal{U}}(C_j)$ of its cluster $C_j$. 

\begin{lemma}\label{unmarkeddeg}
Every cluster center $z \in C$ can compute deterministically
a $2$-approximation for $\deg^{\mathcal{U}}(C)$ within $O(\min(|A|^{1-1/k'}, |A|^{1 - (i-1)/k'}))$ rounds. In addition, once it is computed in the first iteration of the phase, it can be maintained in each future iteration of that phase, using $O(k)$ rounds.
\end{lemma}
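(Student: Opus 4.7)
My plan is a gather-aggregate computation along the BFS tree of each cluster. Recall that a cluster $C \in \mathcal{C}_{i-1}$ is organized by a depth-$(i-1)$ tree in the star-graph; unfolding this into the underlying bipartite graph yields a tree of depth $O(i)$ connecting the leaders of $C$ through intermediate $B$-vertices, and trees of distinct clusters are star-disjoint so they do not congest each other. The first step is to have every vertex $v \in V(C)$ broadcast its cluster ID $z$ to all of its neighbors in a single round. Each unmarked-star leader $a_j$ then aggregates within its star: every member $u \in S_j$ pipes back to $a_j$ the cluster IDs it just received, and $a_j$ maintains the deduplicated list of clusters adjacent to $S_j$. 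Since $|\mathcal{C}_{i-1}| = O(|A|^{1-(i-1)/k'})$ and the star has radius one, this aggregation costs $O(|A|^{1-(i-1)/k'})$ rounds.

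Next, for each distinct cluster $C$ on $a_j$'s list, the leader picks a single representative $u_C \in S_j$ adjacent to $V(C)$ and instructs $u_C$ to send a single ACK tagged with $z$ to a designated neighbor in $V(C)$; the reverse pipeline again uses $O(|A|^{1-(i-1)/k'})$ rounds. Finally each $v \in V(C)$ counts the ACKs whose tag matches its own cluster ID, and the partial sums are pipelined up $C$'s depth-$O(i)$ tree to $z$ in $O(i)$ rounds; counts fit in $O(\log n)$ bits so no further contention arises. The total cost is $O(|A|^{1-(i-1)/k'} + i)$, matching the second entry of the $\min$. For the alternative bound $O(|A|^{1-1/k'})$, I would observe that at the first iteration of any phase $i$ we can simply afford this cost, since the phase performs only $O(|A|^{1-i/k'})$ outer iterations and the overall per-phase budget of the algorithm is already $O(|A|^{1-1/k'})$.

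For the maintenance claim, each leader $a_j$ stores the list of clusters to which it contributed an ACK during the first iteration. When $S_j$ later becomes marked (because its star joins $\Gamma^{\mathcal{U}}(C^*)$ for some newly selected center $C^*$), $a_j$ emits one retraction per stored cluster; each retraction traverses the star-disjoint depth-$O(k)$ tree of the affected cluster, and the center $z$ decrements its stored value accordingly. Since the trees are vertex-disjoint across distinct clusters, the per-iteration cost is $O(k)$. I expect the main obstacle of the full write-up to be bounding the congestion on a single cluster's tree when many newly-marked stars retract through the same edge in one round; this is exactly where the $2$-approximation slack of the lemma earns its keep, since by tolerating a constant-factor miscount we can batch retractions into $O(k)$ pipelined rounds instead of realizing an exact bijective retraction.
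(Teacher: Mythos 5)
There is a genuine gap, and it sits exactly where the lemma's $2$-approximation earns its keep: the first phase. Your gather-and-deduplicate scheme (every vertex announces its cluster ID, star members pipe the IDs back to their leader, the leader deduplicates and then assigns one representative per neighboring cluster) costs a number of rounds proportional to the number of clusters in $\mathcal{C}_{i-1}$. For $i\geq 2$ that is $O(|A|^{1-(i-1)/k'})$ and your argument essentially coincides with the paper's, which indeed computes the \emph{exact} degree in that regime via per-cluster representatives and a partial-sum upcast. But for $i=1$ the clustering consists of all $O(|A|)$ stars, so your procedure takes $\Theta(|A|)$ rounds, whereas the lemma promises $O(\min(|A|^{1-1/k'},|A|))=O(|A|^{1-1/k'})$. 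Your fallback -- ``we can simply afford this cost since the per-phase budget is already $O(|A|^{1-1/k'})$'' -- does not close this: $|A| \gg |A|^{1-1/k'}$ for $k'\geq 2$, so a single such iteration already exceeds the entire round budget of Algorithm $\BipartiteSpanner$, and in any case an amortization claim is not a proof of the stated per-computation bound. The missing idea is the paper's constant-round approximation for phase $1$: the star center $a_j$ exactly counts the neighboring stars it sees through its own incident edges, while every vertex of $A$ sends a single ACK to each neighboring star (through one arbitrarily chosen neighbor), so that the total number of ACKs landing in $S_j\setminus\{a_j\}$ counts the stars touching the non-center part; the sum of the two counts is within a factor $2$ of $\deg^{\mathcal{U}}(S_j)$ and is obtained in $O(1)$ rounds, which is also cheap enough to simply recompute in every iteration of phase $1$.

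A secondary point: your maintenance argument misattributes the role of the approximation. In the paper the update for $i\geq 2$ is \emph{exact} and needs no ``batching slack'': each newly marked star sends exactly one marking message per neighboring cluster, through the representative $b_{j,\ell}$ fixed at the start of the phase, and each cluster upcasts the \emph{count} of marking messages as partial sums along its depth-$O(i)$ tree, so a tree edge carries an $O(\log n)$-bit counter rather than individual retractions and there is no congestion problem to absorb. The $2$-approximation exists solely because phase $1$ cannot enumerate neighboring stars within the round bound, not to tolerate miscounted retractions in later iterations; as stated, your maintenance step would silently degrade the approximation guarantee that the rest of the analysis (the local-maxima selection and the size bound) relies on.
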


\begin{proof}
We distinguish between the first phase and phase $i$ for $i\geq 2$. In the first phase, the cluster centers $z \in C$ compute only a $2$-\emph{approximation} $\appdeg^{\mathcal{U}}(C)$ for the unmarked star-degree of their cluster. In phase $i\geq 2$, the cluster centers $z \in C$ compute the \emph{exact} value $\deg^{\mathcal{U}}(C_j)$. The reason for this difference is as follows.  In the first phase, the $0^{th}$-clustering contains $O(|A|)$ clusters (stars in this case), and hence it might take up to $O(|A|)$ rounds for each star leader to compute the list of all its neighboring stars. However, in phase $i\geq 2$, the $(i-1)^{th}$-level clustering contains $O(|A|^{1-1/k'})$ clusters, allowing stars to compute the exact number of neighboring clusters within $O(|A|^{1-1/k'})$ rounds, as will see next.  

\textbf{Phase $i = 1$}: In the first phase, all clusters are simply stars. 
Recall that $S_j, S_\ell$ are neighbors if $\Gamma(S_i)\cap S_j\neq \emptyset$. 
The neighboring stars $S_\ell$ of a star $S_j$ can be connected to the vertices of $S_j$ in one of two ways. Either a neighboring star $S_\ell$ is connected to $S_j$ by an edge with an endpoint at $a_j$, the center of the star $S_j$, alternatively $S_\ell$ is connected to $S_j$ by an edge with endpoint $b_{\ell'}\in S_j\setminus \{a_j\}$.

To approximate the star-degree of $S_j$, the leader $a_j$ computes the number of edges in each of these two types.  By adding these two values, we achieve a 2-approximation of the degree. 
To compute the edges of the first type, we let each vertex sending its star-ID to its neighbor in $G$. This allows the center $a_j$ to compute the number of stars in which it has a neighbor in. To compute the number of stars connected to $S_j$ via a vertex in $S_j \setminus \{a_j\}$, we do as follows: each vertex $a_i$ in $A$ sends a single ACK to each of its neighboring stars $S_\ell$ (by picking a unique neighbor in $S_j$ arbitrarily, and sending an ACK to it). The total number of ACKs received by the vertices in $S_j$ is exactly the number of stars that have a neighbor in $S_j \setminus \{a_j\}$. Finally, by letting $a_j$ add these two values, we have a 2-approximation for the star-degree. Note that this takes $O(1)$ communication rounds. Hence, in each iteration of the first phase, the star leaders can compute their update $\appdeg^{\mathcal{U}}(C)$ by applying the above computation restricted to the set 
$\mathcal{U}$ of unmarked stars.

\textbf{Phase $i \geq 2$}: At that phase, the clustering $\mathcal{C}_{i-1}$ contains $O(|A|^{1-(i-1)/k'})$ clusters. 
At the beginning of the phase, the vertex $a_j \in S_j$ computes a representative vertex $b_j \in S_j$ with a neighbor in $C'$ for every $C' \in \Gamma(S_j, \mathcal{C}_{i-1})$ (see Eq. (\ref{eq:neihinclustering})).  That is, for each neighboring cluster $C_{\ell}$, the center $a_j$ finds a vertex $b_{j,\ell} \in S_j$ that has a neighbor in $C_{\ell}$. This information is represented by a tuple $(b_{j,\ell}, C_{\ell})$ for every $C_\ell \in \Gamma(S_j, \mathcal{C}_{i-1})$. The length of this list is bounded by the number of clusters in $\mathcal{C}_{i-1}$, hence by $O(|A|^{1-(i-1)/k'})$. The leader $a_j$ sends this list to all its star members and by that, each vertex $b_{\ell}$ knows the set of clusters $C' \in \Gamma(S_j, \mathcal{C}_{i-1})$ that is responsible for. 
We next explain how the cluster centers can now compute their (exact) star-degree. 
Each representative vertex $b_{j,\ell}$ sends an ACK message to one neighbor in each of the clusters $C_\ell$ it is responsible for. Note that since the clusters are vertex-disjoint, each vertex $b_{j,\ell}$ sends only one message on each of its edges. In addition, note that in every cluster $C_\ell$, there is exactly one vertex that receives a message for each neighboring star $S_j$. Hence, by upcasting the number of ACKs received by the vertices of the cluster $C_{\ell}$, the center of the cluster $C_{\ell}$ can compute $\deg^{\mathcal{S}}(C)$ within $O(i)$ rounds. Since in the beginning of phase $i$, all stars are unmarked, $\deg^{\mathcal{S}}(C)$ is 
the initial unmarked star-degree of this phase. Overall, computing $\deg^{\mathcal{S}}(C)$ takes $O(i+|A|^{1-(i-1)/k'})$ rounds and hence we cannot recompute this value from scratch in each of the $O(|A|^{1-i/k'})$ iterations of that phase, as stars get marked.  

We next show how the cluster centers can maintain the unmarked star-degree of their cluster over the $O(|A|^{1-i/k'})$ iterations of this phase, when stars become marked. The goal is to have each star $S_j$ that got marked, send exactly one message to each neighboring cluster. Then, every cluster can count the total number of messages it received to know how many of its neighboring stars got marked and update its unmarked star-degree. It takes some care to make sure that exactly one message is sent to every neighboring cluster. The key will be to use the tuples $(b_{j,\ell}, C_{\ell})$ computed earlier, so for each neighboring cluster $C_{\ell}$, the corresponding vertex $b_{j, \ell}$ will be in charge of notifying the cluster $C_{\ell}$ if the star of $b_{j, \ell}$ gets marked. Since we have computed these tuples at the beginning of the phase, and each vertex in the star knows the neighboring clusters it is responsible for, we can ensure that each cluster gets exactly one message from each neighboring star that gets marked.

More formally, assume that the number of unmarked neighboring stars was maintained by each cluster up to the $(t-1)^{th}$ iteration and now consider the $t^{th}$ iteration where stars get marked. Let $\mathcal{C}'_{i-1}$ be the set of clusters whose centers have not yet joined $Z_i$. Note that all the current unmarked stars have all their cluster neighbors in $\mathcal{C}'_{i-1}$, as otherwise they would not be unmarked at that point. That is, for every unmarked star at the beginning of the  $t^{th}$ iteration, it holds that $\Gamma(S_j, \mathcal{C}'_{i-1})=\Gamma(S_j, \mathcal{C}_{i-1})$. 
At the end of the iteration, each star that becomes now marked sends a ``marking'' message to exactly one vertex $b_{\ell}$ in each of its neighboring clusters $C_{\ell} \in \Gamma(S_j, \mathcal{C}'_{i-1})$. For each neighboring cluster $C_\ell$, there is a tuple $(b_{j,\ell}, C_{\ell})$ which was computed before, where $b_{j, \ell}$ neighbors $C_{\ell}$. 
The vertices of each cluster $C_\ell$ in $\mathcal{C}'_{i-1}$ now upcast the number $x$ of ``marked'' messages they got, which allows the center of the cluster to subtract this number from its current unmarked star-degree in order to get an updated value for the unmarked star-degree.
\end{proof}
In addition, we have:
\begin{lemma}\label{ismaxima}
Every center in $Z'_{i-1}$ can verify if its cluster is an approximate-local-maxima (i.e., the value it computed for $(\appdeg^{\mathcal{U}}(C),ID(C))$ is larger than any other cluster it shares an unmarked star neighbor with) within $O(i)$ rounds.
\end{lemma}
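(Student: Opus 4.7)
The plan is to have each cluster $C \in \mathcal{C}_{i-1}$ whose center is still in $Z'_{i-1}$ collect a single quantity: the maximum tuple $(\appdeg^{\mathcal{U}}(C'),\mathit{ID}(C'))$ taken over all other clusters $C'$ that share an unmarked star-neighbor with $C$. Comparing this aggregate against its own tuple lets the center of $C$ decide local-maximality in one local computation. Since every cluster in $\mathcal{C}_{i-1}$ has radius $O(i)$ in $G$ and unmarked stars bridge competing clusters through a single $G$-edge plus one star-edge, the whole comparison can be performed by piggybacking on the depth-$i$ BFS tree of each cluster together with one hop through the star.

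Concretely I would proceed in five stages. First, each cluster center $z \in C$ downcasts the tuple $(\appdeg^{\mathcal{U}}(C),\mathit{ID}(C))$ along the depth-$i$ BFS tree of $C$ in $O(i)$ rounds; since clusters in $\mathcal{C}_{i-1}$ are vertex-disjoint, all downcasts run in parallel without conflict. Second, each vertex $v \in V(C)$ sends the tuple on each of its incident $G$-edges in one round, and any vertex $b$ belonging to an unmarked star $S_j \in \mathcal{U}$ keeps only the maximum tuple seen on its edges. Third, the leader $a_j$ of each unmarked star $S_j$ gathers this maximum from its star members in one round, since the star has radius one. Fourth, using the representative list $\{(b_{j,\ell},C_\ell)\}$ already maintained by Lemma \ref{unmarkeddeg} at the start of phase $i$, $a_j$ broadcasts the star's max tuple down to its members in one round, and then each $b_{j,\ell}$ forwards that value to its designated neighbor in $C_\ell$ in one more round. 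Finally, each cluster $C$ upcasts the maximum of the tuples that landed at its vertices back to $z$ along the BFS tree in $O(i)$ rounds; the center declares itself an approximate local maximum iff its own tuple strictly exceeds the collected max.

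The main subtlety I would worry about is congestion, which appears in two symmetric places. A single unmarked star can sit between many competing clusters, so naively relaying every tuple would be prohibitive; this is defused by keeping only the running maximum at each intermediate vertex and at each star leader. Symmetrically, a single cluster can touch many unmarked stars, but the representative pairs $(b_{j,\ell},C_\ell)$ from Lemma \ref{unmarkeddeg} guarantee that each star sends exactly one tuple per neighboring cluster on the return trip, and because clusters are vertex-disjoint these messages use distinct edges. With these two observations there is at most one message on each used edge per round, so the round complexity collapses to $O(i)$, matching the BFS depth which is the only non-trivial communication resource consumed.
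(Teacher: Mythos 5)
Your overall route is the same as the paper's: every vertex of a competing cluster sends the tuple $(\appdeg^{\mathcal{U}}(C),ID(C))$ to the neighboring stars, each unmarked star aggregates the maximum tuple at its leader, the result is echoed back, and the cluster center decides after an $O(i)$-round upcast; clusters are star-disjoint, one message per edge suffices, so the round bound is the easy part. However, two concrete steps of your implementation are flawed. First, the acceptance test: the quantity a star collects is the maximum over \emph{all} tuples that reach it, and this always includes $C$'s own tuple (the vertices of $C$ broadcast it over every incident edge, in particular into every star of $\Gamma^{\mathcal{U}}(C)$). Hence the value returned to the center of $C$ is always at least $(\appdeg^{\mathcal{U}}(C),ID(C))$, and your criterion ``declare local maximum iff the own tuple \emph{strictly} exceeds the collected max'' is never satisfied -- not even by the genuinely maximal cluster -- so no cluster would ever be selected. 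Your opening sentence speaks of the maximum over all \emph{other} clusters, but nothing in the five stages filters $C$'s own tuple out. The repair is small: since IDs are distinct there are no ties between different clusters, so the center should accept iff its own tuple is equal to (equivalently, at least) the returned maximum; alternatively, as the paper does, have star vertices ACK only along the edges on which the maximal tuple arrived and let $C$ accept iff it received ACKs on all relevant edges.

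Second, your return trip in stage four relies on the representative pairs $(b_{j,\ell},C_\ell)$ ``already maintained by Lemma \ref{unmarkeddeg}'', but that list is constructed only in phases $i\geq 2$; in phase $i=1$ the algorithm deliberately avoids building it (a star may have $\Theta(|A|)$ neighboring clusters, and compiling such a list is precisely what would be too expensive), and only a $2$-approximate degree is computed there. As written, your procedure is therefore not executable in the first phase, which the lemma must cover. Fortunately the list is unnecessary for this lemma: representatives were needed in Lemma \ref{unmarkeddeg} to avoid \emph{double counting}, whereas a maximum is insensitive to duplicates, so each star vertex can simply forward the star's max on every incident edge (still one message per edge, no congestion), which is exactly the paper's echo step. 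With these two repairs your argument coincides with the paper's proof and the $O(i)$ bound stands.
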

\begin{proof}
The procedure is very similar to the one used in the standard algorithm $\NaiveSpanner$.
Each cluster has all of its vertices send $(\appdeg^{\mathcal{U}}(C),ID(C))$ to every neighboring star (either marked or not). 
Every vertex $b_j \in S_j$ then sends the largest tuple it received to the center $a_j$ of its star. The center of the stars $a_j \in S_j$ computes the largest tuple they received and sends this tuple to all the vertices in their $S_j$. These vertices then send an ACK to all their neighbors which originally sent this maximal message. If a cluster received a message from all its edges, then it is an approximate local maxima.
Since communication within clusters can be done in $O(i)$ rounds, the round complexity follows.
\end{proof}

Every cluster-center of an (approximate) local-maxima cluster joins $Z_i$ only if $\appdeg^{\mathcal{U}}(C)\geq |A|^{i/k'}$. We call such a cluster a successful cluster.
Every star $S_j$ that has a neighboring cluster which is successful marks itself.

\paragraph{(SII): Taking care of unclustered stars.}
Let $\mathcal{U}'$ be the remaining unmarked stars and let $\mathcal{C}'_i$ be the remaining clusters in $\mathcal{C}_{i-1}$ whose centers did not join $Z_i$. 
At this point, we let each cluster $C \in \mathcal{C}'_i$ add one edge to each of its neighboring unmarked stars in $\Gamma^{\mathcal{U}'}(C)$.
\begin{lemma}\label{SIIrounds}
Step (SII) takes $O(i+|A|^{i/k'})$ rounds.
\end{lemma}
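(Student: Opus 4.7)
The plan is to establish the round bound in two parts: first, show that each remaining cluster has few unmarked star-neighbors to serve in Step (SII); second, implement the edge insertions using the cluster BFS trees within that budget.

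I would begin with the structural observation that at the conclusion of Step (SI), every remaining cluster $C \in \mathcal{C}'_i$ satisfies $|\Gamma^{\mathcal{U}'}(C)| = O(|A|^{i/k'})$. Intuitively, Step (SI) runs $\Theta(|A|^{1-i/k'})$ iterations, each one picking an approximate-local-maxima cluster with approximate unmarked star-degree at least $|A|^{i/k'}$ and marking its unmarked star-neighborhood. Selected clusters have pairwise disjoint unmarked star-neighborhoods of size at least $|A|^{i/k'}$, so at most $|A|^{1-i/k'}$ can ever be selected; any cluster that survives all iterations must have approximate unmarked star-degree below $|A|^{i/k'}$, and by the $2$-approximation from Lemma \ref{unmarkeddeg} its true unmarked star-degree is also $O(|A|^{i/k'})$.

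Next I would implement the edge insertion using the tuples $(b_{j,\ell}, C_\ell)$ that were maintained during the phase in the proof of Lemma \ref{unmarkeddeg}. For each unmarked star $S_j$ and each of its neighboring clusters $C_\ell \in \Gamma(S_j, \mathcal{C}'_{i-1})$, the designated vertex $b_{j,\ell}$ is the unique representative of $S_j$ toward $C_\ell$. In a single round, $b_{j,\ell}$ sends an ``insert'' token carrying the star-ID $j$ across one of its edges into $C_\ell$. Because the tuples assign exactly one responsible vertex per (star, neighboring cluster) pair, this produces exactly one proposed edge per pair, with at most one message per edge of $G$.

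Inside each remaining cluster $C \in \mathcal{C}'_i$, the received proposals are confirmed via pipelined upcast and downcast on $C$'s BFS tree of depth $O(i)$ in $G$: every internal tree node forwards each distinct star-ID at most once toward the root (deduplicating along the way), and the root then ships one confirmation back per star-ID, which activates the unique representative edge $(b_{j,\ell}, w_{j,\ell})$ chosen for the pair. Because there are at most $O(|A|^{i/k'})$ distinct star-IDs arriving at each cluster, the standard pipelined upcast/downcast on a depth-$O(i)$ tree completes in $O(i + |A|^{i/k'})$ rounds. The clusters of $\mathcal{C}'_i$ are star-disjoint and therefore vertex-disjoint in $G$, so their BFS trees are edge-disjoint and all clusters carry out the pipelining in parallel with no inter-cluster congestion, giving the claimed global bound.

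The main obstacle I expect is making sure that exactly one edge enters each (cluster, unmarked star) pair in spite of the many ways a star and a cluster may be connected, and without paying for duplicate upcast messages that would inflate the pipeline length beyond $O(|A|^{i/k'})$. Both issues are handled uniformly by the pre-computed responsibility tuples from Lemma \ref{unmarkeddeg}, which single out one representative per (star, neighboring cluster) pair, combined with per-node deduplication on the BFS trees; the vertex-disjointness of clusters then rules out cross-cluster congestion and yields the bound $O(i+|A|^{i/k'})$.
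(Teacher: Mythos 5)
Your proposal is correct and follows essentially the same route as the paper: after observing that every surviving cluster has only $O(|A|^{i/k'})$ unmarked star-neighbors, both arguments reduce Step (SII) to a pipelined upcast/downcast of an $O(|A|^{i/k'})$-length list of (star, representative) pairs over each cluster's depth-$O(i)$ communication tree, run in parallel over the vertex-disjoint clusters, giving $O(i+|A|^{i/k'})$ rounds (the paper merely lets the cluster center compile and downcast the tuples $(S_j,v)$ with $v\in C$, whereas you initiate from the star side). One small caveat: the responsibility tuples $(b_{j,\ell},C_\ell)$ of Lemma \ref{unmarkeddeg} are only computed in phases $i\ge 2$, so for phase $i=1$ you should let any vertex of an unmarked star send the token toward a neighboring cluster — the per-node deduplication on the cluster tree that you already describe then keeps the pipeline length at $O(|A|^{i/k'})$, so the stated bound is unaffected.
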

\begin{proof}
Each of the remaining clusters in $\mathcal{C}'_i$ has $O(|A|^{i/k'})$ unmarked stars in $\mathcal{U}'$. Hence, the cluster center can compute a list of its neighboring unmarked stars in $O(|A|^{1/k'})$ rounds. It can also, within $O(|A|^{i/k'})$ rounds, compute for each unmarked star, a neighbor in its cluster, and thus create a list of tuples $(S_j, v)$, where for each neighboring unmarked star $S_j$ there is a vertex $v \in C$ in the cluster such that $v$ neighbors $S_j$. The cluster center sends this list of tuples to the entire cluster in $O(i+ |A|^{1 - 1/k'})$ rounds. As a result, every vertex $v \in C$ that is assigned to $S_j$, adds an edge to one neighbor in $S_j$
\end{proof}

\paragraph{(SIII): Defining the clusters of $\mathcal{C}_i$.}
This is done as in Baswana-Sen -- building a BFS tree up to depth $i$, only in the star-graph, while breaking ties based on distance and cluster-ID. This can be done by running a BFS on the star-graph, centered at each cluster center. When a star joins a BFS tree, its center informs all of the other vertices in the star of the cluster which was joined. In the case where there is a tie, and two BFS trees reach the same star at the same depth, the star will choose which of the clusters to join based on the cluster IDs.

We now describe the next phase of the algorithm, which is done after the $(k'-1)^{th}$ level of clustering has been computed.

\paragraph{Last phase:}
After applying the above for $(k'-1)$ phases, in the $(k'-1)^{th}$-level of clustering, we have $O(|A|^{1 - 1/k'})$ clusters of radius $(k'-1)$ in the star-graph. At this point, each star $S_j$ adds to $H$ an edge to one of its neighbors in each of its neighboring clusters. This completes the description of the algorithm. 

We now proceed to analyze the number of rounds and provide a stretch and size analysis for the output spanner.

\paragraph{Size Analysis.}
We show that the output spanner contains $O(k |A|^{1+1/k'} + |B|)$ edges. Initially, when each vertex of $B$ joins a star centered at $A$, we add $|B|$ edges to the spanner.

We now bound the number of edges added when constructing the $i^{th}$ clustering $\mathcal{C}_i$ from the $(i-1)^{th}$ clustering $\mathcal{C}_{i-1}$. Every cluster of the $(i-1)^{th}$-clustering that joins the $i^{th}$-clustering has a unmarked star-degree at least $|A|^{i/k'}$ at the time it joined. Since the unmarked star-neighborhoods of the elected clusters are disjoint (since when a cluster is chosen, it marks its neighbors), there are at most $|A|^{1 - i/k'}$ clusters whose centers join $i^{th}$-clustering. Also, there are $|A|^{1 - i/k'}$ iterations for electing the cluster of locally maximal unmarked star-degree. 
Let $\mathcal{C}'_{i-1}$ be the remaining clusters whose cluster centers did not join $Z_i$ and let $\mathcal{U}'$ be the remaining unmarked stars. It then holds that $\deg^{\mathcal{U}'}(C) \leq |A|^{i/k'}$.

Each one of these non-chosen clusters adds to $H$ one edge to each of its unmarked star-neighbors. Since there are $O(|A|^{i/k'})$ such star neighbors for each cluster, and $O(|A|^{1 - (i-1)/k})$ clusters, overall, this adds $O(|A|^{i/k'}|A|^{1 - (i-1)/k'}) = O(|A|^{1 + 1/k'})$ edges to the spanner. Summing over all $k'$ phases give a total of $O(k'|A|^{1 - 1/k'})$ edges that were added to the spanner.

At the end of these $k'$ phases, the $(k-1)^{th}$-clustering has $O(|A|^{1 - (k'-1)/k'}) = O(|A|^{1/k'})$ clusters. For each such cluster, the partial spanner contains a BFS tree of radius $(k' - 1)$, which adds $O(|V|) = O(|A| + |B|)$ edges. At this point, we let each vertex in $A$ add an edge to each of its neighboring clusters. In total, this adds $O(|A|)\cdot O(|A|^{1 - (k'-1)/k'}) = O(|A|^{1 + 1/k'})$ edges to the spanner.

Summing over all of the above gives the desired bound of $O(k'|A|^{1+1/k'}+|B|) = O(k|A|^{1+2/k}+|B|)$.

\paragraph{Stretch Analysis.} Consider an edge $(v_a, v_b) \in A \times B$. Either at some point one of $v_a$ or $v_b$ got unclustered, or both were clustered throughout. First, we deal with the case where at least one of $v_a$ or $v_b$ became unclustered. 
Assume without loss of generality, that $v_a$ stopped being clustered for the first time at step $i$ \emph{not after} $v_b$ (in the case where $v_b$ became unclustered first, the argument proceeds identically). That is, both the stars of $v_a$ and $v_b$ are clustered in the $(i-1)^{th}$-clustering $\mathcal{C}_{i-1}$, but the star of $v_a$ is unclustered in $\mathcal{C}_{i}$. The cluster of $v_b$ in the $(i-1)^{th}$-clustering added one edge to each neighboring unclustered star. In particular, it adds an edge to the star containing $v_a$. Since the cluster of $v_b$ has radius $i-1$ in the star-graph, the distance from $v_b$ to $v_a$ in the star-graph is at most $(2i - 1)$. Since $i \le k'$, in the original graph their distance is at most $4k' - 1 = 2k - 1$. 

We now deal with the remaining case where $v_a$ and $v_b$ are both clustered throughout. So, in particular, $v_a$ and $v_b$ are in the $(k'-1)^{th}$ clustering.
after the $(k'-1)^{th}$ clustering, each star neighboring the cluster of $v_a$ adds an edge to the cluster of $v_a$. In particular, the star of $v_b$ will add an edge to the cluster of $v_a$. Hence, the distance from the star of $v_a$ to the star of $v_b$ in the spanner induced on the star-graph will be at most $2k' - 1$, and therefore in the original graph their distance will be at most $2(2k' - 1) + 1 = 2k - 1$, as desired.


\paragraph{Round Complexity.} 
We now turn to bound the number of rounds of the algorithm. 
Consider the $i^{th}$-step, where we construct the $i^{th}$-clustering. We will analyze the number of rounds taken by Step (SI), (SII), and (SIII) separately. Steps (SI) and (SII) take the most time, with (SIII) being faster in comparison.

First, we analyze the round complexity of Step (SI). For each cluster in the $(i-1)^{th}$ clustering, computing its unmarked star degree, the number of neighboring unmarked stars, takes $O(\min(|A|^{1-1/k'}, |A|^{1 - (i-1)/k'}))$ rounds by Lemma \ref{unmarkeddeg}. Summing over all $i$ from $1$ to $k'-1$ gives $O(|A|^{1 - 1/k'}) = O(|A|^{1 - 2/k})$. Then, to check if a cluster is a local maxima takes $O(i)$ rounds by Lemma \ref{ismaxima}.  

Step (SII) takes a total of $O(i+|A|^{i/k'})$ rounds by Lemma \ref{SIIrounds}. Summing over all $i$ from $1$ to $k' - 1$ gives $O(k'^2 + |A|^{1- 1/k'}) = O(k^2 + |A|^{1- 2/k})$ rounds. If $k^2$ is comparable with $|A|^{1- 2/k}$, then we can simply add a spanner with $k' = 10$, thus having an edge bound of $O(|A|^{1 + 1/10} +|B|) \le O(k|A|^{1+1/k} + |B|)$ edges in the spanner, as desired. Furthermore, the number of rounds required will be $O(|A|^{1- 1/10}) \le O(|A|^{1 - 2/k})$. Thus, we can ignore the additive factor of $k^2$, achieving a bound of $O(|A|^{1- 2/k})$ rounds.

Step (SIII) takes time $O(i)$, as in Baswana-Sen.

This completes the proof of Lemma \ref{bipartite}. 
For odd $k$, we get a similar algorithm and analysis. The only difference is that instead of constructing a $(k-1)$-spanner of the star-graph, we construct a $(k-2)$-spanner of the star-graph. This results in:
\begin{lemma}[Bipartite Spanners: Odd k]
Let $G = (A \cup B, E)$ be a bipartite graph.Then one can construct a $(2k - 1)$-spanner with $O(k|A|^{1+2/(k-1)} + |B|)$ edges within $O(|A| + k^2)$ rounds in the \congest\ model for odd $k \ge 3$. 
\end{lemma}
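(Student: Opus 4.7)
For odd $k\ge 3$ the integer $k'' := (k-1)/2 \ge 1$ plays the role of $k' = k/2$ in Algorithm $\BipartiteSpanner$. My plan is to run that algorithm essentially verbatim, with the single change that the inner invocation of $\NaiveSpanner$ on the star-graph is executed with parameter $k''$ in place of $k'$. That is: form the $|A|$ vertex-disjoint stars by having each $b \in B$ join one neighbor in $A$ (contributing $|B|$ edges to $H$), compute a spanner $H_S$ of the star-graph over $k''-1$ Baswana--Sen-style clustering levels, and lift each chosen star-graph edge back to a single edge of $G$.

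\textbf{Stretch and size.} The guarantee of the inner $\NaiveSpanner$ procedure is that $H_S$ is a $(2k''-1) = (k-2)$-spanner of the star-graph containing $O(k''\,|A|^{1+1/k''})$ edges. The translation from star-graph distances to $G$-distances is identical to the even-$k$ analysis: every hop between adjacent stars costs at most one center-to-leaf edge inside one star plus one cross edge, so any length-$d$ path in the star-graph lifts to a path of length at most $2d+1$ in $G$. Hence $H$ is a $(2(k-2)+1) = (2k-3)$-spanner of $G$, which in particular is a $(2k-1)$-spanner. Substituting $k'' = (k-1)/2$ into the star-graph edge bound gives $O(k\,|A|^{1+2/(k-1)})$, and adding the $|B|$ initial star edges yields the claimed total $O(k\,|A|^{1+2/(k-1)} + |B|)$.

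\textbf{Rounds and main obstacle.} The round analysis is inherited from the even-$k$ case with $k''$ replacing $k'$: Lemmas \ref{unmarkeddeg} and \ref{ismaxima} give Step (SI) a complexity of $O(|A|^{1-1/k''}) = O(|A|^{1-2/(k-1)})$ summed over all phases, Step (SIII) is a bounded-depth BFS on the star-graph, and Step (SII) contributes an additive $O(k^2)$ term as in the even case (with the same trick of artificially replacing $k''$ by a larger constant when $k^2$ dominates). All of these are crudely bounded by $O(|A|+k^2)$, matching the stated bound. The only mildly delicate point is verifying that the slack in the stretch argument is sufficient: we actually obtain stretch $2k-3$ although we only need $2k-1$, and this slack is exactly the price paid for oddness, manifesting itself as the exponent degradation from $1+2/k$ (even) to $1+2/(k-1)$ (odd) in the edge bound.
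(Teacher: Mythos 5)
Your proposal is correct and follows essentially the same route as the paper: for odd $k$ the paper simply runs Algorithm $\BipartiteSpanner$ with the star-graph spanner parameter lowered so that a $(k-2)$-spanner (rather than a $(k-1)$-spanner) of the star-graph is built, i.e.\ with $k''=(k-1)/2$, and then lifts it exactly as in the even case, giving stretch $2(k-2)+1\le 2k-1$, size $O(k|A|^{1+2/(k-1)}+|B|)$, and rounds within $O(|A|+k^2)$. Your accounting of the stretch lift ($2d+1$), the edge bound, and the round bound matches the paper's (one-line) argument.
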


\subsection{Key Tool (II): Superclustering -- Grouping Baswana-Sen Clusters}
\label{sec:supercluster}
\paragraph{Why Superclusters?}
In this section, we describe the main tool that allows us to speed up Alg. $\NaiveSpanner$ by a factor of $\sqrt{n}$. The idea is to group the $n^{1-i/k}$ clusters in the $i^{th}$-clustering $\mathcal{C}_i$ into $\sqrt{n}$ superclusters, each containing $O(n^{1/2-i/k})$ clusters. Then, instead of iterating over clusters one by one (as in Alg. $\NaiveSpanner$), we iterate over the superclusters. Each time, either \emph{all} the cluster centers of a given supercluster join the next level of clustering, or none of them join. As will be shown later, in order to construct the $i^{th}$-clustering $\mathcal{C}_{i}$, it will be sufficient for our algorithm to consider $n^{1/2-1/k}$ superclusters (and not all $\sqrt{n}$ superclusters), hence yielding the round complexity of $O(n^{1/2-1/k})$ (for fixed $k$).
For a supercluster to compute the number of its (unmarked) neighbors, all cluster centers in a given supercluster should be able to communicate efficiently. For that purpose, we make sure that the cluster centers in each supercluster are connected by an $O(2^k)$-depth tree\footnote{This bound arises in Appendix \ref{append:largediameter} and will be discussed later on.}, and that the trees of different superclusters are edge-disjoint. These trees will allow us to aggregate information to leader of each supercluster in parallel, see Fig. \ref{fig:smalllarge}.
\begin{figure}[h!]
\begin{center}
\includegraphics[scale=0.25]{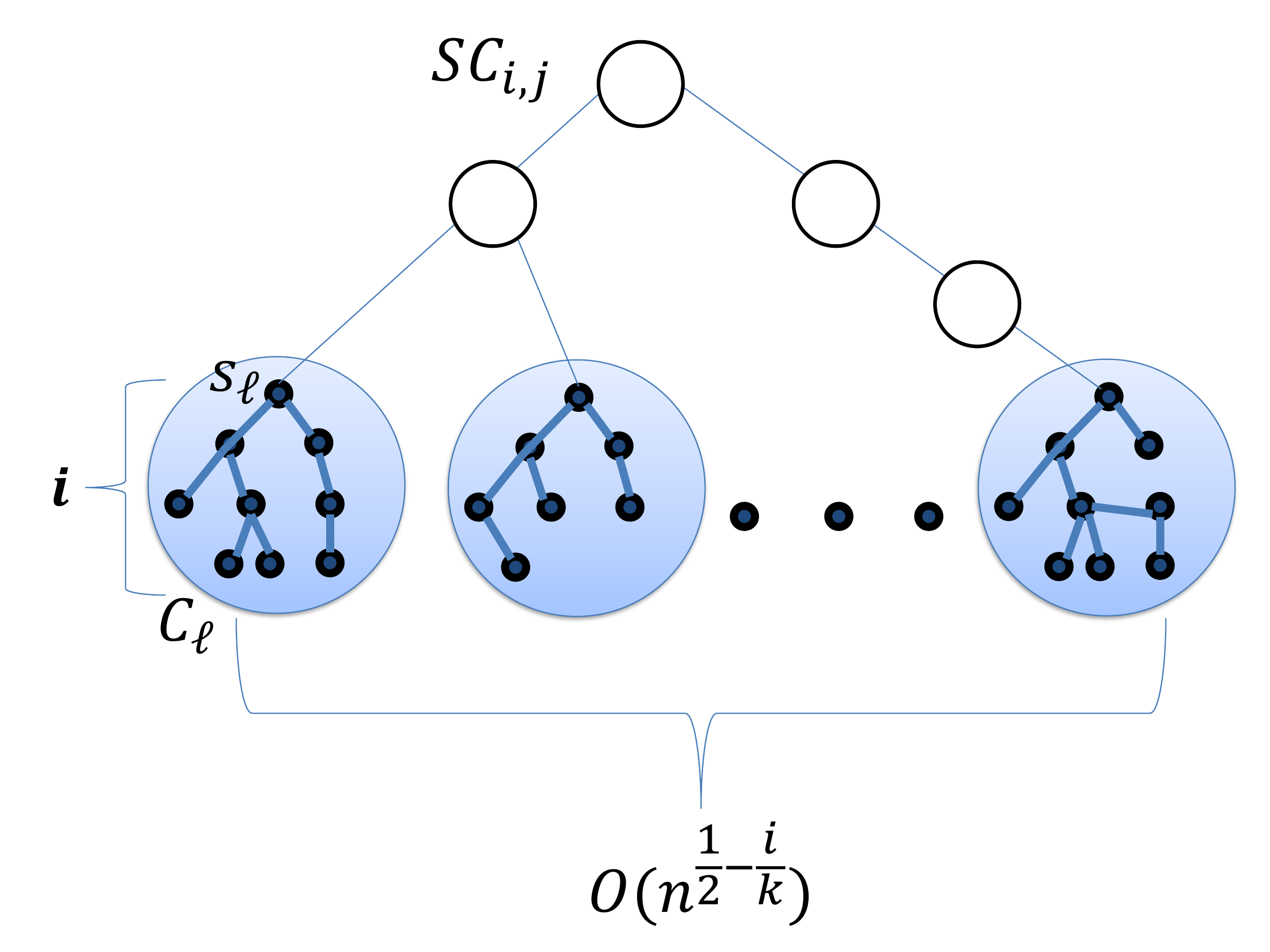}
\includegraphics[scale=0.25]{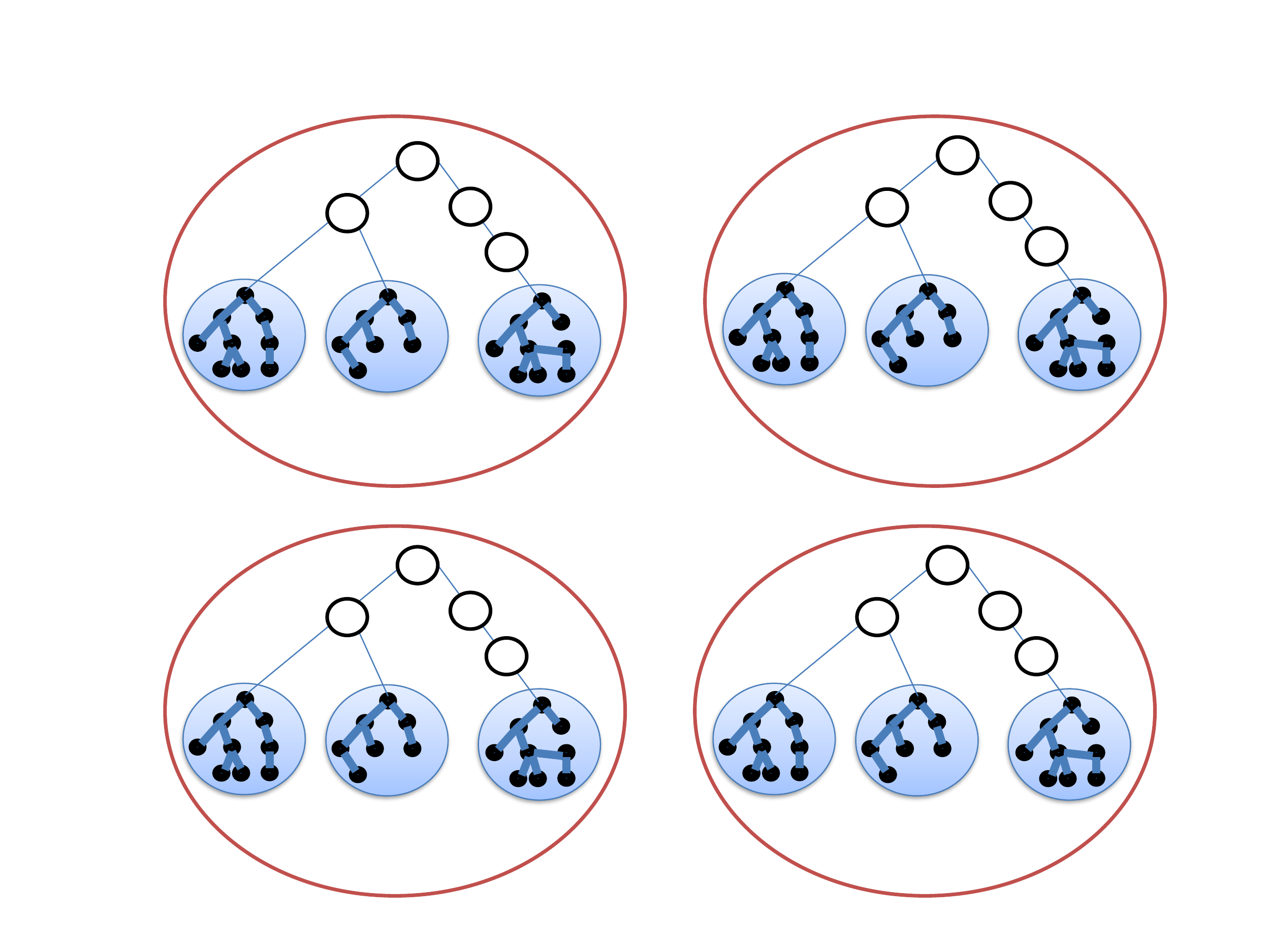}
\caption{
\label{fig:smalllarge} Illustration of the superclusters. Left: Baswana-Sen clusters are shown in blue circles. 
Each such cluster $C_\ell$ contains a center $s_\ell$ and a BFS tree of depth $i$ rooted at $s_\ell$ spans all the vertices of the cluster. The supercluster $\supercluster_{i,j}$ groups these clusters by adding a low-diameter tree that connects all the cluster centers. It might be the case that the vertices connecting the centers are vertices that belong to clusters; either in the same supercluster or in another one. Since the trees are edge disjoint, this does not increase congestion (by more than factor 2) in any case. Right: The collection of all $O(n^{1-i/k})$ clusters in the $i^{th}$-level of the clustering grouped into $O(\sqrt{n})$ superclusters.}
\end{center}
\end{figure}

\paragraph{Defining the Superclusters.}
Let $\mathcal{C}_i$ be a collection of $O(n^{1-i/k})$ $i$-clusters. 
A \emph{supercluster} $\supercluster_{i,j}=\left\{C_{j_1}, \ldots, C_{j_\ell}\right\}$ is a collection of clusters from $\mathcal{C}_i$. A \emph{Superclustering} $\superclustering_i=\{\supercluster_{i,1},  \ldots, \supercluster_{i,p}\}$ is a covering partition of all clusters from $\mathcal{C}_i$. That is, $\bigcup_{j=1}^p\supercluster_{i,j}=\mathcal{C}_i$, and the superclusters are cluster-disjoint (every cluster in $\mathcal{C}_i$ belongs to exactly one supercluster).
To select the cluster centers of level $i$, the algorithm constructs in each phase $i \in \{1, \ldots, k/2\}$ a superclustering $\superclustering_i$ which satisfies some helpful properties. We call a superclustering satisfying these properties a \emph{nice superclustering}. Before defining the properties of a nice supercluster, we introduce some notation.
For a supercluster $\supercluster_{i,j}=\left\{C_{j_1}, \ldots, C_{j_p}\right\}$, let $V(\supercluster_{i,j})=\bigcup_{C \in \supercluster_{i,j}}C$ be the set of all vertices in its clusters and 
$N_V(\supercluster_{i,j})=|V(\supercluster_{i,j})|$ be the number of vertices in the supercluster $\supercluster_{i,j}$. Also, let $N_C(\supercluster_{i,j})$ denote the number of clusters that the supercluster $\supercluster_{i,j}$ contains.
A supercluster $\supercluster_{i,j}$ with only one cluster (i.e., $N_C(\supercluster_{i,j})=1$) is called a \emph{singleton}. In addition, a singleton supercluster is called a \emph{small-singleton} if $N_V(\supercluster_{i,j})\leq \sqrt{n}$ (otherwise, if $N_V(\supercluster_{i,j})>\sqrt{n}$, it is a \emph{large-singleton}).  
Our $(2k-1)$-spanner construction is based upon the construction of superclusters with some \emph{nice} useful properties, as defined next. 

\paragraph{Nice Superclustering.}
A superclustering $\superclustering_i=\{\supercluster_{i,1},  \ldots, \supercluster_{i,\ell}\}$ is \emph{nice} if it contains $\ell=O(\sqrt{n})$ superclusters, and each of these superclusters $\supercluster_{i,j}\in \superclustering_i$ satisfies the following:
\begin{description}\label{desc:nice}
\item{(N0) [Singleton]}
If $N_V(\supercluster_{i,j})=\Omega(\sqrt{n})$, then $N_C(\supercluster_{i,j})=1$.
\end{description} 
Every non-singleton supercluster $\supercluster_{i,j}$ (i.e., every supercluster containing at least two clusters) satisfies:
\begin{description}
\item{(N1) [Cluster Balance]}
$N_C(\supercluster_{i,j})=O(n^{1/2-i/k})$, and 
\item{(N2) [Vertex Balance]}
$N_V(\supercluster_{i,j})=O(\sqrt{n})$.
\item{(N3) [Connectivity]}
In the graph $G$, each $\supercluster_{i,j} \in \superclustering_i$ has a tree $T(\supercluster_{i,j})$ of depth\footnote{When the diameter of the original graph $G$ is $O(1)$, the diameter of $T(\supercluster_{i,j})=O(1)$. The term $O(2^{k})$ appears when dealing with graphs of large diameter, and will become clear in Section \ref{append:largediameter}.} $O(2^{k})$. In addition, the trees $T(\supercluster_{i,1}), \ldots, T(\supercluster_{i,\ell})$ are edge-disjoint. 
\end{description}
\dnsparagraph{Intuitive discussion of these properties}
Property (N0) implies that if a supercluster has many vertices (more than $\sqrt{n}$), then it is a singleton supercluster.
Property (N1) implies that non-singleton superclusters with at least two clusters are balanced with respect to the number of \emph{clusters} from $\mathcal{C}_i$ that they contain. Since there are $O(n^{1-i/k})$ clusters in the $i^{th}$ clustering, dividing it ``fairly" between $\sqrt{n}$ superclusters yields this bound. 
Property (N2) also implies a balance among non-singleton superclusters, but this time with respect to the number of vertices. Finally, Property (N3) provides the existence of a $O(2^k)$-depth tree that connects the cluster centers of that supercluster. This ``weird" looking depth of $O(2^k)$ shows up when computing the $0^{th}$-level superclustering for general graphs (for graphs of constant diameter a much simpler construction exists). In particular, it shows up in Step (SI) of Alg. $\ConstructZeroSuper$ described in Appendix \ref{append:largediameter}. Finally, (N4) requires these trees to be edge-disjoint to allow communication within different superclusters, \emph{in parallel} without congestion.

As will be shown in the next subsection, to satisfy Properties (N1) and (N2), the construction of the $i^{th}$-level of superclustering requires to partition both the vertices and the clusters into \emph{balanced} the $\sqrt{n}$ superclusters. The next lemma describes the key tool to achieve it. 

\paragraph{The Balanced Partitioning Lemma.}
The input to the partitioning lemma is a vertex-weighted tree $T$, where every vertex $v$ in $T$ has a non-negative weight $w(v)$ and in addition, we are given a bound $B$ on the allowed total weight of each tree.
The goal is to partition the tree into 
edge-disjoint subtrees, such that, all but one of the subtrees have a weight in $[B,2B]$.
The lemma achieves this but with some subtle specification. It partitions the vertices of the tree $T$ into $p$  disjoint sets: $\widehat{V}(T_0), \widehat{V}(T_1), \ldots, \widehat{V}(T_p)$. The total weight of each set $\widehat{V}(T_i)$, except for at most one, $\widehat{V}(T_0)$, is bounded by $[B,2B]$. Hence, the partition respects the weight bound. Next, each set $\widehat{V}(T_i)$ is connected by a subtree $T_i \subseteq T$. The important feature of these trees $T_i$ is that they might contain an additional vertex $v \in V(T) \setminus \widehat{V}(T_i)$. This additional vertex $v$, if exists, is the root of $T_i$ and it is essential to connect the vertices in $\widehat{V}(T_i)$. Intuitively, this additional vertex helps us to communicate between the vertices of $\widehat{V}(T_i)$. Even though the trees $T_i$ are not vertex disjoint, they are shown to be edge disjoint, which is sufficient for our applications. 
\begin{lemma}[Balanced Partitioning Lemma]\label{lem:balancedpartitioning}
\label{thm:partition}
In $O(\diam(T))$ rounds, one can construct subtrees $T_0, T_1,\ldots, T_p \subseteq T,$ with roots $r(T_0), r(T_1), \ldots r(T_p)$ and corresponding \emph{disjoint} vertex sets $$\widehat{V}(T_0), \widehat{V}(T_1), \ldots, \widehat{V}(T_p)$$ such that: 
\begin{description}
\item{(D1)}
The $\widehat{V}(T_i)$ sets are vertex disjoint and $\bigcup_{i=1}^p\widehat{V}(T_i)=V(T)$.
\item{(D2)}
$W(T_i)\in [B,2B]$ for every $i \geq 1$, and $W(T_0)\leq 2B$ where $W(T_i)=\sum_{u \in \widehat{V}(T_i)}w(u)$.
\item{(D3)}
$V(T_i)=\widehat{V}(T_i)\cup r(T_i)$.
\item{(D4)}
All $T_0, \ldots, T_p$ are edge-disjoint and with diameter at most $\diam(T)$.
\end{description}
\end{lemma}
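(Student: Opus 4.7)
The plan is to root $T$ at an arbitrary vertex $r$ and carry out a bottom-up pass that, at each vertex, greedily packs pending descendant weight into bins of weight roughly in $[B, 2B)$. Rooting the tree, computing each subtree's total weight $W(v)$, and ordering vertices by depth can all be done in $O(\diam(T))$ rounds via standard BFS. The remaining work is a second $O(\diam(T))$-round bottom-up wave that performs the packing decisions locally, followed by a top-down broadcast of part-IDs.

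The core invariant I would maintain is: after vertex $v$ finishes processing, it holds a ``pending'' set $P(v)$ of vertices lying in its subtree, not yet assigned to any part, inducing a connected subtree containing $v$, and with total weight strictly less than $B$. A leaf $v$ initializes $P(v)=\{v\}$. An internal vertex $v$ with children $u_1,\ldots,u_k$ receives $P(u_1),\ldots,P(u_k)$, each of weight $<B$, and sweeps through them in an arbitrary order, maintaining a current bin; the moment the bin's weight reaches $B$, it is finalized as a new part $T_j$ with $r(T_j)=v$ and $\widehat{V}(T_j)$ equal to the union of the $P(u_\ell)$'s placed in the bin, and a fresh empty bin is started. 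Since each unit has weight $<B$, every finalized bin has weight in $[B,2B)$. After the sweep, the (possibly empty) leftover bin is combined with $\{v\}$ to form $P(v)$; if its weight already exceeds $B$, we finalize it immediately as an additional part rooted at the parent of $v$ and reset $P(v)=\emptyset$. Under the natural assumption that $w(v)<B$ for every vertex (which holds in every application of the lemma in this paper, since weights are vertex counts and $B=\Theta(\sqrt{n})$), this keeps $w(P(v))<B$. At the very end, whatever remains in $P(r)$ becomes $T_0$, with weight strictly less than $B\le 2B$.

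Properties (D1)--(D4) then follow from inspection of the construction: every vertex is placed in exactly one $\widehat{V}(T_i)$ (D1); every finalized bin has weight in $[B,2B)$ and $W(T_0)<2B$ (D2); each $\widehat{V}(T_j)$ excludes its designated root by construction (D3); each $T_j$ is a connected subtree of $T$ whose edge set consists of the internal edges of $\widehat{V}(T_j)$ together with the single edge joining $r(T_j)$ to the component of $\widehat{V}(T_j)$, and distinct parts never share an edge, while the diameter bound is trivial because each $T_j \subseteq T$ (D4). The main obstacle I expect is the boundary case where the leftover bin at $v$, once $\{v\}$ is added, straddles $B$, since naively propagating it up could violate the pending-weight invariant; the finalize-on-overflow step above handles this cleanly, but a full writeup must carefully state when the new part is rooted at $v$ versus its parent so that (D3) is respected. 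Each level of the bottom-up wave requires only that every vertex send its pending bin's weight (an $O(\log n)$-bit value) to its parent, so the full round complexity is $O(\diam(T))$ as claimed.
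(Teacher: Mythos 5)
Your proposal is correct and takes essentially the same approach as the paper: its recursive combination of child-subtree leftover sets at the current root (which serves as a shared auxiliary vertex connecting the grouped leftovers, keeping the resulting subtrees edge-disjoint) is exactly your bottom-up wave that bins children's pending sets of weight $<B$ into parts of weight in $[B,2B)$ and passes one small pending set upward. The assumption you flag, that each individual weight is at most $B$, is also used implicitly in the paper's own argument, so it is not a point of divergence.
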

Intuitively, the important vertex set of the tree $T_i$ is the set of vertices $\widehat{V}(T_i)$ and hence the weight of the tree in Property (D2) is defined by summing over all these vertices (instead of summing over all vertices in the tree). Property (D3) implies that the tree $T_i$ might contain, in addition to $\widehat{V}(T_i)$, also an additional vertex -- its root -- that allows the connectivity of the set $\widehat{V}(T_i)$ in $T_i$. 
The full proof of Lemma \ref{lem:balancedpartitioning} appears Appendix \ref{append:partition}.
In the common application of this lemma, the tree $T$ is a tree that connects the cluster-centers of a given supercluster, these cluster-centers are given a weight (e.g., the size of their cluster) and the remaining vertices in $T$ are given a zero weight. The bound corresponds to the maximum allowed number of clusters (or vertices) in the supercluster (as in \ref{desc:nice}(N2,N3)). 

\def\APPENDBALANCEPART{
\begin{proof}
The idea of the proof is as follows: starting with root of the tree, we recursively partition each of its children subtrees. The challenge is in combining the resulting partitions. The main obstacle is that if each of the subtrees has a leftover set in its output partition, it is not clear how to combine the child-partitions at the root in a way that results in only one leftover set. We deal with that problem by merging leftover sets together. Since we only want the trees to be edge-disjoint, we allow the trees to share one vertex in common (e.g., the root). 

Let $r_1, r_2, \ldots, r_d$ be the children of the root $r$ in the tree $T$.
Our algorithm is recursive. For the recursion to hold, we need an additional auxiliary property:
\begin{description}
\item{(D5)}
$W(T_0)\leq 2B$ and $r(T_0)=r(T)$. If $W(T_0) \le B$ (we call such a $T_0$ the \textit{leftover set}), then $\widehat{V}(T_0)=V(T_0)$.
\end{description}

For the base case (if the tree contains only one vertex), we simply return the trivial partition. 
Otherwise, we recursively apply the algorithm to each of the subtrees of $r_1, \ldots, r_d$. We now wish to combine these $d$ partitions to create a single balanced partition. Note that the union of these partitions satisfies Properties (D3) and (D5). Also, it satisfies Property (D1) for all vertices except for the root $r$ of $T$.
Property (D2) is not yet satisfied since there might be small leftover sets in each of the recursive partitions. Therefore, when combining the partitions we may have up to $d$ small sets. In order to deal with this, we combine leftover sets together so that there will be at most one small leftover set (all other sets will be combined so their total weight is at least $B$).

Suppose the leftover sets $S_1, S_2, \ldots, S_d$ have weights $$W(S_1), W(S_2), \ldots, W(S_d) \le B.$$ 
We iterate over these subsets in order and group them together until the combined set $S'_1$ has weight larger then $B$ (i.e., $S'_1$ is the union of $S_1, S_2, \ldots, S_i$, for the smallest $i$ such that the combined weight is larger than $B$). We then proceed combining the sets $S_{i+1},...S_{i+j}$, until we get another set $S'_2$ of weight at least $B$.
By that, we get a collection of new sets $S'_1, \ldots, S'_\ell$, where each except perhaps for the last one, $S'_\ell$, has weight in $[B,2B]$. 
We then add the root $r$ to the trees $T'_{j}$ of each of the $S'_j$ for $j \in \{1, \ldots, \ell-1\}$ but only as an auxiliary vertex to aid the communication. If the weight of the last set $S'_{\ell}$ is at least $B$, $r$ is added to $S'_{\ell}$ as an auxiliary communication vertex, and $r$ becomes a singleton leftover supercluster. Otherwise, if the weight of last set $S'_{\ell}$ is less than $B$, The root $r$ is added to the last set $S'_{\ell}$. Note that in this case, even when adding $r$ to $\widehat{V}(T'_{\ell})$, we still have a total weight less than $2B$ since $w(r)\leq B$. If the weight of $S'_\ell$ is at least $B$, we let $r$ be its own set $S'_{\ell+1}$. This guarantees property (D1) because all vertices other than $r$ are in some $\hat{V}$ by the recursion, and we ensure that $r$ is in one of the sets as well. Note that this also provides property (D2) for each of the $S'_{j}$ sets, and for the other sets (D2) holds by the recursion. 

By this construction only the last set of the $S_i$ can have weight less than $B$, providing property (D2). Since only the root is common to the $S'_{j}$ sets, the resulting trees are indeed edge disjoint, thus satisfying property (D4).
 
We now consider the distributed implementation. Note that at each step of the recursion, the root $r$ of the current subtree has all the information needed to decide on how to combine the leftover subsets. For each child $v$ whose subtree has a leftover subset $T_0$, the root $r$ sends to $v$ the ID of the new vertex set $T_i$ which $T_0$ combined with. The child $v$ can then propagate that information downward to the rest of the nodes in $T_0$ leftover subset, by that all the vertices know the tree $T_i$ to which they belong.
\end{proof}
}
%

\subsection{$(2k-1)$-Spanners in $O(2^{k}\cdot n^{1/2-1/k})$ Rounds} 
\label{sec:alg}
We first consider the construction for graphs with constant diameter. At the end of the section, and in Section \ref{append:largediameter} we discuss the extension for general graphs with arbitrary diameter.
Recall that for $i \leq k/2$, $\mathcal{C}_{i}$ is a clustering that contains $O(n^{1-i/k})$ vertex-disjoint clusters centered at the vertices $Z_i$. The set of $i$-clustered vertices $V_i$ are in $\Gamma_i(Z_i)$.

The first part of the algorithm contains $k/2$ phases. In each phase $i\in \{1, \ldots, k/2\}$, we are given a $(i-1)^{th}$ nice superclustering $\superclustering_{i-1}$ (whose superclusters contain the clusters of $\mathcal{C}_{i-1}$) and the current spanner $H$. We then construct the $i^{th}$ nice superclustering $\superclustering_i$ and add edges to $H$ in order to take care of the newly unclustered vertices in $V_{i-1}\setminus V_i$. At the end of the first part, we have a $(k/2)^{th}$ superclustering $\superclustering_{k/2}$ with $O(\sqrt{n})$ clusters. At that point, the number of clusters is small enough, and so Alg. $\NaiveSpanner$ can be applied.

\paragraph{Constructing the $0^{th}$-level superclustering $\superclustering_{0}$ in $O(\diam(G))$ rounds.}
To compute $\superclustering_{0}$, we apply the Partitioning Lemma \ref{lem:balancedpartitioning} on a BFS tree $T$ rooted at some arbitrary vertex (e.g., of maximum ID) using weights of $w(v)=1$ for each $v \in V$ and bound $B=O(\sqrt{n})$. This partitions the vertices into $\Theta(\sqrt{n})$ subsets $S_{i}$, each of size $O(\sqrt{n})$. Each such subset $S_i=\{v_{i,0}, \ldots, v_{i,\ell}\}$ defines a supercluster $\supercluster_{0,i}=\{\{v_{i_0}\}, \ldots, \{v_{i,\ell}\}\}$ containing the singleton clusters of $S_i$'s vertices.  
By that, we get $O(\sqrt{n})$ superclusters $\superclustering_{0}=\{\supercluster_{0,1}, \ldots, \supercluster_{0,\sqrt{n}}\}$. By the partitioning lemma, we also have a tree $T_i$ for each $\supercluster_{0,i}$, satisfying Prop. (N3).  
%

\paragraph{The $i^{th}$ phase of Algorithm $\ImprovedSpanner$ for $i\in \{1, \ldots, k/2\}$.}
At the beginning of the phase, we are given the $(i-1)^{th}$-clustering $\mathcal{C}_{i-1}$ grouped into 
the nice superclustering $\superclustering_{i-1}$. 
Our first goal is to use the superclustering $\superclustering_{i-1}$ to define the set of new $O(n^{1-i/k})$ cluster centers $Z_i$. The high-level idea here is to implement Alg. $\NaiveSpanner$ on each supercluster rather than on each cluster. Given a set $U$ of unmarked vertices and a supercluster $\supercluster \in \superclustering_{i-1}$, define its \emph{unmarked neighborhood} and \emph{unmarked degree} by
\begin{equation}
\label{eq:lowexp}
\Gamma^{U}(\supercluster)=\bigcup_{v \in V(\supercluster)}(\Gamma^+(v)\cap U) \mbox{~~and~~} \deg^U(\supercluster)=|\Gamma^{U}(\supercluster)|~.
\end{equation}
%
Similarly to before, we say that a supercluster $\supercluster$ is a \emph{local-maxima} in its unmarked neighborhood, if for every other $\supercluster'$ such that  
$\Gamma^{U}(\supercluster)\cap \Gamma^{U}(\supercluster')\neq \emptyset$, it holds that $$(\deg^U(\supercluster),ID(\supercluster))>(\deg^U(\supercluster'),ID(\supercluster')).$$ 

We say that supercluster $\supercluster$ has \emph{low-expansion} if $\deg^U(\supercluster)\leq n^{1/2+1/k}$. Otherwise, it has \emph{high-expansion}. Note that unlike the previous algorithms presented before, here the expansion threshold $n^{1/2+1/k}$ is independent\footnote{The intuition is that in each level $i$, the superclusters have at most $\sqrt{n}$ vertices, and we say that it has high expansion if the size of its neighborhood size is factor $n^{1/k}$ larger.}  of the level $i$. See Table \ref{table:notation} for a summary of notation.
\begin{table}[h!]
\begin{tabular}{{c}{l}r}
Notation              & Meaning & Page\\
\hline
$\mathcal{C}_i$  & {\shortstack{The $i$-level of clustering containing $O(n^{1-i/k})$ clusters}}
& \pageref{sec:kspanner}\\

$Z_i$ & The cluster centers of $\mathcal{C}_i$.  & \pageref{sec:kspanner}\\ 

$V_i$   & The vertices that are $i$-clustered, i.e., $V_i=\Gamma_i(Z_i)$       & \pageref{sec:kspanner}\\

$\superclustering_i$ & {\shortstack{Superclustering at level $i$: \\A partition of the $O(n^{1-i/k})$ clusters of $\mathcal{C}_i$ into $\sqrt{n}$ groups}} &
\pageref{sec:supercluster} \\ 

$\supercluster_{i,j}$ & {\shortstack{Supercluster in $\superclustering_i$ containing a subset of $O(n^{1/2-i/k})$ clusters \\belonging to $\mathcal{C}_i$}} &
\pageref{sec:supercluster} \\ 

$\Gamma^U(C), \deg^U(C)$ & The unmarked neighborhood, degree of cluster $C$ & \pageref{sec:naive} \\ 

$\Gamma^U(\supercluster), \deg^U(\supercluster)$ & The unmarked degree of supercluster $\supercluster$ &
\pageref{sec:alg} \\ \\
Local-maxima (super)cluster  & {\shortstack{(Super)cluster that has the maximum unmarked degree \\in its unmarked neighborhood, breaking ties based on ID}}
&
\pageref{sec:naive},\pageref{sec:alg} \\
Successful (super)cluster   & {\shortstack{Local-maxima supercluster (cluster) with large unmarked degree}}
&
\pageref{sec:naive},\pageref{sec:alg} \\
\end{tabular}
\caption{List of notation}
\label{table:notation}
\end{table}

\paragraph{Step (S1) of phase $i$: Selecting the centers $Z_i$.}
Selecting the $O(n^{1-i/k})$ cluster centers of $Z_i$ is done in $O(n^{1/2-1/k})$ iterations. We start by unmarking all vertices. At each iteration, we have a set $U$ of remaining unmarked vertices and a subset of remaining superclusters $\superclustering'_{i-1}$ of superclusters whose cluster centers have not yet been added to $Z_i$. All superclusters $\supercluster \in \superclustering'_{i-1}$ compute their unmarked degree $\deg^U(\supercluster)$ in parallel. (This can be done in $O(i\cdot 2^{k})$ rounds thanks to Prop. (N3) in Desc. \ref{desc:nice}). 
\begin{definition}[Successful Supercluster]
A supercluster $\supercluster$ that is local-maxima in its unmarked neighborhood and has high-expansion, that is  $\deg^U(\supercluster)\geq n^{1/2+1/k}$, is called a \emph{successful} supercluster.
\end{definition}
It is easy to see that the leader (vertex $v$ of maximum ID in $V(\supercluster)$) of every supercluster $\supercluster$ can verify in $O(i\cdot 2^{k})$ rounds whether it is a local-maxima in its unmarked neighborhood.

In the algorithm, each successful supercluster $\supercluster$ adds all its cluster centers to $Z_i$, and mark all the vertices in $\Gamma^{U}(\supercluster)$. This continues for $O(n^{1/2-1/k})$ iterations. 

As we will show in the analysis section, since the unmarked neighborhoods of successful superclusters are disjoint and large, there are at most $O(n^{1/2-1/k})$ such superclusters. In addition, by Prop. (N2), each supercluster has $O(n^{1/2-(i-1)/k})$ clusters, overall $|Z_i|=O(n^{1/2-(i-1)/k} \cdot n^{1/2-1/k})=O(n^{1-i/k})$ as desired.

\paragraph{Step (S2) of phase $i$: Taking care of unclustered vertices.}
After $O(n^{1/2-1/k})$ steps of computing successful superclusters, all remaining superclusters $\supercluster$ have low-expansion with respect to the remaining unmarked vertices $U'$. That is, $\deg^U(\supercluster)\leq n^{1/2-1/k}$. First, we take care of the singleton superclusters.\\ \\
\textbf{(S2.1): Singleton supercluster $\supercluster$ with low-expansion:}
Each unmarked vertex $u \in U'$ add to $H$ an edge to one of its neighbor in $\Gamma(u)\cap V(\supercluster)$. Since there are $O(\sqrt{n})$ superclusters, each with $\deg^{U'}(\supercluster)\leq n^{1/2+1/k}$, overall we add $O(n^{1+1/k})$ such edges. \\\\
\textbf{(S2.2): Non-singleton superclusters $\supercluster$ with low-expansion:}
Here, the construction of sparser spanners for bipartite graphs comes into play (see Sec. \ref{sec:unbalancedbipartite}). 
Recall that by Prop. (N2), $N_V(\supercluster)=O(\sqrt{n})$ vertices. 
Let $\Gamma^{U',-}(\supercluster)=\Gamma^{U'}(\supercluster)\setminus V(\supercluster)$ be the unmarked neighbors of $\supercluster$ excluding the vertices of the supercluster $\supercluster$. Since $\supercluster$ has low-expansion, it also holds that  $|\Gamma^{U',-}(\supercluster)|=O(n^{1/2-1/k})$. For every such supercluster $\supercluster$, we consider 
the bipartite graph $B(\supercluster)=(V(\supercluster),\Gamma^{U',-}(\supercluster))$, and apply Alg.
$\BipartiteSpanner$ to compute for it a $(2k-1)$-spanner $H(\supercluster)\subseteq B(\supercluster)$ with $O(n^{1/2+1/k})$ edges (see Lemma \ref{bipartite}). This is done for all the graphs $B(\supercluster)$ in parallel.

Note that the graphs $B(\supercluster)$ are not necessarily vertex disjoint since an unmarked vertex can appear in several such graphs. The key observation that allows the parallel computation of all these spanners, is that every edge $(u,v)$ can belong to at most \emph{two} bipartite graphs, say, $B(\supercluster)$ and $B(\supercluster')$, where $\supercluster,\supercluster'$ is the supercluster of $u,v$ respectively\footnote{Recall that the superclusters share no vertex in common.}. 
Overall, since there are $O(\sqrt{n})$ superclusters, this adds $O(n^{1/2} \cdot n^{1/2+1/k})=O(n^{1+1/k})$ edges.

Finally, it remains to take care of all edges between vertices belonging to the \emph{same} supercluster. Note that in Alg. $\NaiveSpanner$, there was no need for such a step since all vertices belonging to the same cluster are connected in $H$ by an $i$-depth BFS tree rooted at the cluster center. However, in our setting, vertices that belong to \emph{different} clusters of the \emph{same} superclusters might still have large stretch (as cluster centers of the same supercluster might be at distance $O(2^k)$ in $G$). At that point, we use the fact that all superclusters are vertex disjoint and each contains $O(\sqrt{n})$ vertices. We then recursively apply the algorithm $\ImprovedSpanner$ on each of these superclusters in parallel. That is, we apply $\ImprovedSpanner$ on the induced subgraph on $V(\supercluster)$ for every such supercluster $\supercluster$.

Note that since in each phase we unmark all the vertices, unclustered vertices can become clustered again and in particular, edges between newly unclustered vertices and clustered vertices will be taken care of later on. This completes the description of the second step. See Fig. \ref{fig:lowexpansion} for an illustration.
\begin{figure}[h!]
\begin{center}
\includegraphics[scale=0.35]{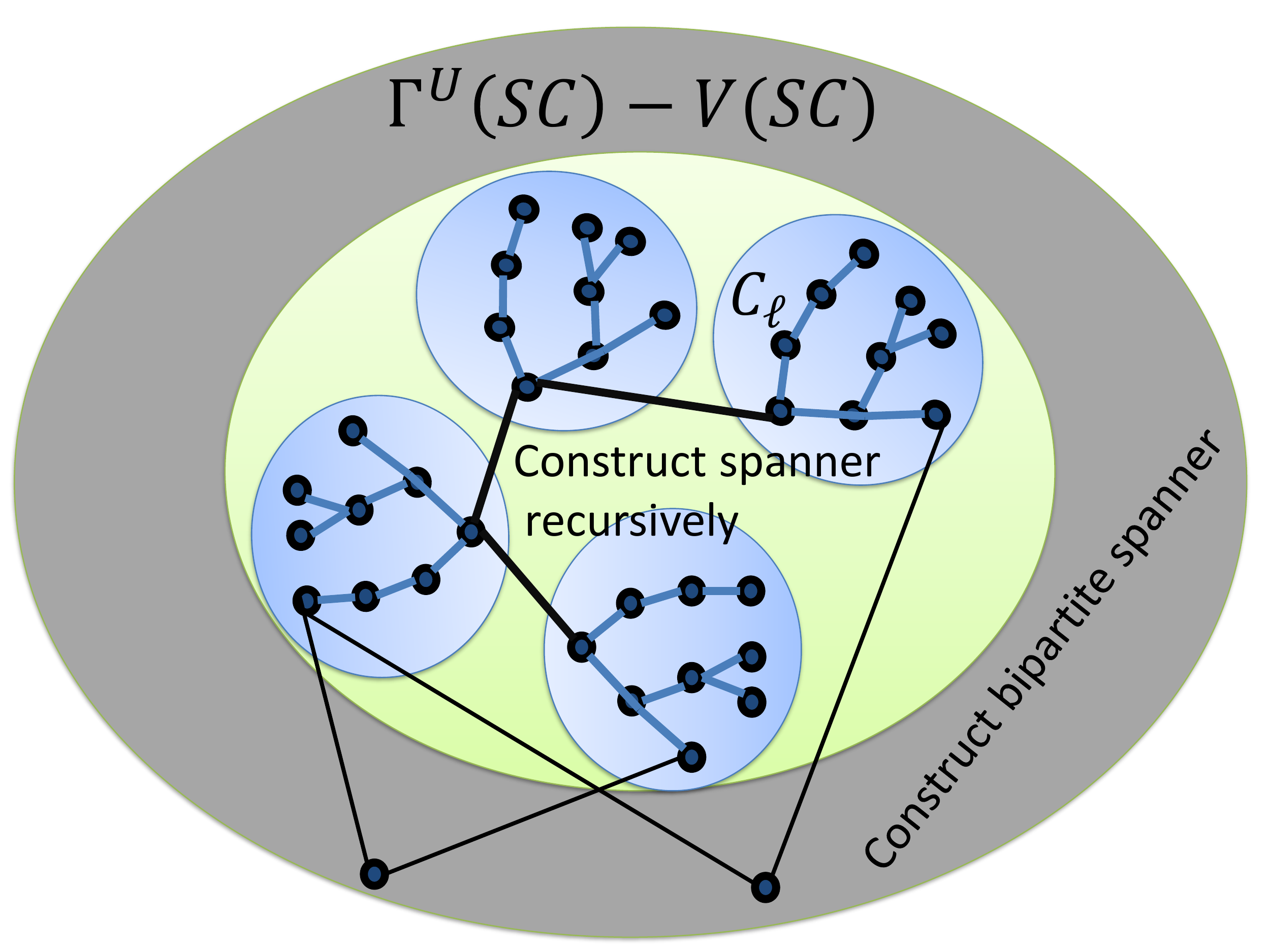}
\caption{
\label{fig:lowexpansion} Shown is supercluster $\supercluster$ of low-expansion. To provide a good stretch for the edges between its vertices -- we apply Alg. $\ImprovedSpanner$ recursively. To provide a good stretch between the supercluster vertices and its unmarked neighbors, we apply Alg. $\BipartiteSpanner$.}
\end{center}
\end{figure}

\paragraph{Steps (SIII) and (SIV): Defining $i^{th}$-clustering and the $i^{th}$-superclustering:}
The clusters $\mathcal{C}_{i}$ centered at the cluster centers $Z_i$ computed at step (SI) are computed exactly as in Baswana-Sen Algorithm. The depth $i$-trees of these clusters are added to the spanner. The main challenge here is to \emph{re-group} the new $O(n^{1-i/k})$ clusters into $O(\sqrt{n})$ superclusters, in a way that satisfies all the properties of the nice superclustering mentioned in Desc. \ref{desc:nice}. 

Our starting point is as follows: we have a collection of $O(n^{1/2-1/k})$ successful superclusters $\supercluster \in \superclustering_{i-1}$ whose cluster centers joined $Z_i$. Since $\superclustering_{i-1}$ is nice, by Prop. (N3), each such supercluster $\superclustering$ has a tree $T(\superclustering)$ of depth $O(2^{k})$ that spans all its cluster centers. 

First, we let each cluster $C \in \mathcal{C}_i$ with $\Omega(\sqrt{n})$ vertices, to define its own singleton superclusters. Since clusters are vertex disjoint, there are $O(\sqrt{n})$ such superclusters. It now remains to re-group the remaining clusters of $\mathcal{C}_i$ into $O(\sqrt{n})$ superclusters. 

For each successful supercluster $\supercluster$, we now consider only its centers of clusters with $O(\sqrt{n})$ vertices. First, we consider Property (N1) and use Lemma \ref{lem:balancedpartitioning} with the tree $T(\supercluster)$, weights $w(z) = 1$ for every cluster-center $z$ of $\supercluster$ (only those that have $O(\sqrt{n})$ vertices in their cluster) and bound $B = O(n^{1/2-i/k})$. All other vertices $v'$ in $T(\supercluster)$ have $w(v') = 0$ (in particular, the centers $z$ of clusters in $\supercluster$ which have been turned into singleton superclusters, we set $w(z) = 0$). 
By Prop. (N2) for $\superclustering_{i-1}$, we know that $\supercluster \in \superclustering_{i-1}$ has $O(n^{1/2-(i-1)/k})$ cluster centers. Hence, the partition procedure will partition each of these superclusters into $O(n^{1/k})$ superclusters $\supercluster_{1}, \ldots, \supercluster_{\ell}$. In addition, by Lemma \ref{lem:balancedpartitioning}(D5), all these resulting superclusters $\supercluster_{j}$ are equipped with edge-disjoint trees $T(\supercluster_j)$ of diameter $O(2^{k})$. 
Since there are $O(n^{1/2-1/k})$ successful superclusters, overall after this partition there are $O(n^{1/2-1/k})\cdot O(n^{1/k})$ superclusters. 

We then turn to property (N3), and farther partition the superclusters to obtain a balance partition of the \emph{vertices} into superclusters. For that purpose, for each supercluster $\supercluster'$ (obtained from the step above), we again apply the Partitioning Lemma on $T(\supercluster')$. This time we use $B=\sqrt{n}$ and the weight $w(z)$ of each cluster center $z$ in $\supercluster'$ is the number of vertices in its cluster $C$, that is $w(z)=|C|$ (for clusters $v'$ which have turned into singleton superclusters, or any other non-center vertex in $T(\supercluster')$, we set $w(v') = 0$). Since the vertices of superclusters are disjoint, this step increase the number of superclusters only by an additive $O(\sqrt{n})$ term. Hence, overall the number of superclusters is kept bounded by $O(\sqrt{n})$. This completes the description of the $i^{th}$ phase of Alg. $\ImprovedSpanner$.

\paragraph{The terminating step $k/2$.}
At the $(k/2)^{th}$ step we have $O(\sqrt{n})$ superclusters, each containing $O(1)$ clusters, hence overall we have $O(\sqrt{n})$ clusters. Now we can afford using Algorithm $\NaiveSpanner$ (described near the beginning of Section \ref{sec:kspanner}), which iterates over the \emph{clusters} one by one. This completes the description of the algorithm for graph with $\diam(G)=O(1)$. For a summary of the algorithm, see below.
\newpage
\begin{mdframed}[hidealllines=false,backgroundcolor=gray!30]
\centering  \textbf{\large  Phase $i$ of Algorithm $\ImprovedSpanner$} \\
\begin{flushleft}
\vspace{-5pt}
\textbf{(SI): Defining the centers $Z_i$.}\\
Set $\superclustering'_{i-1}\gets \superclustering_{i-1}$, $U=V$ and for $O(n^{1/2-1/k})$ steps do the following: 
\begin{itemize}
\item Each remaining supercluster $\supercluster_{i,j} \in \superclustering'_{i-1}$ computes $\deg^U(\supercluster_{i,j})$. 
\item The cluster-centers of local-maxima superclusters $\supercluster_{i,j} \in \superclustering'_{i-1}$ and with $\deg^U(\supercluster_{i,j}) \geq n^{1/2+1/k}$ join $Z_{i}$. We call these superclusters \emph{successful}. 
\item Remove the successful superclusters $\supercluster_{i-1,j}$ from $\superclustering'_{i-1}$ and mark the
vertices $\Gamma^U(\supercluster_{i-1,j})$.
\end{itemize}
\textbf{(SII): Taking care of unclustered clusters.}
At that point, all remaining superclusters $\supercluster_{i-1,j}\in \superclustering'_{i-1}$ have $\deg^U(\supercluster_{i,j})\leq n^{1/2+1/k}$. 
\begin{itemize}
\item For every singleton supercluster $\supercluster_{i-1,j}\in \superclustering'_{i-1}$ do the following: 
\begin{itemize}
\item 
For each unmarked vertex $u \in \Gamma^U(\supercluster_{i-1,j})$, add one edge $(u,v)$ to some $v \in V(\supercluster_{i-1,j})\cap \Gamma(u)$. 
\end{itemize}
\item Consider all remaining non-singleton superclusters $\supercluster_{i-1,j}\in \superclustering'_{i-1}$.
\item Let $\Gamma^{U,-}_{i,j}=\Gamma^U(\supercluster_{i-1,j})\setminus V(\supercluster_{i-1,j})$.
\item Let $B_{i,j}=(V(\supercluster_{i-1,j}),\Gamma^{U,-}_{i,j})$ be the corresponding bipartite graph.
\begin{itemize}
\item \textbf{Taking care of edges in $V(\supercluster_{i-1,j})\times \Gamma^{U,-}_{i,j}$:}
\begin{itemize}
\item Apply Alg. $\BipartiteSpanner$ on each $B_{i,j}$ to compute $(2k-1)$-spanners $H_{i,j}$ and add these $H_{i,j}$ spanners to $H$.
\end{itemize}
\item \textbf{Taking care of edges in $V(\supercluster_{i-1,j})\times V(\supercluster_{i-1,j})$:}
\begin{itemize}
\item Apply Algorithm $\ImprovedSpanner$ recursively on $G(V(\supercluster_{i-1,j}))$. 
\end{itemize}
\end{itemize}
\end{itemize}
\textbf{(SIII): Forming the $\mathcal{C}_i$ clusters centered at $Z_i$.} As in Baswana-Sen. Add the edges of the $i$-depth trees of each cluster to the spanner. \\
\textbf{(SIV):  Defining the $i^{th}$-superclustering $\superclustering_i$.} 
\begin{itemize}
\item Let each cluster $C \in \mathcal{C}_i$ with $\Omega(\sqrt{n})$ vertices form its own supercluster (i.e., singleton supercluster). 
\item For each \emph{successful} supercluster $\supercluster_{i-1,j}$ do the following:
\begin{itemize}
\item \textbf{Balance w.r.t. number of clusters:} Use the Partitioning Lemma \ref{lem:balancedpartitioning} to partition the superclusters into cluster-balanced superclusters -- each containing  $O(n^{1/2-i/k})$ (small) clusters. 
\item \textbf{Balance w.r.t. number of vertices:} Use the Partitioning Lemma \ref{lem:balancedpartitioning} again, on each of the resulting superclusters from above, to further partition them into vertex-balanced  superclusters, each with $O(\sqrt{n})$ vertices.  
\end{itemize}
\end{itemize}
\end{flushleft}
\end{mdframed}

\subsection{Analysis of $\ImprovedSpanner$ for Graphs with $\diam(G)=O(1)$}
\paragraph{Stretch Analysis.}
We begin by showing that the output subgraph is a $(2k-1)$ spanner.
A  vertex $v$ is $i$-clustered if it belongs to $V_i$, i.e., belongs to one of the clusters $C$ belonging to a supercluster $\supercluster_{i,j}$ of the $i^{th}$-superclustering $\superclustering_i$. Otherwise, it is $i$-unclustered.
Consider an edge $(u,v) \in G \setminus H$. Let $r(u)$ be the largest index such that $u$ is $r(u)$-clustered. The index $r(v)$ is defined similarly.  Without loss of generality, assume that $r(u)\leq r(v)$. 

First, consider the case where $r(v)+1 < k$, let $i=r(v)+1$. Hence, $v$ is $(i-1)$-clustered but both $u$ and $v$ are $i$-unclustered. Let $v$ be in cluster $C\in\supercluster_{i-1,j}$. Since at step $i$, $u$ is unclustered, it remains unmarked after computing all the superclusters with high-expansion at the $i^{th}$ phase. \\
\textbf{Case 1:} $u$ and $v$ belong to the same cluster $C$. Since $H$ contains a BFS tree of depth $i$ that spans all the vertices of $C$, we have that $\dist(u,v,H)\leq 2i-1$. \\
\textbf{Case 2:} $u$ and $v$ belong to different clusters in the same supercluster $\supercluster_{i-1,j}$. 
In such a case, by definition, $\supercluster_{i-1,j}$ is not singleton. 
Recall that the graph $G'_{i,j}$ is the induced subgraph on $V(\supercluster_{i-1,j})$, hence $(u,v) \in G'_{i,j}$. Since the algorithm computes a $(2k-1)$-spanner to the graph $G'_{i,j}$, the edge $(u,v)$ has a stretch of at most $2k-1$ in $H$. \\
\textbf{Case 3:} $u$ and $v$ belong to clusters of different superclusters.  
In this case, the edge $(u,v)$ is the in the bipartite graph $B_{i,j}=(\supercluster_{i-1,j}, \Gamma^{U,-}_i(\supercluster_{i-1,j}))$, where $\Gamma^{U,-}_i(\supercluster_{i-1,j})=\Gamma^{U}_i(\supercluster_{i-1,j})\setminus V(\supercluster_{i-1,j})$. 
Since we add to $H$ a $(2k-1)$-spanner for $B_{i,j}$, the edge $(u,v)$ has a stretch of at most $(2k-1)$ in $H$.\\
\textbf{Case 4:} $u$ unclustered.
In this case, the edge $(u,v)$ is the in the bipartite graph $$B_{i,j}=(\supercluster_{i-1,j}, \Gamma^{U,-}_i(\supercluster_{i-1,j})).$$ Since we add to $H$ a $(2k-1)$-spanner for $B_{i,j}$, the edge $(u,v)$ has a stretch of at most $(2k-1)$ in $H$.

Finally, consider the case where $r(v)=k-1$. In the last stage the cluster containing $v$ adds an edge to each neighboring vertex, including $u$. Hence, since the radius of the cluster of $u$ is $k-1$, the distance in $H$ from $u$ to $v$ is at most $2k-1$, as desired.
%
%
%
%
%



\paragraph{Size Analysis.}
We proceed by showing that the output subgraph contains $O(k \cdot n^{1+1/k})$ edges.
At the step of computing the $i^{th}$-superclustering, we add the following edges to the spanner $H$. For the clusters in a supercluster that has high-expansion, we add BFS trees of depth $i$ rooted at their centers. Since the clusters are vertex-disjoint, in total we add $O(n)$ edge to the spanner. For each supercluster $\supercluster_{i-1,j}$ of low-expansion, we first use Lemma \ref{bipartite} to construct a $(2k-1)$-spanner for the bipartite graph 
$B_{i,j}=(\supercluster_{i-1,j}, \Gamma^{U,-}(\supercluster_{i-1,j}))$ with $O(k \cdot n^{1/2+1/k})$ edges. Since there are $O(\sqrt{n})$ superclusters in $\superclustering_{i}$, this adds $O(kn^{1+1/k})$ edges to the spanner $H$. 

Next, we bound the number of edges added in Step (SII) of phase $i$, due to the computation of the $(2k-1)$-spanners for each of the induced subgraphs $G'_{i,j}$ on the vertices of $V(\supercluster_{i-1,j})$ for every supercluster $\supercluster_{i-1,j}\in \superclustering_i$ of low-expansion. Since this is done only for non-singleton superclusters, by Prop. (N2) of the nice superclustering, $|V(\supercluster_{i-1,j})|=O(\sqrt{n})$. Since there are $O(\sqrt{n})$ superclusters in the superclustering $\superclustering_{i-1}$, in total it adds $O(\sqrt{n})\cdot O((\sqrt{n})^{1+1/k})=O(n^{1+1/2k})$ edges. 

After $k/2$ phases of Alg. $\ImprovedSpanner$, we have $O(\sqrt{n})$ clusters and at that point we apply Alg. $\NaiveSpanner$ on these remaining clusters (i.e., it is like starting Alg. $\NaiveSpanner$ from $(k/2)^{th}$ phase). This adds only $O(k n^{1+1/k})$ edges to the spanner by using the exact same argument as in Sec. \ref{sec:naive}. Finally, in the terminating part, where we only have $O(n^{1/k})$ clusters, we add $O(n^{1/k})$ edges for each vertex (i.e., each vertex $v$ adds an edge to one neighbor $u \in C \cap \Gamma(v)$ for every cluster $C$), for a total of $O(n^{1 + 1/k})$ edges as well. This completes the size analysis of the spanner.

\paragraph{Round Complexity.}
By Lemma \ref{lem:zerogeneral}, constructing the $0^{th}$-superclustering $\superclustering_0$ takes $O(2^{k} \cdot n^{1/2 - 1/k})$ rounds. 
We now claim that for $i \geq 1$, constructing the $i^{th}$-superclustering from the $(i-1)^{th}$-superclustering takes $O(2^{k} \cdot n^{1/2 - 1/k})$ rounds. By property (N3), the centers of all clusters in supercluster $\supercluster_{i-1,j}$ are connected by subgraph of depth $O(2^{k})$. Since these subgraphs are edge-disjoint, the unmarked degree of all the superclusters $\supercluster_{i-1,j} \in \superclustering_{i-1}$ can be computed in $O(i \cdot 2^{k})$ rounds.

We now consider step (S2).  By Lemma \ref{bipartite}, constructing a $(2k-1)$-spanners for a bipartite graphs $B_{i,j}=(V(\supercluster_{i-1,j}), \Gamma_i^{U,-}(\supercluster_{i-1,j}))$ takes $O(k \cdot n^{1/2-1/k})$ rounds, and since each edge is contained in at most two of these subgraphs, all these spanners can be constructed in parallel within $O(k \cdot n^{1/2 - 1/k})$ rounds (while slowing down the algorithm of Lemma \ref{bipartite} by a factor of 2, due to the congestion).

Next, the algorithm considers all the non-singleton superclusters $\supercluster_{i-1,j}$ with low-expansion and recursively computes a $(2k-1)$-spanner for the subgraph $G'_{i,j}$ (the induced subgraph on the vertices of 
$\supercluster_{i-1,j}$). Since the superclusters are vertex disjoint and this is done only for non-singleton superclusters $\supercluster_{i-1,j}$ which by Prop. (N2) consists of $O(\sqrt{n})$ vertices, it takes $O((\sqrt{n})^{1/2-1/k})=O(n^{1/4})$ rounds, for all these superclusters in parallel.

After $k/2$ phases, we have $\sqrt{n}$ clusters and at that point, applying Alg. $\NaiveSpanner$ for $k/2-1$ phases takes $O(k \cdot n^{1/2-1/k})$ rounds. The terminating part, where we have $O(n^{1/k})$ clusters, and each vertex adds an edge to a unique neighbor in each these clusters takes $O(1)$ rounds.

\paragraph{Extension for general graphs of diameter $\diam(G)$.}
The only step that requires adaptation is that of constructing the $0^{th}$-level superclustering $\superclustering_0$. In Appendix \ref{append:largediameter}, we show:
\begin{lemma}
\label{lem:zerogeneral}
One can construct in $O(2^{k} \cdot n^{1/2-1/k})$ rounds, nice superclustering $\superclustering_0=\{\supercluster_{0,1}, \ldots, 
\supercluster_{0,p}\}$ along with a subgraph $H'$ with $O(kn^{1+ 1/k})$ edges such that for every vertex $u$ not participating in the clusters of these superclusters $\supercluster_{0,j}\in \superclustering_0$, it holds that $\dist(u,v,H')\leq 2k-1$ for every $v \in \Gamma(u)$.
\end{lemma}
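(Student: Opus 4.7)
The plan is to mimic the iterative spanner construction of Algorithm $\ImprovedSpanner$ but applied to the trivial starting clustering of singletons, in order to build up the $0$-level superclustering for graphs of arbitrary diameter. Since a global BFS no longer yields an $O(1)$-depth tree, we must construct the low-depth trees that property (N3) requires by gluing together smaller pieces across $\Theta(k)$ phases, paying one factor of $2$ in the tree depth per phase and yielding the $O(2^{k})$ bound.

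Concretely, I would initialize each vertex as its own singleton pre-supercluster equipped with the trivial depth-$0$ tree. In phase $i \in \{1,\ldots,\Theta(k)\}$ we hold a collection of pre-superclusters each with an internal tree of depth $O(2^{i-1})$, and we run the local-maxima subroutine of $\NaiveSpanner$ on the pre-supercluster graph: for $O(n^{1/2-1/k})$ iterations, pre-superclusters that are local-maxima in their unmarked neighborhood and have expansion at least $n^{1/2+1/k}$ (``successful'' ones) absorb their unmarked neighboring pre-superclusters, which doubles the depth of the internal tree to $O(2^{i})$. Pre-superclusters that are not successful are instead ``frozen'': we handle their outgoing edges to unmarked neighbors using Lemma \ref{bipartite}, adding a $(2k-1)$-bipartite spanner with $O(k \cdot n^{1/2 + 1/k})$ edges between each frozen pre-supercluster (playing the role of $A$) and its set of unmarked neighbors (playing the role of $B$), and placing those edges into $H'$. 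Each vertex that is eventually never clustered is covered by one of these bipartite spanners at the phase in which its last neighboring cluster was frozen, which gives the $\dist(u,v,H') \leq 2k-1$ guarantee.

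After $\Theta(k)$ phases, only $O(\sqrt{n})$ pre-superclusters survive, each carrying an internal tree of depth $O(2^{k})$. To produce the final $\superclustering_0$ satisfying (N0)--(N3), I would first declare every surviving pre-supercluster of size $\Omega(\sqrt{n})$ to be its own singleton supercluster (giving property (N0)) and then apply the Balanced Partitioning Lemma \ref{lem:balancedpartitioning} twice on each remaining tree: once with unit weights and bound $B = \Theta(\sqrt{n})$ on the number of cluster-centers to enforce (N1), and once with vertex-count weights and bound $B = \Theta(\sqrt{n})$ to enforce (N2). The lemma's conclusion (D4) delivers the edge-disjointness required by (N3). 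Summing, $H'$ has $O(\sqrt{n}) \cdot O(k \cdot n^{1/2+1/k}) = O(k \cdot n^{1+1/k})$ edges.

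The main obstacle I anticipate is ensuring that the doubling procedure preserves the \emph{edge-disjointness} of internal trees across phases, so that the aggregation used to compute unmarked-degrees along trees of depth $O(2^{k})$ does not create congestion between distinct pre-superclusters. I would resolve this by having each absorbed pre-supercluster commit, at the moment of absorption, to a single ``portal'' edge connecting it to its new center's tree; since the unmarked neighborhoods of successful pre-superclusters are pairwise disjoint at each iteration of the local-maxima subroutine, these portal edges can be chosen pairwise disjoint as well. For the round complexity, each of the $\Theta(k)$ phases runs $O(n^{1/2-1/k})$ iterations of the local-maxima subroutine, and each iteration requires $O(2^{k})$ rounds to aggregate along the (edge-disjoint) internal trees; the parallel invocations of Lemma \ref{bipartite} on frozen pre-superclusters fit inside the same budget because every edge lies in at most two bipartite instances. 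Totalling, the runtime is $O(k \cdot 2^{k} \cdot n^{1/2-1/k}) = O(2^{k} \cdot n^{1/2-1/k})$, as claimed.
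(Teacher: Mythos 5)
There is a genuine gap, and it comes in two parts. First, your edge accounting fails in the early phases. You use the flat expansion threshold $n^{1/2+1/k}$ from the very first phase, where the pre-superclusters are the $n$ singletons. In a graph whose vertices mostly have degree below $n^{1/2+1/k}$, essentially \emph{all} $n$ singletons are frozen in phase $1$, and each frozen singleton $\{v\}$ must receive a bipartite spanner toward its neighbors; for $|A|=1$ this spanner is forced to contain all $|B|$ edges, so the total can reach $\Theta(n\cdot n^{1/2+1/k})$ edges, far exceeding the claimed $O(k n^{1+1/k})$. Your summation ``$O(\sqrt{n})$ frozen instances times $O(k n^{1/2+1/k})$ edges'' is unjustified: only the \emph{successful} pre-superclusters number $O(n^{1/2-1/k})$; the frozen ones can number $\Theta(n)$. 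The paper's Algorithm $\ConstructZeroSuper$ avoids exactly this by using a level-dependent threshold $\deg(C)\geq n^{i/k}$, so that at phase $i$ there are only $O(n^{1-(i-1)/k})$ low-expansion clusters, each with at most $n^{i/k}$ outside neighbors, giving $O(n^{1+1/k})$ edges per phase. Relatedly, the paper does not run the iterative local-maxima/marking subroutine here at all: \emph{every} high-expansion cluster becomes a candidate, and the candidates are thinned by computing a $(3,4)$-ruling set in the power graph $G^{2^{4i}}$ (Lemma \ref{rulingset}, $2^{4i}\cdot 2^{O(\sqrt{\log n})}$ rounds per phase); the exponential cluster diameter $2^{4i}$ comes from the ruling-set radius when re-forming clusters, not from an ``absorption with portal edges'' step.

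Second, your stretch guarantee has a hole. You cover a frozen pre-supercluster only against its \emph{unmarked} neighbors, and you argue each never-clustered vertex $u$ is covered ``at the phase in which its last neighboring cluster was frozen.'' But for the lemma you must handle every edge $(u,v)$ with $u$ outside the final $\superclustering_0$ and $v\in\Gamma(u)$ arbitrary; if $v$'s pre-supercluster was absorbed (marked) in that very phase and remains clustered through the end, the edge $(u,v)$ is never placed in any bipartite instance, and in your scheme $u$, once frozen, never re-enters the clustering to fix this later. The paper closes this by taking the bipartite graph of a low-expansion cluster $C$ to be $(C,\Gamma(C)\setminus C)$ — the \emph{full} outside neighborhood, clustered or not — and handling internal edges of $C$ by recursing with $\ImprovedSpanner$ on $G(C)$; its stretch proof then only needs the first phase at which one endpoint becomes unclustered. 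Your final step (declaring $\Omega(\sqrt{n})$-size survivors singletons and applying Lemma \ref{lem:balancedpartitioning} twice) matches the paper in spirit, but the selection rule, the expansion thresholds, and the coverage of frozen clusters need to be reworked along the paper's lines for both the size and the stretch claims to hold.
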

\bibliographystyle{plain}
\bibliography{bibfile}

\newpage
\appendix

\section{Partitioning Lemma \ref{lem:balancedpartitioning}}\label{append:partition}
\APPENDBALANCEPART



\section{Missing Details of Alg. $\ImprovedSpanner$ for General Graphs}\label{append:largediameter}
In this section, we prove Lemma \ref{lem:zerogeneral}. This is done by constructing $0^{th}$-Superclustering for graphs with arbitrary diameter.
\subsection*{Constructing the $0^{th}$-Superclustering} 
Our goal in this section is to construct a $0^{th}$-level nice superclustering. For the properties of the nice superclustering, please see Description \ref{desc:nice}. In contrast to our algorithm for $O(1)$-diameter graphs, the $0$-superclustering $\supercluster_0$ would not cover all the vertices in the graph. However, for all vertices $V'$ that are uncovered by the $0$-superclustering, we will add edges to the spanner $H$ so that $\dist(u,v,H)\leq V'$ for every $u \in V'$ and $v \in \Gamma(u)$. 

We make use of the following procedure that computes ruling set in power graphs.

\paragraph{Tool: computing ruling-set in the power-graph.}
Our algorithm is based on the algorithm \textsc{Generic\_Spanner} from \cite{derbel2007deterministic}. 
It uses the $(3, 4)$ ruling set algorithm from \cite{panconesi1992improved}, slightly adapted to our case to find a $(3, 4)$ ruling set for the graph $G^t$, while operating in the \congest\ model on the graph $G$ (i.e., we find a $(3t, 4t)$-ruling set on $G$):
\begin{lemma}\label{rulingset}
Given a graph $G = (V, E)$, a subset of vertices $V' \subseteq V$ (each vertex knows whether it is in $V'$ or not), and a positive integer parameter $t$, one can construct a subset $X \subseteq V'$ that is a 
$(3, 4)$-ruling set with respect to $G^{t}$ and $V'$. Furthermore, the algorithm runs in $t2^{O(\sqrt{\log n})}$ rounds in the \congest\ model.
\end{lemma}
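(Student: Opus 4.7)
The plan is to instantiate the deterministic $(3,4)$-ruling set algorithm of Panconesi and Srinivasan on the power graph $G^t$ restricted to $V'$, and then to show how one round of that algorithm on $G^t$ can be simulated by $O(t)$ CONGEST rounds on $G$.

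First I invoke the Panconesi--Srinivasan guarantee: on any $n$-vertex graph $H$ together with a distinguished subset $V' \subseteq V(H)$, a $(3,4)$-ruling set of $V'$ with respect to $H$ can be computed in $2^{O(\sqrt{\log n})}$ rounds of the LOCAL model. Their algorithm proceeds in $O(\sqrt{\log n})$ phases of bit-by-bit label refinement derived from IDs; in each round each vertex needs only $O(\log n)$ bits of information from its $H$-neighborhood, and in fact only an aggregate over the neighborhood (e.g., the extremal active label, or whether any active neighbor exists). Applying this algorithm with $H = G^t$ produces a set $X \subseteq V'$ that is by definition a $(3,4)$-ruling set of $V'$ with respect to $G^t$, so correctness is immediate.

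Next I simulate each round of the algorithm on $G^t$ by $O(t)$ rounds of CONGEST on $G$. For each vertex $v$, its simulated receive operation is implemented by a depth-$t$ convergecast on $v$'s $G$-BFS tree: each intermediate node waits to hear from its children, folds their messages together using the associative aggregation operation of the Panconesi--Srinivasan round, and forwards a single $O(\log n)$-bit summary one hop toward $v$. The potential difficulty is congestion, since a vertex may have up to $n-1$ neighbors in $G^t$ and many floods wish to cross the same $G$-edge. The key is precisely that the required communication is aggregable: each $G$-edge carries only one $O(\log n)$-bit message per simulated round, regardless of how many $G^t$-neighbors it helps connect, so all vertices' depth-$t$ convergecasts run in parallel in $O(t)$ CONGEST rounds.

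Combining the two ingredients yields $2^{O(\sqrt{\log n})} \cdot O(t) = t \cdot 2^{O(\sqrt{\log n})}$ CONGEST rounds on $G$, matching the claim. The main obstacle I expect is justifying that the Panconesi--Srinivasan procedure really can be put in a purely aggregate form; if some phase inspects non-aggregatable per-neighbor data, one must either re-derive a CONGEST-friendly variant by explicitly scheduling the required comparisons along the BFS trees (pipelining within the $O(t)$-round budget), or substitute a closely related $(3,4)$-ruling set construction in which aggregation is explicit. Once that structural point is pinned down, the rest of the proof is a direct simulation argument.
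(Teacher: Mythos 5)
Your overall strategy---run the Panconesi--Srinivasan $(3,4)$-ruling set algorithm on $G^t$ and simulate each of its rounds on $G$---is exactly the paper's, but the step you yourself flag as ``the main obstacle'' is a genuine gap, and the paper closes it in a different way than you propose. Your central claim, that every round of the Panconesi--Srinivasan procedure needs only an aggregatable $O(\log n)$-bit summary of the $G^t$-neighborhood (an extremal label or an existence bit), is not justified and is false as stated for the parts of the algorithm that actually have to be simulated: the algorithm splits vertices by a degree threshold $g(n)=2^{O(\sqrt{\log n})}$ in $G^t$ and then colors the low-degree part with $g(n)$ colors, and in the coloring and merging steps a vertex must learn the \emph{set} of colors (or the list of IDs) appearing among its $G^t$-neighbors within distance $t$ in $G$; this is not an OR or a maximum and cannot be folded into a single $O(\log n)$-bit message per $G$-edge per simulated round. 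Even the degree test $d_{G^t}(v)\ge g(n)$ is not a plain convergecast sum, since distinct $G$-paths of length at most $t$ can reach the same $G^t$-neighbor, so naive aggregation double counts.

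The paper's proof does not attempt per-round aggregation at all. It exploits the threshold structure: a vertex either has fewer than $g(n)$ neighbors in $G^t$, in which case it can afford to forward its entire neighbor list (respectively, color list), at most $g(n)$ items, hop by hop for $t$ hops, or it has at least $g(n)$ of them and then only announces that single fact. Each simulated step therefore costs on the order of $t\cdot g(n)$ \congest\ rounds rather than your claimed $O(t)$, and the total is still $t\cdot 2^{O(\sqrt{\log n})}$ because the extra $2^{O(\sqrt{\log n})}$ factor is absorbed by the (generous) target bound. So your fallback instinct---re-deriving a bandwidth-friendly variant with explicitly scheduled communication---is the right one, but as written your proof leaves precisely that re-derivation open, and the $O(t)$-rounds-per-simulated-round budget you impose on yourself is both stronger than what the known algorithm supports and stronger than what the lemma requires.
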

\begin{proof}
The algorithm is based on the $(3, 4)$-ruling set algorithm of \cite{panconesi1992improved}. We simply adapt their algorithm to work on the power-graph $G^t$ instead of $G$.
We will go through all of the steps of the algorithm, and confirm the algorithm for $G^t$ can be simulated in $G$ using $O(\log n)$-size messages.
First, every vertex $v$ in parallel must be able to test whether $d(v) \ge g(n)$, where $d$ refers to the degree in $G^t$.
This can be done as follows. Each vertex $v$ sends to each of its neighbors $u$, the list of its neighbors $\Gamma(v)$, if $|\Gamma(v)| < g(n)$. Otherwise, $v$ sends a message specifying that $|\Gamma(v)| \ge g(n)$.

Thus, each vertex can tell whether it has high degree in $G^2$. If it does not have high degree, it has a list of all its neighbors in $G^2$. It is simple to extend this algorithm for larger $t$ by having $t$ steps where on the $i^{th}$ step, each vertex sends the list $\Gamma_i(v)$ of all vertices at distance at most $i$ (or a message specifying that $|\Gamma_i(v)| \ge g(n)$). Hence, the size of the messages sent is bounded by $g(n)$. Overall, the algorithm has $t$ steps, each with messages of size $g(n)$, which can be simulated using short messages in a total of $t\cdot g(n)$ rounds.

Next, we must check whether we can compute a $g(n)$-coloring on a graph where all vertices have degree less than $g(n)$. This can be done using the standard coloring algorithm: partition the vertices into two parts based on the last bit of their ID, then color each part, and merge. To merge, we simply have each vertex figure out the set of colors at distance $t$ away from it, which is possible with $t$ rounds with messages of size $g(n)$.

Next, we must verify that we can find a $(3, 3\log n)$-ruling set. For this we use the standard $(3t, 3t \log n)$ ruling set algorithm from \cite{panconesi1992improved}.
Since all of the steps of the algorithm for $(3, 4)$-ruling set on $G^t$ can be simulated in $G$ in $tg(n)$ rounds, we can construct a $(3t, 4t)$-ruling set in $t2^{O(\sqrt{\log n})}$ rounds.
\end{proof}
We are now ready to describe Alg. $\ConstructZeroSuper$, the output of the algorithm is a nice $\superclustering_0$ along with a partial spanner $H'$. 

\subsection*{Description of Algorithm $\ConstructZeroSuper$}
To construct the $0^{th}$-superclustering, the algorithm computes $k/2$ levels of pre-clustering $\Zclustering_{0},\ldots, \Zclustering_{k/2}$. 
Each clustering $\Zclustering_{i}$ contains $O(n^{1-i/k})$ clusters, but 
unlike the clustering of Baswana-Sen, the diameter of each cluster $C \in  \Zclustering_{i}$ is exponential in $i$ (i.e., in Baswan-Sen, the diameter is $i$). Again, we denote the cluster centers of level $i$, by $Z_i$, and by $V_i$ the set of $i$-clustered vertices.

After $k/2$ phases, the $(k/2)^{th}$ clustering $\Zclustering_{k/2}$ contains $O(\sqrt{n})$ vertex-disjoint clusters, for each cluster $C$, we have a tree $T(C)$ of diameter $O(2^{k})$ that spans all the vertices in $C$. We then turn these $O(2^{k})$-diameter clusters into superclusters in $\superclustering_0$. 


We begin by describing phase $i$ for $i \in \{1, \ldots, k/2\}$. The following definitions are useful.
For every supercluster $C \in \Zclustering_{i-1}$, define:
\begin{equation}
\label{eq:zeroexpans}
\Gamma^+(C)=C \cup \bigcup_{u \in C}\Gamma(u) \mbox{~~and~~} \deg(C)=|\Gamma^+(C)|
\end{equation}
We say that  $C \in \Zclustering_{i-1}$ has \emph{high-expansion} if $\deg(C)\geq n^{i/k}$. Otherwise, it has \emph{low-expansion}. 

The initial pre-clustering $\Zclustering_{0}=\{\{v\}\}$ consists of $n$ singleton clusters, each spanning a subtree of diameter $0$. For $\Zclustering_{i}$, the set $Z_i$ is the set of cluster centers of the clusters of $\Zclustering_{i}$. As we will see, for every $i$,  $\Zclustering_{i}$ consists of $O(n^{1-i/k})$ clusters. Each cluster $C_j \in \Zclustering_{i}$ has a center $z_i \in Z$ and a tree of diameter $2^{4i}$ rooted at the cluster center spanning all the cluster vertices. Hence, our clustering staifies all the properties (P1-P3) of Baswana-Sen, with the only distinction of having diameter $2^{4i}$ rather than $i$. As will be clear later, these clusters are not important not for the purpose of having small stretch, but rather to create the initial supercluering (hence, the factor $2^{4k}$ appears in our round complexity -- the cost of communicating inside superclusters, but not in the stretch bound). 
Each phase $i\in \{1, \ldots, k/2\}$, consists of 3 steps as defined next. 

\paragraph{Step (S1): Selecting the cluster centers $Z_i \subseteq Z_{i-1}$:}
Every cluster $C \in  \Zclustering_{i-1}$ computes its $\deg(C)$ in $O(2^{4i})$ rounds. 
A cluster $C$ with high-expansion, adds its center to $Z'_{i-1}$. Hence, $Z'_{i-1}$ is the cluster centers of all clusters with high-expansion (i.e., $\deg(C)\geq n^{i/k})$). 

Next, we use Lemma \ref{rulingset} to compute  $(2^{4i}, 2^{4(i-1)})$ ruling set $Z_{i} \subseteq Z'_{i-1}$ in $G$. This can be done in $O(2^{\sqrt{\log n}+4i})$ rounds using Lemma \ref{rulingset} with $V'=Z_{i-1}$. Since all the cluster centers that belong to $Z_i$ have a cluster with high expansion, the number of clusters $|Z_{i}|$ is bounded by $n/n^{i/k}=n^{1-i/k}$ (this is shown later explicitly in the analysis). 

\paragraph{Step (S2): Taking Care of unclustered vertices $V_{i-1}\setminus V_i$:}
All the newly unclustered vertices belong to low-expanded clusters $C \in \Zclustering_{i}$, i.e., with $\deg(C)\leq n^{i/k}$. We first consider all edges connecting $C$ to vertices in $V \setminus C$. For every such cluster $C$, we define the bipartite graph $B(C)=(C, \Gamma(C)\setminus C)$ and apply Alg. $\BipartiteSpanner$ to construct a $(2k-1)$-spanner $H(C) \subseteq B(C)$. In the analysis part, we show that this step adds $O(n^{1+1/k})$ edges to the spanner, and can be implemented in $O(n^{1/2-1/k})$ rounds.

Finally, it remains take care of the edges internal to the low-expanded clusters $C$. By definition, the low-exapnded $C$ has less than $n^{i/k}$ vertices, and since $i\leq k/2$, it has $O(\sqrt{n})$ vertices. Since the clusters are vertex disjoint, we simply apply the algorithm recursively on the induced graph $G(C)$ of each such cluster $C$. 

\paragraph{Step (S3): Defining the clustering $\mathcal{C}_i$:}
Finally, we use the cluster centers $Z_i$ to define the new clusters of $\Zclustering_{i}$. This step is done exactly as in Baswana-Sen, only that in our case, the diameter of each cluster is  $2^{4(i-1)}$. A summary description of phase $i$ is given below.
\newpage
\begin{mdframed}[hidealllines=false,backgroundcolor=gray!30]
\centering  \textbf{\large  Phase $i$ of Algorithm $\ConstructZeroSuper$} \\
\begin{flushleft}
\vspace{-5pt}
\textbf{(SI): Defining the centers $Z_i$.}\\
\begin{itemize}
\item
Let $Z'_{i-1}$ be the cluster centers of all clusters $C \in \Zclustering_{i-1}$ with $\deg(C)\geq n^{i/k}$.
\item 
For the $Z'_{i-1}$ vertices, compute $(2^{4i}, 2^{4(i-1)})$ ruling set $Z_{i} \subseteq Z'_{i-1}$
\end{itemize}
\textbf{(SII): Taking care of unclustered vertices.}\\
For every $C_j \in \Zclustering_{i-1}$ with $\deg(C_j)\leq n^{i/k}$, do the following:
\begin{itemize}
\item
\textbf{Taking Care of Edges in $C_j \times (\Gamma(C_j)\setminus C_j)$}:\\
\begin{itemize}
\item
Apply Alg. $\BipartiteSpanner$ on the bipartite graph $B_{i-1,j}=(C_j,(\Gamma(C_j)\setminus C_j))$.
\item
Add the $(2k-1)$-spanner $H_{i-1,j} \subseteq B_{i-1,j}$ to $H$. 
\end{itemize}
\item 
\textbf{Taking Care of Edges in $C_j \times C_j$:}\\
Apply Alg. $\ImprovedSpanner$ recursively on the graph $G(C)$.
\end{itemize}
\textbf{(SIII): Forming the $\Zclustering_{i}$ clusters centered at $Z_i$.} As in Baswana-Sen only that this time, the diameter of each cluster is $2^{4(i-1)}$. 
\end{flushleft}
\end{mdframed}
\paragraph{Termination Phase:}
After $k/2$ phases, the $(k/2)^{th}$ clustering $\Zclustering_{k/2}$ contains $O(\sqrt{n})$ vertex-disjoint clusters, for each cluster $C$, we have a tree $T(C)$ of diameter $O(2^{4k})$ that spans all the vertices in $C$. We then turn these $O(2^{4k})$-diameter clusters into superclusters in $\superclustering_0$. To do that, we need the $O(\sqrt{n})$ clusters to be balanced with respect to the number of vertices they contain. Hence, we apply the Partitioning Lemma \ref{lem:balancedpartitioning} to turn to $O(\sqrt{n})$ clusters of $\Zclustering_{k/2}$ into $O(\sqrt{n})$ clusters each with $O(\sqrt{n})$ vertices. Specifically, on each $C \in \Zclustering_{k/2}$ with at least $\sqrt{n}$ vertices, apply the Partitioning Lemma \ref{lem:balancedpartitioning} on $T(C)$, $w(v)=1$ for every $v \in C$ and bound $B=\sqrt{n}$. By that we get $O(\sqrt{n})$ clusters, each with $\ell=O(\sqrt{n})$ vertices. At that point, each such cluster is now defined as a supercluster in $\superclustering_0$ with $O(\sqrt{n})$ singelton clusters.
\begin{lemma}
\label{lemma:numclusters}
For every $i \in \{0,\ldots, k/2\}$, $\Zclustering_{i}$ contains $O(n^{1-i/k})$ clusters. 
\end{lemma}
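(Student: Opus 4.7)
My plan is to proceed by induction on $i$. The base case $i=0$ is immediate, because $\Zclustering_0 = \{\{v\} : v \in V\}$ consists of exactly $n$ singleton clusters, matching $O(n^{1-0/k}) = O(n)$.

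For the inductive step, recall from Step (SI) of phase $i$ of Algorithm $\ConstructZeroSuper$ that $Z_i$ is obtained as a $(2^{4i}, 2^{4(i-1)})$-ruling set in $G$ of the subset $Z'_{i-1} \subseteq Z_{i-1}$ consisting of cluster centers whose clusters $C \in \Zclustering_{i-1}$ satisfy the high-expansion condition $\deg(C) = |\Gamma^+(C)| \geq n^{i/k}$ (see Eq. (\ref{eq:zeroexpans})). Since each cluster of $\Zclustering_i$ is indexed by a distinct center in $Z_i$, it suffices to bound $|Z_i|$.

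The key claim is that for any two distinct $z_1, z_2 \in Z_i$, with corresponding clusters $C_1, C_2 \in \Zclustering_{i-1}$, the closed neighborhoods $\Gamma^+(C_1)$ and $\Gamma^+(C_2)$ are vertex-disjoint in $G$. On one hand, the ruling-set property guarantees $\dist(z_1, z_2) \geq 2^{4i}$. On the other hand, each $C_s$ is spanned in $G$ by a tree of diameter at most $2^{4(i-1)}$ rooted at $z_s$, so every vertex in $\Gamma^+(C_s)$ lies within distance $2^{4(i-1)} + 1$ of $z_s$. A common vertex $v \in \Gamma^+(C_1) \cap \Gamma^+(C_2)$ would then force $\dist(z_1, z_2) \leq 2 \cdot (2^{4(i-1)} + 1) < 2^{4i}$ by the triangle inequality, a contradiction.

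Given this disjointness, the family $\{\Gamma^+(C) : z \in Z_i\}$ consists of pairwise disjoint subsets of $V$, each of size at least $n^{i/k}$ by the high-expansion condition. A simple volume argument then yields $|Z_i| \leq n / n^{i/k} = n^{1 - i/k}$, closing the induction. I do not foresee any substantial obstacle: the only point requiring care is checking that the ruling-set separation $2^{4i}$ strictly exceeds twice the cluster radius $2^{4(i-1)} + 1$ inherited from the induction hypothesis, which holds with ample slack for every $i \geq 1$.
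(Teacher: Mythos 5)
Your proof is correct and follows essentially the same route as the paper: the ruling-set separation $2^{4i}$ versus the level-$(i-1)$ cluster diameter $2^{4(i-1)}$ makes the relevant size-$\geq n^{i/k}$ neighborhoods of selected centers pairwise disjoint, and a volume argument gives $|Z_i| \leq n^{1-i/k}$. The paper phrases this as ``the new clusters are vertex-disjoint and each contains at least $n^{i/k}$ vertices,'' but it is the same disjointness-plus-counting argument (and indeed the paper repeats your exact formulation, disjoint neighborhoods of the centers, when it reuses this bound in the proof of Lemma~\ref{lem:zerogeneral}).
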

\begin{proof}
The lemma follows from the following two facts:
\begin{enumerate}
\item
The clusters are disjoint.
\item
Each cluster contains at least $n^{i/k}$ vertices.
\end{enumerate}
The first item follows by construction (in step (SIII), when constructing the clusters, we break ties based on IDs to ensure that every vertex joins at most one cluster).

The second item follows by noting that if a center center $z$ joins $Z_i$, its cluster in $\Zclustering_{i-1}$ had degree at least $n^{i/k}$. Also, no other cluster which joined $Z_i$ neighbored any of those $n^{i/k}$, since the centers joining $Z_i$ form a ruling set of $Z_{i-1}$, Hence, these $n^{i/k}$ neighbors will join the cluster centered at $z$ in the clustering $\Zclustering_{i-1}$.
\end{proof}

\paragraph{Size Analysis of Step (SII):}
By Lemma \ref{lemma:numclusters}, at step $i$, we have $O(n^{1-(i-1)/k})$ clusters in $\Zclustering_{i-1}$. 
By definition of the cluster of low-expansion, $\deg(C_{i-1,j}) \leq n^{i/k}$. We then construct a $(2k-1)$
spanner $H_{i-1,j}$ for the bipartite graph $B_{i-1,j}=(C_{i-1,j}, \Gamma(C_{i-1,j})\setminus C_{i-1,j})$. 
By Lemma \ref{bipartite}, $E(H_{i-1,j})$ can be bounded by: 
\begin{eqnarray*}
E(H_{i-1,j})&=&O(|C_{i-1,j}|^{1+2/k}+|\Gamma(C_{i-1,j})\setminus C_{i-1,j}|)
\\&=&
O(k\cdot |C_{i-1,j}|^{1+2/k}+n^{i/k})\leq O(k \cdot|C_{i-1,j}|\cdot n^{1/k}+n^{i/k}).
\end{eqnarray*}
Since all clusters are vertex disjoint, $|C_{i-1,j}| \leq n^{i/k}\leq \sqrt{n}$, and there $O(n^{1-(i-1)/k})$ clusters, we have that over all 
we added $O(k \cdot n^{1+1/k})$ edges to the spanner. In fact, a tighter analysis can give only $O(n^{1+1/k})$ edges.

Finally, we claim that the edges added by applying the algorithm recursively on a cluster $C$ can be bounded by $O(k \cdot |C|^{1+1/k})$ as well. This follows since clusters are vertex disjoint, and in addition, each cluster is of size at most $n^{i/k}$, thus $O(n^{1 + i/k^2})$ edges added to the spanner in the $i^{th}$ phase. In total, we have $O(n^{1 + 1/k})$ edges added to the spanner.

\paragraph{Round Analysis of Step (SII):} It is easy to see that each edge in the bipartite graph $B_{i_1,j}$ appears in at most $2$ bipartite graphs. Hence, when applying Lemma \ref{bipartite} on all of the bipartite graphs in parallel, the algorithm will experience congestion of at most $2$. Then, by Lemma \ref{bipartite}, it will take at most $O(kn^{1/2 + 1/k})$ rounds, since each of the bipartite graphs has at most $n^{1/2}$ vertices on one side (the side corresponding to the cluster $C$).

When constructing the spanner recursively, each of the clusters in which we recursively construct a spanner is of size up to $\sqrt{n}$, and can all be done in parallel (since the clusters are disjoint). Hence, the step of the recursion will take a total of $T(\sqrt{n})$ rounds (where $T(n)$ denotes the number of rounds used by the algorithm on an $n$-vertex graph), which is negligible compared to $T(n)$ (even if run $k/2$ times, one for each phase).

\begin{lemma}\label{lem:0superclusteringrunningtime}
Algorithm $\ConstructZeroSuper$ terminates within $O(2^{4k}n^{1/2 - 1/k})$ rounds.
\end{lemma}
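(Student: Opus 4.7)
The plan is to bound the round complexity of Algorithm $\ConstructZeroSuper$ phase by phase, separating the three sub-steps (SI)--(SIII) of a generic phase $i\in\{1,\ldots,k/2\}$, and then add the cost of the terminating step where the partition lemma is invoked.

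First, I would handle sub-step (SI). Since the clusters in $\Zclustering_{i-1}$ are spanned by BFS trees of diameter $2^{4(i-1)}$, each cluster can aggregate $\deg(C)$ along its tree in $O(2^{4(i-1)})$ rounds, with cluster trees being vertex-disjoint so this happens in parallel. The $(2^{4i},2^{4(i-1)})$-ruling set $Z_i\subseteq Z'_{i-1}$ is computed by invoking Lemma \ref{rulingset} with parameter $t=2^{4i}$, which costs $2^{4i}\cdot 2^{O(\sqrt{\log n})}$ rounds. Absorbing the subpolynomial factor into $n^{1/2-1/k}$, this is dominated by $O(2^{4i}\cdot n^{1/2-1/k})$ rounds.

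Next comes sub-step (SII), which splits into two parts. For the bipartite-graph part, I would apply Lemma \ref{bipartite} on the graphs $B_{i-1,j}=(C_{i-1,j},\Gamma(C_{i-1,j})\setminus C_{i-1,j})$ in parallel. Each $C_{i-1,j}$ has $|C_{i-1,j}|\leq n^{i/k}\leq\sqrt n$ (since the cluster is low-expansion and $i\leq k/2$), so one invocation runs in $O(|C_{i-1,j}|^{1-2/k})=O(n^{1/2-1/k})$ rounds; observing that any edge of $G$ lies in at most two such bipartite graphs, parallel execution only loses a constant factor in congestion. For the recursive part, $\ImprovedSpanner$ is invoked on each induced graph $G(C)$, but since clusters are vertex-disjoint and each has $O(\sqrt n)$ vertices, this costs $T_{\ImprovedSpanner}(\sqrt n)=O(2^k\cdot n^{1/4-1/(2k)})$ rounds, which is lower order. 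Sub-step (SIII) just builds BFS trees of radius $2^{4i}$, costing $O(2^{4i})$ rounds.

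Summing the cost of phase $i$ gives $O(2^{4i}\cdot n^{1/2-1/k})$, and summing over $i=1,\ldots,k/2$ the geometric series in $2^{4i}$ is dominated by its last term, yielding $O(2^{2k}\cdot n^{1/2-1/k})=O(2^{4k}\cdot n^{1/2-1/k})$. The termination phase applies Lemma \ref{lem:balancedpartitioning} to each tree $T(C)$ for $C\in\Zclustering_{k/2}$, which has diameter $O(2^{4\cdot k/2})$, so it costs $O(2^{4k})$ rounds and does not dominate.

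The only non-routine point I expect is convincing the reader that the congestion incurred by the parallel execution of Alg.~$\BipartiteSpanner$ across all $B_{i-1,j}$ (within one phase, \emph{and} across nested recursive invocations of $\ImprovedSpanner$ inside each low-expansion cluster) remains a constant factor. The key observation to record is that distinct low-expansion clusters at level $i-1$ are vertex-disjoint, so the bipartite graphs $B_{i-1,j}$ partition the crossing edges up to a factor of two, and the recursive calls operate on disjoint vertex sets and thus disjoint edge sets. Once this is stated cleanly, the bound $O(2^{4k}\cdot n^{1/2-1/k})$ follows by directly summing the per-phase contributions and observing that the $2^{O(\sqrt{\log n})}$ ruling-set overhead is $n^{o(1)}$, hence absorbed.
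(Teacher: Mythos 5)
Your proposal is correct and follows essentially the same route as the paper's proof: per-phase costs of degree aggregation plus the ruling-set computation of Lemma \ref{rulingset} for (SI), parallel invocations of Lemma \ref{bipartite} with constant congestion and a lower-order recursive call for (SII), and BFS formation for (SIII), summed over the $k/2$ phases. The only (harmless) differences are that you sum the $2^{4i}$ terms as a geometric series to get the slightly tighter $O(2^{2k}n^{1/2-1/k})$ and explicitly account for the terminating partition step, both of which sit comfortably inside the paper's $O(2^{4k}n^{1/2-1/k})$ bound.
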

\begin{proof}
For Step (SI), we first have each cluster compute its degree. This can be done in $O(\diam(C))$ rounds in a way similar to that of algorithm $\NaiveSpanner$: ever vertex sends a message to all its neighbors containing its cluster ID, and then each vertex sends one ACK to each neighboring cluster. The cluster can then upcast the count of number of ACKs it received, which is exactly its degree.
The next stage of Step (SI) takes the same number of rounds as computing the $(2^{4i}, 2^{4(i-1)})$ ruling set $Z_{i} \subseteq Z'_{i-1}$, which by Lemma \ref{rulingset} takes $O(2^{4i}2^{O(\sqrt{\log n})})$ rounds. Summing over all $k/2$ phases gives $O(2^{4k}2^{O(\sqrt{\log n})})$ rounds.
Step (SII) takes $O(kn^{1/2 + 1/k})$ rounds, as shown above.
Step (SIII) involves running BFSs of depth $O(2^{2i})$ from each of the cluster centers in $Z_i$ (as well as breaking ties based on IDs). This takes a total of $O(2^{2i})$ rounds. Summing over all $k/2$ phases gives $O(2^k) \le O(2^{4k})$ rounds.
%
\end{proof}
We are now ready to complete the proof of Lemma \ref{lem:zerogeneral}. 
\begin{proof}[Lemma \ref{lem:zerogeneral}]
By Lemma \ref{lem:0superclusteringrunningtime}, the  algorithm takes $O(2^{4k}\cdot n^{1/2 - 1/k})$ rounds.
By the size analysis above, we also have the total number of edges added to the spanner is $O(kn^{1+1/k})$.

\paragraph{Showing that the output $0^{th}$-superclustering $\superclustering_0$ is nice.}
We first show that $\Zclustering_{i}$ contains $O(n^{1-i/k})$ clusters. This follows since the set $Z_i\subseteq Z'_{i-1}$ is a $(2^{4i}, 2^4(i-1))$ ruling-set, where each $z \in Z'_i$ have $\Omega(n^{i/k})$ vertices in its $(2^{4(i-2)}+1)$-neighborhood. Since the $(2^{4(i-2)}+1)$-neighborhoods of $z,z' \in Z_i$ are disjoint, $Z_i=O(|Z_{i-1}|/n^{i/k})=O(n^{1-i/k})$. 

Therefore after $k/2$ stages,  $\Zclustering_{k/2}$ contains $O(n^{1/2})$ clusters. We now show that these clusters can define the superclusters of $\superclustering_0$ while satisfying all the nice properties of Desc. \ref{desc:nice}. Each of these clusters $C$ has a tree $T(C)$ of diameter at most $2^{4k}$ (since in each phase, the radius of the cluster increases by a constant factor less than 16). For every cluster $C_i \in \Zclustering_{k/2}$ with $\Omega(\sqrt{n})$ vertices, we can  use the Partitioning Lemma \ref{lem:balancedpartitioning} on the tree $T(C_i)$, weights $w(v)=1$ for each $v \in T(C_i)$ and bound $B=\sqrt{n}$. This will partition $C_i$ into a collection $C_{i_1}, \ldots, C_{i_\ell}$ of new clusters (along with a subtree in $T(C)$ for each $C_{i_j}$) such that $|C_{i_j}|=\Theta(n)$ for every $i\geq 2$ and $|C_{i,1}|=O(n)$, i.e., there exists at most one cluster, $C_{i_1}$, with less than $\sqrt{n}$ vertices. Since the clusters of $\Zclustering_{k/2}$ are vertex disjoint, applying that to all clusters $\Zclustering_{k/2}$ with at least $n^{1/2}$ clusters, partitions these clusters into a a total of $O(\sqrt{n})$ new clusters each with $O(\sqrt{n})$ vertices. The partitioning lemma also gives a tree $T(C_{i_j}) \subseteq T(C_i)$ that connects the vertices of $C_{i_j}$. Note that since the $C_i$ clusters in $\Zclustering_{k/2}$ are vertex disjoint, the trees $T(C_{i_j})$ of all new clusters $C_{i_j}$ are edge-disjoint. More accurately, by (D4) of Lemma \ref{lem:balancedpartitioning}, $T(C_{i_{j}}),T(C_{i_{j'}})$ (trees of new clusters that originate at the same cluster $C_i$) are edge-disjoint for $j\neq j'$; and the trees of $T(C_{{i}_{j}}),T(C_{{i'}_{j'}})$ for $i\neq i'$ (trees of new clusters that originate at the different clusters $C_i,C_{i'}$) are vertex disjoint, since $C_i,C_{i'}$ are vertex disjoint and $T(C_{{i}_{j}})\subseteq T(C_i)$. 

Finally, each of the $O(\sqrt{n})$ new-clusters $C_{i_j}=\{v_{i_j,1}, \ldots, v_{i_j,\ell}\}$ with $\ell=O(\sqrt{n})$ vertices becomes a supercluster $\supercluster_{0,i_j}=\{\{v_{i_j,1}\}, \ldots, \{v_{i_j,\ell}\}\}$, i.e., each vertex $v_{i_j,p}$ is singelton cluster $\{v_{i_j,p}\}$ of $\supercluster_{0,i_j}$. All properties of Desc. \ref{desc:nice} hold. 

\paragraph{Showing that all edges of unclustered vertices are handled in $H$.}
Consider an edge $(u, v)$ where either $u$ or $v$ is $k/2$-unclustered (i.e., not appearing in the clusters of $\superclustering_0$). We will show that the edge has a stretch of at most $(2k - 1)$ in the partial spanner $H$. Consider the first phase $i$ in which either $u$ or $v$ became unclustered. Then in the previous phase $(i-1)$, both $u$ and $v$ are clustered. Suppose without loss of generality that $u$ becomes unclustered in phase $i$. There are two cases:\\
\textbf{Case 1:} $u$ and $v$ belong to the same cluster $C \in \Zclustering_{i-1}$. Since we apply Alg. $\ImprovedSpanner$ recursively in the cluster $C$  (see Step (SII)) and add the output spanner for $H$, we have $\dist(u,v,H)\leq 2k-1$. \\
\textbf{Case 2:} $u$ and $v$ belong to different clusters in $\Zclustering_{i-1}$: 
In such a case, the center of $v$'s cluster did not join $Z_i$ (otherwise $u$ would be clustered). Since we employ Alg. $\BipartiteSpanner$ on a bipartite graph between $u$'s cluster and its neighbors (which include $v$), we have $\dist(u,v,H)\leq 2k-1$ (see Step (SII)).
\end{proof}
For the case of odd $k$, we use a similar algorithm and analysis. The only difference is that we terminate our for loop at $(k-1)/2$ instead of $k/2$, and use the bipartite spanner algorithm for odd $k$. This yields the following similar lemma:
%
%

\begin{lemma}[$0$-Superclustering Lemma: Odd $k$]\label{partition}
Let $G = (V, E)$ be a graph and $k \ge 3$ be a positive odd integer. One can construct the nice $0^{th}$-level superclustering in $O(n^{1/2 - 1/(2k)})$ rounds.
\end{lemma}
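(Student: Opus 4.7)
The proof mirrors that of Lemma~\ref{lem:zerogeneral} for the even case, using the same algorithm $\ConstructZeroSuper$ with two modifications: (i) the main for-loop terminates at $(k-1)/2$ phases rather than $k/2$, and (ii) every invocation of the even-$k$ bipartite spanner (Lemma~\ref{bipartite}) is replaced by the odd-$k$ version stated in the Bipartite Spanners (Odd $k$) lemma. The plan is therefore to (a) run through the adapted algorithm and verify that the output is a nice $0^{th}$-level superclustering, and (b) re-examine the round complexity with the new bipartite spanner.

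For (a), in phase $i\in\{1,\dots,(k-1)/2\}$ the algorithm proceeds exactly as in the even case: each pre-cluster $C\in\Zclustering_{i-1}$ computes $\deg(C)$ on its depth-$O(2^{4i})$ tree, the high-expansion centers are thinned via a $(2^{4i},2^{4(i-1)})$ ruling set (Lemma~\ref{rulingset}), each low-expansion cluster $C_j$ builds the bipartite graph $B_{i-1,j}=(C_j,\Gamma(C_j)\setminus C_j)$ and invokes the odd-$k$ bipartite spanner on it in parallel, and the algorithm recurses on $G(C_j)$ for the internal edges. After $(k-1)/2$ phases the resulting pre-clusters are vertex-disjoint and each carries a tree of depth $O(2^{k})$; a final application of the Balanced Partitioning Lemma~\ref{lem:balancedpartitioning} with $B=\sqrt n$ turns them into $O(\sqrt n)$ superclusters, each consisting of $O(\sqrt n)$ singleton clusters, so properties (N0)--(N3) of Description~\ref{desc:nice} hold verbatim. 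The stretch guarantee $\dist(u,v,H')\le 2k-1$ for every unclustered $u$ and neighbor $v$ is identical to the even-case Case~1/Case~2 argument, since the odd-$k$ bipartite spanner is still a $(2k-1)$-spanner and the recursion produces a $(2k-1)$-spanner by induction on $k$.

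The main obstacle is (b). In the even case the bipartite spanner costs $O(|A|^{1-2/k})$ rounds, whereas the odd-$k$ version costs $O(|A|+k^2)$; since we cannot afford $|A|=\Theta(\sqrt n)$, the threshold schedule and the number of phases must be chosen so that the maximal value of $|A|$ encountered in Step~(SII) across all $(k-1)/2$ phases is $O(n^{1/2-1/(2k)})$. The plan is to calibrate the low/high expansion thresholds so that in phase $i$ every cluster satisfies $|C|=O(n^{(i-1)/k})$; then in the last phase $i=(k-1)/2$ the bound becomes $O(n^{(k-3)/(2k)})\le O(n^{1/2-1/(2k)})$ and the per-phase cost of the odd-$k$ bipartite spanner is within budget. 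The remaining contributors are the ruling-set construction (Lemma~\ref{rulingset}), which costs $2^{O(k)}\cdot 2^{O(\sqrt{\log n})}$ per phase, and the recursive call on graphs of size $\sqrt n$, which solves a smaller instance; both are lower order than $n^{1/2-1/(2k)}$. Summing over the $(k-1)/2$ phases yields the claimed $O(n^{1/2-1/(2k)})$ bound.
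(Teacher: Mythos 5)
Your route is the same as the paper's: the paper's own proof of this lemma is exactly the remark that one stops Alg.\ $\ConstructZeroSuper$ after $(k-1)/2$ phases and replaces the even-$k$ bipartite spanner of Lemma~\ref{bipartite} by its odd-$k$ variant, whose $O(|A|+k^2)$ round bound is the source of the exponent $1/2-1/(2k)$. However, two of your concrete claims do not hold as written. The main one is the assertion that the final application of Lemma~\ref{lem:balancedpartitioning} with $B=\sqrt n$ yields $O(\sqrt n)$ superclusters satisfying (N0)--(N3) ``verbatim''. Stopping after $(k-1)/2$ phases only guarantees that each cluster of $\Zclustering_{(k-1)/2}$ contains $\Omega(n^{(k-1)/(2k)})=\Omega(n^{1/2-1/(2k)})$ vertices, so there may be up to $n^{1/2+1/(2k)}$ vertex-disjoint clusters; the Partitioning Lemma only splits the tree of a single cluster and never merges distinct clusters, so the resulting superclustering can have $\Theta(n^{1/2+1/(2k)})\gg\sqrt n$ superclusters. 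For odd $k$ the ``nice'' parameters must be rescaled (superclusters with $O(n^{1/2-1/(2k)})$ vertices, $O(n^{1/2+1/(2k)})$ of them, and expansion threshold $n^{1/2+1/(2k)}$ in the main loop of $\ImprovedSpanner$), and this rescaling is precisely why the overall bound degrades from $n^{1/2-1/k}$ to $n^{1/2-1/(2k)}$. You also cannot repair the count by running one more phase: with $(k+1)/2$ phases the low-expansion clusters handled in Step (SII) could have $|A|=\Theta(n^{1/2+1/(2k)})$, and the odd-$k$ bipartite spanner would then need $\Omega(n^{1/2+1/(2k)})$ rounds, exceeding the budget.

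The second issue is the proposed ``calibration of thresholds'' forcing $|C|=O(n^{(i-1)/k})$, which is both unsubstantiated and unnecessary. The expansion test only bounds the degree of the clusters that \emph{fail} it (clusters that pass it, and grow in Step (SIII), have no such size bound), and lowering the high-expansion threshold to enforce your bound would weaken the guarantee $|Z_i|=O(n^{1-i/k})$. No calibration is needed: every cluster fed to a bipartite-spanner call in phase $i\le (k-1)/2$ is low-expansion, hence $|A|=|C_j|\le \deg(C_j)\le n^{i/k}\le n^{(k-1)/(2k)}=n^{1/2-1/(2k)}$, which is exactly what the $O(|A|+k^2)$-round odd-$k$ bipartite spanner requires. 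With that correction, and with the niceness parameters rescaled as above, the remaining ingredients of your argument---the ruling-set cost $2^{O(k)}2^{O(\sqrt{\log n})}$ per phase via Lemma~\ref{rulingset}, the recursion on vertex-disjoint induced subgraphs of size $O(\sqrt n)$, and the stretch argument for unclustered vertices---go through as in Lemma~\ref{lem:zerogeneral}.
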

%
%
%
%
%
%
%
%
%
%
%
%

\end{document}